\newcommand{\upd}{\mathrm{d}}
\newcommand{\Lbar}{\underline{L}}
\DeclareMathOperator*{\esssup}{ess\,sup}
\newtheorem{theorem}{Theorem}[section]
\newtheorem{lemma}[theorem]{Lemma}
\newtheorem{proposition}[theorem]{Proposition}
\newtheorem{corollary}[theorem]{Corollary}
\newtheorem{claim}[theorem]{Claim}
\newtheorem{assumption}{Assumption}
\newtheorem{definition}[theorem]{definition}
\newcounter{assumpcount}
\theoremstyle{remark}
\newtheorem{remark}[theorem]{Remark}
\newcommand{\dVol}{\mathrm{d}\textit{vol}}
\title{Evanescent ergosurface instability}
\begin{document}
\author{Joe Keir \\ \\
{\small DAMTP, Centre for Mathematical Sciences, University of Cambridge}, \\
{\small \sl Wilberforce Road, Cambridge CB3 0WA, UK} \\ \\
\small{j.keir@damtp.cam.ac.uk}}
\maketitle

\begin{abstract}
	Some exotic compact objects, including supersymmetric microstate geometries and certain boson stars, possess \emph{evanescent ergosurfaces}: timelike submanifolds on which a Killing vector field, which is timelike everywhere else, becomes null. We show that any manifold possessing an evanescent ergosurface but no event horizon exhibits a linear instability of a peculiar kind: \emph{either} there are solutions to the linear wave equation which concentrate a finite amount of energy into an arbitrarily small spatial region, \emph{or} the energy of waves measured by a stationary family of observers can be amplified by an arbitrarily large amount. In certain circumstances we can rule out the first type of instability. We also provide a generalisation to asymptotically Kaluza-Klein manifolds. This instability bears some similarity with the ``ergoregion instability'' of Friedman \cite{Friedman1978}, and we use many of the results from the recent proof of this instability by Moschidis \cite{Moschidis2016}.
\end{abstract}

\tableofcontents

\section{Introduction}

With the recent experimental detection of gravitational waves \cite{GWdetection} there has been a great deal of interest in \emph{exotic compact objects} and their properties. These objects, which are often solutions to various speculative theories, are supposed to ``mimic'' certain aspects of black holes: they are extremely compact, with a strong localised gravitational field, while having a similar asymptotic structure to black holes. On the other hand, many of these objects are supposed to avoid some of the ``pathologies'' of black holes: in particular, they are often non-singuler. From an observational point of view, many of these exotic objects have a compact region in which null geodesics can be ``trapped'' (as at the ``photon sphere'' in Schwarzschild), and this can lead to similar gravitational wave signals to those emitted by black holes (\cite{Cardoso2016}, \cite{Cardoso2016A}), at least on short time scales. Hence the recent interest in this subject: it is clearly of great importance to be able to distinguish these objects from genuine black holes, both from a theoretical and an observational point of view.

Despite mimicking black holes to some extent, in other ways these geometries can differ drastically from black holes, and this might provide a way to distinguish between them. For example, the gravitational wave signal from many of the exotic objects is expected to exhibit ``echoes'', in a way which black holes do not (see \cite{Cardoso2017} for an overview). In addition (and in some ways related to the ``echoes''), many of these exotic objects are classically unstable (e.g.\ \cite{Cardoso2005, Keir2016, Eperon2016, Keir2016A}), whereas black holes are expected to be classically nonlinearly stable. The question of stability will be the focus of this work.

An unusual geometric feature that is present in some of these exotic geometries is an ``evanescent ergosurface'' - for example, this is present in \emph{supersymmetric microstate geometries} (see \cite{Maldacena2000, Balasubramanian2000, Lunin2002, Giusto2004, Giusto2004A, Bena2005, Berglund2005, Gibbons2013}), as well as in boson stars\footnote{In this context, the evanescent ergoregion has been called a ``light point''. This terminology comes about because the null geodesic has constant spatial coordinates in any coordinate system in which the spatial coordinates are Lie transported with respect to the asymptotically timelike Killing vector field $T$.} which are sufficiently compact and rotating at a particular rate \cite{Grandclement2016}. An evanescent ergosurface is a timelike submanifold where an asymptotically-timelike Killing vector field, which is timelike everywhere else, becomes null\footnote{For asymptotically Kaluza-Klein manifolds, a different (but functionally similar) definition can be given.}. Thus, these submanifolds are similar to the boundary of an ergoregion, however, unlike an ergoregion, there is no ``interior'' where the asymptotically-timelike Killing vector fields becomes spacelike. Evanescent ergosurfaces are intimately related to questions of stability: for example, non-supersymmetric microstates have an ergoregion but no horizon, and so are susceptible to the ``ergoregion instability'' of Friedman \cite{Friedman1978} (recently proved rigorously in \cite{Moschidis2016}). However, \emph{supersymmeric} microstates do not have an ergoregion\footnote{There is some subtlety here: some supersymmetric microstate geometries also have an ergoregion, but this ergoregion only allows for negative energy waves if those waves also have some nonzero momentum in the Kaluza-Klein directions. In this work, when dealing with asymptotically Kaluza-Klein spacetimes, we will restrict our attention to waves which are invariant in the Kaluza-Klein directions.} but only an evanescent ergosurface, so it might be hoped that they avoid an instability, at least on the linear level. Similar comments hold in the boson star case: compact stars which rotate more rapidly than some critical rate admit an ergoregion and so are susceptible to the ergoregion instability, but stars rotating at precisely the critical rate only admit an evanescent ergoregion.

In \cite{Eperon2016} and \cite{Keir2016A}, particular geometries with evanescent ergosurfaces were studied, and various properties of waves propagating on these geometries were discussed. In particular, it was shown that a ``stable trapping'' phenomena occurs, causing waves to decay extremely slowly, and it is conjectured that this might lead to a nonlinear instability (see also \cite{Holzegel2014, Cunha2017}). In fact, waves on these geometries decay even slower than waves on other geometries exhibiting stable trapping, a feature which is related directly to the presence of the evanescent ergosurface \cite{Keir2016A}.

Here, we will take a much more general approach. Rather than studying a particular geometry, we will study a \emph{general} manifold with an evanescent ergosurface. As far as possible, we avoid placing other restrictions on the manifold: we require a suitable asymptotic structure and smoothness properties, and we also require \emph{either} a certain kind of discrete isometry (satisfied, for example, by $(t-\phi)$-symmetric spacetimes) \emph{or} an additional Killing field with suitable properties. Note that we are able to deal with both asymptotically flat and asymptotically Kaluza-Klein manifolds. Under these very general conditions, we are able to show that a kind of instability is present, which is (in a sense) stronger than the ``slow decay'' results of \cite{Eperon2016} and \cite{Keir2016A}, but weaker than the ergoregion instability. The geometries that we study can also be expected to exhibit very slow decay of linear waves, and this alone might lead to the expectation of a nonlinear instability. However, the new instability which we find is of a different nature, and already appears at the \emph{linear} level.

The instability that we exhibit has a lot in common with the ``ergosphere instabiliity'' originally discovered by Friedman \cite{Friedman1978} and recently proved rigorously by Moschidis \cite{Moschidis2016}. This instability occurs in all asymptotically flat spacetimes with ergoregions \emph{but no event horizon}. Indeed, we can view the ``ergosurface instability'' as what is left over of the ergosphere instability, when the ergoregion degenerates into an evanescent ergosurface.

\vspace{4mm}

Let us now make some comments on the nature of the instability we show in this paper. First, we are focussing on \emph{scalar} perturbations, that is, we are examining solutions to the linear wave equation. This can either be viewed as a model for the Einstein(-matter) equations, which typically involve a set of nonlinear wave equations, or it can be viewed as a model for scalar fields or scalar modes of the geometry. 

Next, note that our instability is \emph{not} associated with an exponentially growing mode solution to the wave equation. Indeed, under the geometric conditions we assume, such a solution can in fact be ruled out. Nevertheless, we believe that the kind of behaviour we demonstrate can justifiably be called an instability, as we aim to show below.

Specifically, we are able to show that at least one of the following two cases occur:
\begin{enumerate}[label=(\Alph*)]
	\item \label{case (A)} Given a stationary family\footnote{i.e. a family of observers moving along integral curves of some vector field $N$, where the Lie derivative of $N$ along the asymptotically timelike Killing vector field vanishes. Note that this does \emph{not} mean that each member of the family moves parallel to the asymptotically timelike Killing vector field!} of observers moving along timelike curves, and given any constant $C > 0$, there exist waves, arising from smooth, compactly supported initial data (depending on the constant $C$), such that initially the \emph{total} energy measured by the entire family of observers is arbitrarily small, but, after some time has passed, the total energy measured by these observers is at least $C$. Moreover, the energy density measured by the observers in a neighbourhood of the ergosurface is $\mathcal{O}(C)$.
	\item \label{case (B)} The spacetime exhibits an \emph{Aretakis-type} instability (see \cite{Aretakis2011}, \cite{Aretakis2012a}), where there are waves arising from smooth, compactly supported initial data, whose local energy \footnote{That is, the energy measured by a subset of the family of observers mentioned above, which is such that the worldlines of these observers intersect a spacelike hypersurface in a compact set.} \emph{fails to decay} in a neighbourhood of the ergosurface, although it decays everywhere else. In fact, a non-zero amount of energy is concentrated in a smaller and smaller region, leading to pointwise blow-up.
\end{enumerate}

Note that the ``energy'' measured by the family of observers referred to above is \emph{not} the energy measured with respect to the asymptotically timelike Killing vector field, which (since it is a Killing field) is conserved. In fact, the family of observers referred to above \emph{cannot} move parallel to the asymptotically timelike Killing vector field, since this is vector field is null (rather than timelike) on the evanescent ergosurface.

Note also that, although we cannot rule it out in general, we do not know of a particular case where behaviour of type \ref{case (B)} is exhibited. This is in contrast to the behaviour of type \ref{case (A)}, which (as we will show) is exhibited by the supersymmetric microstate geometries studied in \cite{Maldacena2000, Balasubramanian2000, Lunin2002, Giusto2004, Giusto2004A, Bena2005, Berglund2005, Gibbons2013, Eperon2016, Keir2016A}.

\vspace{4mm}

Of the two possible instability scenarios outlined above, we can guarantee that we have an instability of type \ref{case (A)} if there exists \emph{another} Killing vector field (in addition to the asymptotically timelike one) such that the span of these two Killing vector fields is timelike in a neighbourhood of the evanescent ergosurface\footnote{Note that this is the case for the supersymmetric microstate geometries investigated in \cite{Eperon2016}, \cite{Keir2016A}.}. In fact, the presence of an extra Killing field of this kind allows us to show a number of other details of the instability. In particular, we can show that
\begin{itemize}
	\item Despite the behaviour outlined in point \ref{case (A)} above, the local energy \emph{is} bounded, but it is bounded in terms of a \emph{higher-order} initial energy, and not in terms of the initial energy.
	\item However, if we know the initial higher-order energy, then at later times this same higher-order energy can become arbitrarily large.
	\item If we want the energy measured by our family of observers to be amplified by a factor of $C$, then this can be achieved in some time which is bounded by $\exp(C^{1+\epsilon})$ for some $\epsilon > 0$.
	\item There exist (possibly non-smooth and non-compactly supported) initial data leading to a (weak) solution of the wave equation with \emph{unbounded} local energy.
	\item Also, in the presence of this additional symmetry, we do not require the discrete isometry.	
\end{itemize}

\subsection{Brief overview of the instability}

We will very briefly sketch the construction, made rigorous later in this paper, which underlies our instability result.

First, we need to invoke two notions of the energy of a solution to the wave equation: the \emph{non-degenerate energy}, which we call $\mathcal{E}^{(N)}$, which is the energy measured by a family of timelike-moving observers, and the \emph{conserved energy} associated with the Killing field, which we call $\mathcal{E}^{(T)}$. We also define the \emph{local, non-degenerate energy} $\mathcal{E}^{(N)}_{\mathcal{U}}$, which is the same as the non-degenerate energy except that it is only evaluated on some subset of the space at each moment of time\footnote{Specifically, the local energy $\mathcal{E}^{(N)}_{\mathcal{U}}$ is evaluated on the subset $\mathcal{U}\cap\Sigma$, where $\Sigma$ is the spacelike hypersurface defining ``space at a given time'', and $\mathcal{U}$ is some subset of the manifold which is invariant under the flow generated by the stationary Killing field. We require that $\mathcal{U}\cap\Sigma$ is (pre)compact.}. We choose this subset to be a neighbourhood of the ergosurface at each point in time.

Because of the presence of the ergosurface, we find that it is possible to construct initial data for the wave equation for which $\mathcal{E}^{(N)}_{\mathcal{U}}$ (and hence $\mathcal{E}^{(N)}$) is very large, and yet $\mathcal{E}^{(T)}$ is very small. Let us begin with such data, at some time far in the future, say at time $\tau$. We will then evolve this data \emph{backwards} in time.

There are then two options: either the wave disperses in the past, and $\mathcal{E}^{(N)}_{\mathcal{U}} \rightarrow 0$ as we go backwards in time, or else the local energy does not disperse. Let us discuss each of these cases in turn.

In the first case, the local energy $\mathcal{E}^{(N)}_{\mathcal{U}}$ decays as we evolve backwards in time. Since we began with data in the future for which the conserved energy $\mathcal{E}^{(T)}$ is very small, it will remain the case that $\mathcal{E}^{(T)}$ is small as we evolve backwards in time. At the same time, by assumption, the local energy $\mathcal{E}^{(N)}_{\mathcal{U}}$ also becomes small, say at time $t = 0$. This means that, at time $0$, both $\mathcal{E}^{(N)}_{\mathcal{U}}$ and $\mathcal{E}^{(T)}$ are very small. But we also find that the \emph{total} non-degenerate energy, $\mathcal{E}^{(N)}$, can be expressed as
\begin{equation*}
	\mathcal{E}^{(N)} \sim \mathcal{E}^{(N)}_{\mathcal{U}} + \mathcal{E}^{(T)}
\end{equation*}
Hence, at time $0$, the total, non-degenerate energy $\mathcal{E}^{(N)}$ is also very small.

Hence we have arrived at data at time $0$ with a very small total, non-degenerate energy $\mathcal{E}^{(N)}$. In fact, all three energies, $\mathcal{E}^{(N)}$, $\mathcal{E}^{(N)}_{\mathcal{U}}$ and $\mathcal{E}^{(T)}$ are very small at time $0$. And yet, if we evolve this data forward in time, we know that at time $\tau$, the local energy $\mathcal{E}^{(N)}_{\mathcal{U}}$ becomes very large. If we now interpret this solution as a solution to the forward-in-time problem, then we find that both the non-degenerate energy $\mathcal{E}^{(N)}$ and the local non-degenerate energy $\mathcal{E}^{(N)}_{\mathcal{U}}$ have been ``amplified'' by a very large factor. See figure \ref{figure instability} for a sketch explaining this case (case \ref{case (A)}).

Alternatively, it might be the case that there are some solutions to the wave equation which do not disperse as we evolve them to the past, and so $\mathcal{E}^{(N)}_{\mathcal{U}}$ does not approach zero as we evolve backwards in time. In this case, we can invoke the discrete isometry to obtain a solution to the wave equation for which the local energy does not approach zero as we evolve \emph{to the future}. However, it is possible to show that the local energy does, in fact, tend to zero in the future, at least in every compact set which is positioned away from the ergosurface. Hence, in this case, a finite amount of energy must eventually be contained within an arbitrarily small region of space near the ergosurface. We call this an Aretakis-type instability.

\begin{figure}[htbp]
	\centering
	\includegraphics[width = 0.5\linewidth, keepaspectratio]{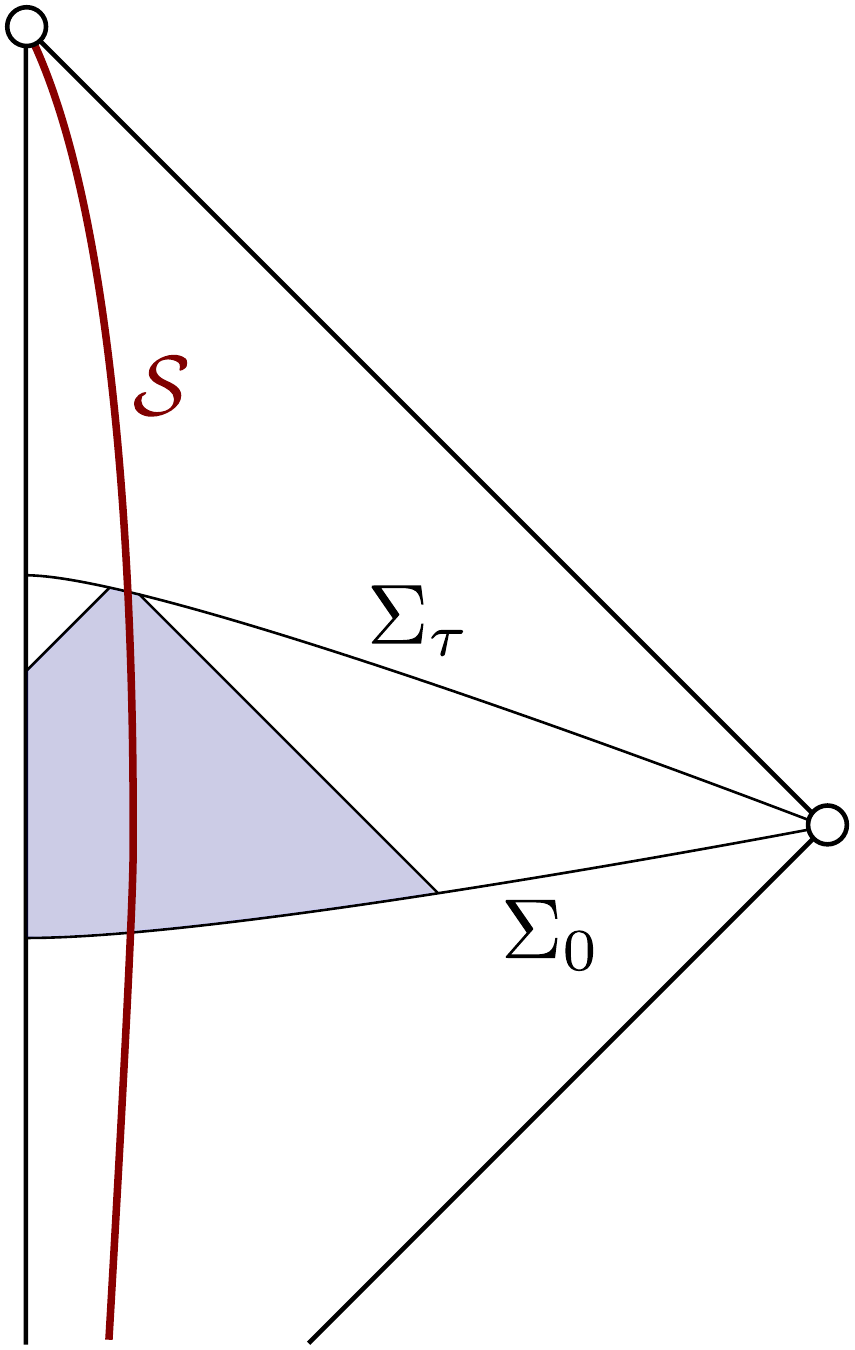}
	\caption[figure1]{
		A sketch of a Penrose diagram illustrating the construction of the instability in case \ref{case (A)}, that is, when the local energy of waves decays to the past. We begin by constructing initial data on the hypersurface $\Sigma_{\tau}$ such that the local non-degenerate energy $\mathcal{E}^{(N)}_{\mathcal{U}}$ at time $\tau$ is very large, say $\epsilon^{-1}$, for some arbitrarily small $\epsilon$. At the same time, the conserved energy $\mathcal{E}^{(T)}$ at time $\tau$ is very small, say $\epsilon$. This data can also be chosen to be supported only in a neighbourhood of the ergosurface $\mathcal{S}$.
	\\	
		We then evolve this data backwards in time to the hypersuface $\Sigma_0$. If $\Sigma_\tau$ is sufficiently far in the future, then we know that the local energy $\mathcal{E}^{(N)}_{\mathcal{U}}$ on the hypersurface $\Sigma_0$ will be very small - say $\epsilon$. The conserved energy $\mathcal{E}^{(T)}$ on the hypersurface $\Sigma_0$ is also $\epsilon$. It follows that the total, non-degenerate energy $\mathcal{E}^{(N)}$ measured on the hypersurface $\Sigma_0$ is $\mathcal{O}(\epsilon)$. Hence, between the hypersurfaces $\Sigma_0$ and $\Sigma_\tau$, the non-degenerate energy $\mathcal{E}^{(N)}$ has been ``amplified'' by a factor of $\epsilon^{-2}$.
	\\	
		 Note that a solution to the wave equation constructed in this way is supported only in the shaded (blue) region. In particular, the initial data on the hypersurface $\Sigma_0$ is compactly supported.
	}
	\label{figure instability}
\end{figure}

\subsection{Comparison with the ergosphere instability}
The ``evanescent ergosurface instability'' is somewhat weaker than the ``ergosphere instability'' of \cite{Friedman1978, Moschidis2016}. In particular, \cite{Moschidis2016} showed that, when an ergoregion is present but no event horizon exists, then there are solutions to the wave equation, arising from smooth, compactly supported initial data, whose local energy is unbounded. This is not the case for the evanescent ergosurface instability - indeed, in the case where an extra symmetry is present, we can actually rule out this kind of behaviour. On the other hand, if we allow for non-compactly supported data and we do not require that ``higher order'' energies are finite, then we can recover similar behaviour, although the rate of growth of the energy will generally be much slower in the evanescent ergosurface case.

The key idea behind the ergosphere instability of \cite{Friedman1978, Moschidis2016} is to use the ergoregion to construct initial data for the wave equation with \emph{negative conserved energy}. Then, under the assumption that the non-degenerate energy $\mathcal{E}^{(N)}$ remains uniformly bounded over time, it is possible to show (see \cite{Moschidis2016}) that the local non-degenerate energy $\mathcal{E}^{(N)}_{\mathcal{U}}$ must decay, at least away from the ergoregion. It is then possible to  derive a contradiction with the conservation of the (negative) conserved energy, which ensures that some part of the wave always remains trapped within the ergoregion.

Our approach is similar in many ways, and for this reason we shall make use of many of the results of \cite{Moschidis2016}. However, since we only have an evanescent ergosurface rather than a full ergoregion, it is not possible to produce waves with negative conserved energy. Instead, we can make use of the evanescent ergosurface to construct data for the wave equation such that its conserved energy is \emph{much smaller} than its non-degenerate energy. This is the key fact which, as we show, leads to some kind of instability.

\subsection{Comparison with extremal black holes}

One might wonder whether the ideas in this paper can be applied to extremal black holes. After all, the event horizon of an extremal black hole bears many similarities with an evanescent ergosurface. However, there are several technical reasons why our construction fails in this case: for example, the horizon is a null hypersurface rather than a timelike hypersurface.

Heuristically, we can understand the failure of our construction in the case of black holes in the following way. Our result relies crucially on being able to evolve data \emph{backwards in time}: it must be the case that the time-reversed manifold ``looks similar'' to the original manifold in a suitable sense. This is guaranteed if the manifold admits a discrete isometry of the required kind, or if it has an additional Killing field with certain properties. However, in the case of a black hole, it fails spectacularly. If we begin with a hypersurface which intersects the future horizon and then evolve a solution to the wave equation \emph{backwards} in time, then it behaves very differently from solutions which are evolved forwards in time. For example, at least in subextremal black holes, the redshift effect means that the energy of a wave near the event horizon decays when evolved to the future;  when evolved to the past, the energy will instead be blue-shifted. If, instead, we begin with a hypersurface that intersects the bifurcation sphere, then we do not expect the energy to decay, since we are not really ``evolving'' the data in this region when we flow along curves of the stationary Killing field.

Nevertheless, one kind of instability (case \ref{case (B)}) that might be exhibited by spacetimes with an evanescent ergosurface has a lot in common with the \emph{Aretakis instability} of extremal horizons \cite{Aretakis2011, Aretakis2012a}. In both cases, there is some non-decaying quantity on a specific hypersurface (either the event horizon or the evanescent ergosurface) which decays everywhere else, and this is responsible for a certain kind of blow-up. Note, however, that while we cannot rule out this kind of behaviour in general, we can rule it out on manifolds which have some extra symmetry. In many of the explicit examples of spacetimes with evanescent ergosurfaces, this extra symmetry is present, and so we can actually rule out this kind of instability. Instead, on these manifolds we have a different kind of instability, wherein the local energy of waves can be amplified by an arbitrarily large amount.

\section{Notation}

In this section, we will often refer to ``the asymptotically timelike Killing vector field $T$''. In all these cases, when considering the asymptotically Kaluza-Klein case, the vector field $T$ should be replaced by the vector field $V$.

We use the following notation for inequalities: we write $A \lesssim B$ if there is some constant $C > 0$, \emph{independent} of all of the parameters which we are varying, such that
\begin{equation*}
	A \leq CB
\end{equation*}
Similarly, we write $A \gtrsim B$ if there is a constant $C > 0$, \emph{independent} of all of the parameters which we are varying, such that
\begin{equation*}
	B \leq CA
\end{equation*}
Also, we write $A \sim B$ is there are constants $c > 0$, $C > 0$, again independent of the parameters which are varying, such that
\begin{equation*}
	cB \leq A \leq CB
\end{equation*}

\vspace{3mm}

When integrating over a spacelike hypersurface $\Sigma_t$, we will use the notation $\dVol$ for the volume form induced on $\Sigma_t$ by the spacetime metric $g$. Also, given a set $\mathcal{U}_t \subset \Sigma_t$ we define
\begin{equation*}
	\text{Volume}(\mathcal{U}_t) := \int_{\mathcal{U}_t} \dVol
\end{equation*}

\vspace{3mm}

Given some subset $U_t \subset \Sigma_t \subset \mathcal{M}$ of a hypersurface $\Sigma_t$, we define the corresponding set $U$ as the set of $T$-translations of $U_t$, where $T$ is the asymptotically timelike Killing vector field. In other words, we define
\begin{equation*}
	U := \left\{ x \in \mathcal{M} \ \big| \  \exists \text{ an integral curve of } T \text{ through } x \text{ and } U_t \right\}
\end{equation*}

Similarly, if we have a foliation of $\mathcal{M}$ by hypersurfaces $\Sigma_t$, and if $U_t \subset \Sigma_t$, then we define $U_{\tau}$ by
\begin{equation*}
	U_{\tau} := \left\{ x \in \Sigma_{\tau} \ \big| \  \exists \text{ an integral curve of } T \text{ through } x \text{ and } U_t \right\}
\end{equation*}
See figure \ref{figure spacetime regions} for a sketch of these regions.

\begin{figure}[htbp]
	\centering
	\includegraphics[width = 0.7\linewidth, keepaspectratio]{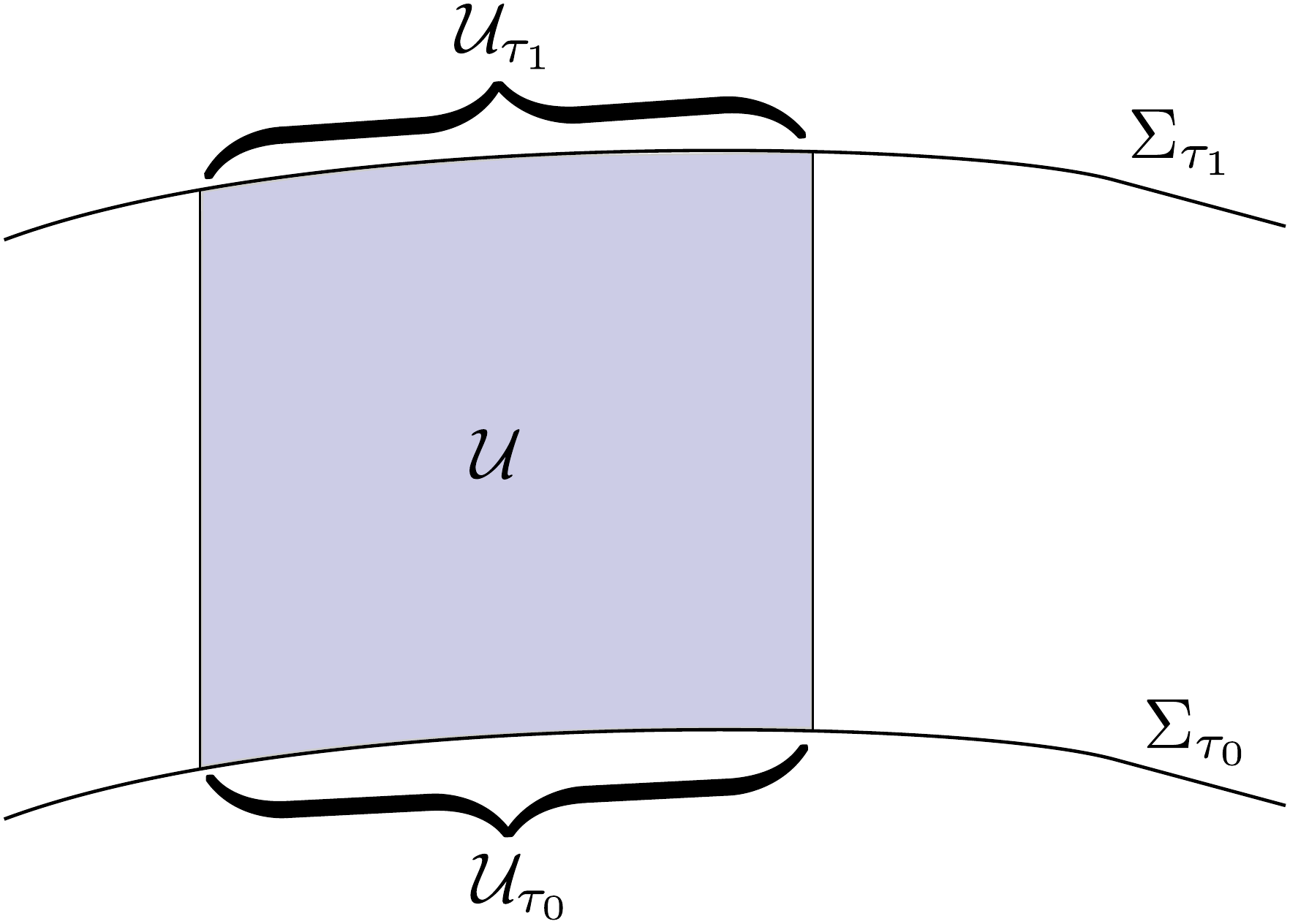}
	\caption{
		A sketch showing two spacelike hypersurfaces $\Sigma_{\tau_0}$ and $\Sigma_{\tau_1}$, the regions $\mathcal{U}_{\tau_0}$ and $\mathcal{U}_{\tau_1}$ and  the part of the region $\mathcal{U}$ (in blue) which lies between these two regions. In this sketch, the isometry generated by the vector field $T$ is represented by translation up the page.
	}
	\label{figure spacetime regions}
\end{figure}

Also, we define the ``$\delta$-thickening'' of the set $\mathcal{U}$ as follows: for $\delta > 0$, we define
\begin{equation*}
\begin{split}
	(\mathcal{U}_{(\delta)})_0 &:= \left\{ x \in \Sigma_0 \ \big| \ \text{dist}(x, \mathcal{U}_0) < \delta \right\}
\end{split}
\end{equation*}
where $\text{dist}(x, \mathcal{U})$ is the distance from $x$ to the set $\mathcal{U}$ defined using the Riemannian metric induced on $\Sigma_0$. Then, we define $\mathcal{U}_{(\delta)}$ to be the set consisting of all the $T$-translates of the set $(\mathcal{U}_{(\delta)})_0$.

\vspace{3mm}

We use $\nabla$ to denote the covariant derivative induced by the metric $g$. We also write the geometric wave operator as
\begin{equation*}
	\Box_g \phi := (g^{-1})^{\mu\nu} \nabla_\mu \nabla_\nu \phi
\end{equation*}
and we will also use the notation $\mathcal{L}_X$ to denote the Lie derivative with respect to a vector field $X$.

\vspace{3mm}

The notation $\partial \phi$, for some scalar field $\phi$, will be used to denote (schematically) the collection of first derivatives of $\phi$. To be more precise: assuming that the asymptotically timelike Killing field $T$ is transverse to the spacelike hypersurface $\Sigma$, we define
\begin{equation*}
	\partial \phi := \{ (T\phi) \ , \ (e_a \phi) \}
\end{equation*}
where the $e_a$ are an orthonormal frame for the tangent space of $\Sigma$. Similarly, we define norms:
\begin{equation*}
	|\partial \phi| := \sqrt{ |T\phi|^2 + \sum_a |e_a \phi|^2 }
\end{equation*}

\vspace{3mm}

Given a vector field $X$, we will say that $X$ is \emph{uniformly timelike} if there are positive constants $c$ and $C$ such that
\begin{equation*}
 c \leq -g(X, X) \leq C
\end{equation*}

Also, given a vector $X$, we define the covector $X^{\flat}$ by defining its action on a arbitrary vector field $Y$:
\begin{equation*}
	X^{\flat}(Y) := g(X, Y)
\end{equation*}
Similarly, given a covector $\omega$, we define the vector $\omega^{\sharp}$ by the prescription
\begin{equation*}
	g(\omega^{\sharp}, Y) := \omega(Y)
\end{equation*}
for all vectors $Y$. 

Finally, given a set $\mathcal{U}_0 \subset \Sigma_0$ and the associated set $\mathcal{U} \subset \mathcal{M}$, we define the notation
\begin{equation*}
	|| f ||_{L^p[\mathcal{U}]}(\tau) := \left( \int_{\mathcal{U} \cap \Sigma_\tau} |f|^p \dVol \right)^{\frac{1}{p}} 
\end{equation*}
and similarly
\begin{equation*}
	|| f ||_{L^\infty[\mathcal{U}]}(\tau) := \esssup_{\mathcal{U} \cap \Sigma_\tau} |f|
\end{equation*}

\section{Asymptotic structure}
\label{section asymptotic structure}
We are interested in smooth\footnote{But not necessarily analytic!}, stationary metrics that are either asymptotically flat or asymptotically Kaluza-Klein. We shall restrict attention to manifolds without an event horizon or a black hole region, so we require the manifold $\mathcal{M}$ to be identical to the causal past of future null infinity. In the Kaluza-Klein case, we also restrict to those asymptotically Kaluza-Klein metrics that are ``homogeneous'' in the compact directions, a notion that we will make precise below.

\subsection{Asymptotically flat, stationary spacetimes}
\label{subsection asymptotically flat}

Since we make use of the results of \cite{Moschidis2016}, we shall use the same definition of an ``asymptotically flat'' stationary manifold given in \cite{Moschidis2016}. That is, we shall consider a stationary $(d+1)$-dimensional manifold in to be \emph{asymptotically flat} if, there is an open region $\mathcal{U}^{\text{as}}$, with compact complement, called the ``asymptotic region'', which is diffeomorphic to $\mathbb{R} \times (\mathbb{R}^{d}\setminus B)$, for some integer $d \geq 3$, where $B$ denotes the unit open ball in $\mathbb{R}^d$. Moreover, we require that there exist coordinates for $\mathcal{U}^{\text{as}}$ such that the metric on $\mathcal{M}$ takes the form
\begin{equation}
\label{equation asymptotically flat}
 g = -\left( 1 - \frac{2M}{r} + h_1(r, \sigma) \right) \upd t^2 + \left( 1 + \frac{2M}{r} + h_2(r,\sigma) \right) \upd r^2 + r^2 \left( g_{\mathbb{S}^{d-1}} + h_3(r,\sigma) \right) + h_4(r,\sigma) \upd t
\end{equation}
In the expression above $r$ is the pull-back of the standard Euclidean radial function on $(\mathbb{R}^d\setminus B)$ by the diffeomorphism mentioned above. Moreover, there is a (related) diffeomorphism from the asymptotic region to $\mathbb{R}\times \mathbb{R}_+ \times \mathbb{S}^{d-1}$, which can be understood as a ``polar coordinate chart'' for $\mathcal{U}^{\text{as}}$. The function $\sigma$ is then the projection $\sigma : \mathcal{U}^{\text{as}} \rightarrow \mathbb{S}^{d-1}$ naturally defined by this diffeomorphism.

Note that the constant $M$ in the above formula corresponds to the mass of the spacetime \emph{if} the spacetime dimension is 4, i.e.\ if $d = 3$.

In equation (\eqref{equation asymptotically flat}) above, $h_1$ and $h_2$ are smooth functions from $\mathcal{U}^{\text{as}}$ to $\mathbb{R}$. $g_{\mathbb{S}^{d-1}}$ denotes the standard round metric on the \emph{unit} $(d-1)$-sphere. For each value of $r$, $h_3(r, \cdot)$ is a smooth, symmetric rank $(0,2)$ tensor field on the $(d-1)$-sphere. Finally, for each value of $r$, $h_4$ is a smooth one-form on the $(d-1)$ sphere.

In addition, the functions $h_1$, \ldots $h_4$ are required to decay at suitable rates as $r \rightarrow \infty$. Using the round metric $g_{\mathbb{S}^{d-1}}$ we can measure the norm of tensor fields on the sphere $\mathbb{S}^{d-1}$ in the obvious way, and the decay rates required are as follows: for some $\alpha \in (0, 1]$,
\begin{equation*}
 \begin{split}
  h_1, h_2, h_3 &= \mathcal{O}_4(r^{-1-\alpha}) \\
  h_5 &= \mathcal{O}_4(r^{-\alpha})
 \end{split}
\end{equation*}
where a function $h$ is said to be $\mathcal{O}_n(r^{k})$ if there is a constant $C$ such that
\begin{equation*}
 \sum_{j = 0}^{n} \, \sum_{j_1 + j_2 + j_3 = j} r^{j_1 + j_2} \left|\partial_r^{j_1} \partial_t^{j_2} \partial_{\sigma}^{j_3} h \right| \leq C r^{k}
\end{equation*}
where $\partial_\sigma$ is schematic notation for the action of some vector field on $\mathbb{S}^{d-1}$ with unit norm (measured, as usual, with respect to the round metric $g_{\mathbb{S}^{d-1}}$).

Finally, the diffeomorphism mapping $\mathcal{U}^{\text{as}} \rightarrow \mathbb{R} \times (\mathbb{R}^{d}\setminus B)$ will be labelled as $\varphi_{\text{as}}$.

\subsection{Asymptotically Kaluza-Klein, stationary manifolds}
\label{subsection asymptotically KK}
As mentioned above, the asymptotically Kaluza-Klein manifolds we shall study will be ``homogeneous'' in the compact directions. We now make this notion precise:

\begin{definition}[Kaluza-Klein manifolds]
Let $\mathcal{M}$ be a smooth manifold. Let $\mathcal{G}$ be a compact Lie algebra $\mathcal{G}$. Let 
\begin{equation*}
 \varphi_{\mathcal{G}}: \mathcal{G}\times \mathcal{M} \rightarrow \mathcal{M}
\end{equation*}
be a smooth action of $\mathcal{G}$ on the manifold $\mathcal{M}$ \emph{by isometries}.

We require\footnote{It would be possible to relax this requirement and instead require only that the leaves of the foliation are $\mathcal{G}$-invariant in a neighbourhood of the evanescent ergosurface.} that there exists a foliation of $\mathcal{M}$ by spacelike hypersurfaces which are invariant under the action of $\mathcal{G}$. That is, $\mathcal{M}$ is foliated by hypersurfaces $\Sigma_t$ such that $\Sigma_t$ are the level sets of a function $t$, where $t$ is invariant under the action of $\mathcal{G}$. Hence the smooth action of $\mathcal{G}$ on $\mathcal{M}$ descends to a smooth action on $\Sigma_t$.

Let $X$ be a smooth compact manifold that is a homogeneous space for $\mathcal{G}$, that is, there is a transitive action of $G$ on $X$. Then $X$ is diffeomorphic to $\mathcal{G}/\mathcal{H}$ for some subgroup $\mathcal{H}$ of $\mathcal{G}$ (specifically, $\mathcal{H}$ is the isotropy subgroup of a point in $X$). So the points in $X$ can be represented as left cosets of $\mathcal{H}$.

As before, we require the existence of an ``asymptotic region'': an open set $\mathcal{U}^{\text{as}} \subset \mathcal{M}$, with a compact complement, such that $\mathcal{U}^{\text{as}}$ is diffeomorphic to $\mathbb{R}\times(\mathbb{R}^{d}\setminus B)\times X$, for some $d \geq 3$. Note that $\mathcal{M}$ is not required to be \emph{globally} a product space. The action of $\mathcal{G}$ on the asymptotic region is given explicitly by
\begin{equation*}
 \begin{split}
  \mathcal{G} : \mathbb{R}\times(\mathbb{R}^{d}\setminus B)\times X &\rightarrow \mathbb{R}\times(\mathbb{R}^{d}\setminus B)\times X \\
  g_1: (t, x, g_2H) &\mapsto (t, x, g_1 g_2 H)
 \end{split}
\end{equation*}
i.e.\ $\mathcal{G}$ acts on the left on $X$ viewed as a coset space, and does not affect the $t$ or $x$ coordinates. Moreover, since $\mathcal{G}$ acts by isometries, we require $X$ to be equipped with a $\mathcal{G}$-invariant, Riemannian metric $g_X$. We can then use $g_X$ to define the norm of tensor fields on $X$, and for tensor fields on $(\mathbb{S}^{d-1}\times X)$ we can measure the norms using a combination of both $g_{\mathbb{S}^{d-1}}$ and $g_X$.

Now, in $\mathcal{U}^{\text{as}}$ we require there to exist coordinates such that the metric on $\mathcal{M}$ takes the form
\begin{equation}
 \label{equation asymptotically KK}
 \begin{split}
   g &= -\left( 1 - \frac{2M}{r} + h_1(r, \sigma) \right) \upd t^2 + \left( 1 + \frac{2M}{r} + h_2(r,\sigma) \right) \upd r^2 + r^2 \left( g_{\mathbb{S}^{d-1}} + h_3(r,\sigma) \right) + h_4(r,\sigma) \upd t \\
   &\phantom{=} + \left( g_X + h_5(r,\sigma) \right) + h_6(r,\sigma)\upd t + h_7(r,\sigma)\upd r + h_8(r,\sigma)
  \end{split} 
\end{equation}
where the functions $h_1$, \ldots $h_4$ are of the same type, and satisfy the same bounds, as in the asymptotically flat case, while the other functions are defined as follows:
\begin{itemize}
 \item for each $r$, $\sigma$, $h_5$ is a symmetric, $\mathcal{G}$-invariant rank $(0,2)$ tensor on $X$ satisfying the bound $h_5 = \mathcal{O}_4(r^{-1-\alpha})$
 \item for each $r$, $\sigma$, $h_6$ and $h_7$ are $\mathcal{G}$-invariant one-forms on $X$, all of which are $\mathcal{O}_4(r^{-1-\alpha})$
 \item for each $r$, $h_8(r, \cdot)$ is a smooth, symmetric rank $(0,2)$ tensor on $(\mathbb{S}^{d-1}\times X)$ that is $\mathcal{O}_4(r^{-\alpha})$ and which is invariant under the natural action of $\mathcal{G}$ on such tensor fields
\end{itemize}

The asymptotically timelike Killing vector field, $T$, is given in the coordinates above by $\partial_t$. Note that the above definition ensures that this vector field is invariant under the action of $\mathcal{G}$ in the asymptotic region; we require that it is, in fact, globally invariant under the action of the group $\mathcal{G}$. In other words, for any $g \in \mathcal{G}$, we require that
\begin{equation*}
 (\varphi_g )_* (T) = T
\end{equation*}
To put this in yet another way, we require the action of the real line $\mathbb{R}$ generated by $T$ to commute with the action of the group $\mathcal{G}$.

\end{definition}

In the definition above, we view the tensor $g_X$ as the ``limiting'' metric on the Kaluza-Klein fibres: asymptotically, the metric approaches $g_{\infty} = -\upd t^2 + \upd r^2 + r^2 g_{\mathbb{S}^{d-1}} + g_X$.

Note that we can view the continuous action of $\mathcal{G}$ on $\mathcal{M}$ as being generated by a set of vector fields $L_{\mathcal{G}}$, which, since $\mathcal{G}$ acts by isometries, are Killing fields. Note that, since $\mathcal{G}$ acts transitively on $X$, the pushforward of the vector fields $L_{\mathcal{G}}$ by the diffeomorphism $\varphi_{\text{as}}$ span the tangent space of $X$ at each point in the asymptotic region. In fact, if $\mathfrak{g}$ is the Lie algebra associated with $\mathcal{G}$, then to each element $A \in \mathfrak{g}$ we have a smooth one-parameter family of maps
\begin{equation*}
 \begin{split}
  \phi_s(A): \mathcal{M} &\rightarrow \mathcal{M} \\
  \phi_s(A): x &\mapsto \varphi\left(\exp(s A) , x \right)
 \end{split}
\end{equation*}
for $s \in \mathbb{R}$. The vector field generating the map $\phi_s(A)$ is denoted by $L_A$. Then we have
\begin{equation*}
 L_{\mathcal{G}} = \left\{ L_A \big| A \in \mathfrak{g} \right\}
\end{equation*}
Although the vector fields $L_A$ do not necessarily commute with one another, they do all commute with $T$, i.e.\ $[L_{\mathcal{G}}, T] = 0$.

Note that the condition that the tensor fields $h_1, \ldots, h_8$ are invariant under the action of $\mathcal{G}$ mean that the Lie derivatives of these tensor fields with respect to the vector fields $L_A$, $A \in \mathfrak{g}$ vanish, and so this gives the required bounds on the derivatives in the ``Kaluza-Klein directions''.

\subsection{An example: 3-charge microstate geometries}

Our definition of an ``asymptotically Kaluza-Klein'' manifold is somewhat technical: we will illustrate it with the example of $3$-charge microstate geometries of \cite{Maldacena2000, Balasubramanian2000, Lunin2002, Giusto2004, Giusto2004A, Bena2005, Berglund2005, Gibbons2013}. We will also refer back to this example when discussing evanescent ergosurfaces.

The metric is given by
\begin{equation}
\label{equation microstate geometry}
	\begin{split}
		g &=
		-\frac{1}{h} (\upd t^2 - \upd z^2)
		+ \frac{Q_p}{hf} (\upd t - \upd z)^2
		+ hf \left( \frac{ \upd r^2}{r^2 + (\tilde{\gamma}_1 + \tilde{\gamma}_2)^2 \eta} + \upd \theta^2 \right)
		\\
		&\phantom{=}
		+ h\left(
			r^2
			+ \tilde{\gamma}_1 (\tilde{\gamma}_1 + \tilde{\gamma}_2)\eta
			- \frac{(\tilde{\gamma}_1^2 - \tilde{\gamma}_2^2)\eta Q_1 Q_2 \cos^2 \theta}{h^2 f^2}
		\right) \cos^2 \theta \upd \psi^2
		\\
		&\phantom{=}
		+ h\left(
			r^2
			+ \tilde{\gamma}_2 (\tilde{\gamma}_1 + \tilde{\gamma}_2)\eta
			- \frac{(\tilde{\gamma}_1^2 - \tilde{\gamma}_2^2)\eta Q_1 Q_2 \sin^2 \theta}{h^2 f^2}
		\right) \sin^2 \theta \upd \phi^2
		\\
		&\phantom{=}
		+ \frac{Q_p (\tilde{\gamma}_1 + \tilde{\gamma}_2)^2 \eta^2}{hf} (\cos^2 \theta \upd \psi + \sin^2 \theta \upd \phi)^2
		\\
		&\phantom{=}
		-2\frac{\sqrt{Q_1 Q_2}}{hf} \left( \tilde{\gamma}_1 \cos^2 \theta \upd \psi + \tilde{\gamma}_2 \sin^2 \theta \upd \phi \right) (\upd t - \upd z)
		\\
		&\phantom{=}
		-2\frac{ \tilde{\gamma_1} + \tilde{\gamma}_2)\eta \sqrt{Q_1 Q_2}}{hf} \left( \cos^2 \theta \upd \psi + \sin^2 \theta \upd \phi \right) \upd z
		+ \sqrt{\frac{H_1}{H_2}} \sum_{i=1}^4 \upd x_i^2
	\end{split}
\end{equation}
where
\begin{equation*}
	\begin{split}
		\eta &= \frac{Q_1 Q_2}{Q_1 Q_2 + Q_1 Q_p + Q_2 Q_p}
		\\
		\tilde{\gamma}_1 &= -an
		\\
		\tilde{\gamma}_2 &= a(n+1)
		\\
		f &= r^2 + (\tilde{\gamma}_1 + \tilde{\gamma}_2) \eta (\tilde{\gamma}_1 \sin^2 \theta + \tilde{\gamma}_2 \cos^2 \theta )
		\\
		H_1 &= 1 + \frac{Q_1}{f}
		\\
		H_2 &= 1 + \frac{Q_2}{f}
		\\
		h &= \sqrt{H_1 H_2}
	\end{split}
\end{equation*}
Here, $Q_1$, $Q_2$ and $a$ are independent constants, while $Q_p = a^2 n(n+1)$.

The coordinate $z \in [0, 2\pi R_z]$ parametrizes a circle of radius $R_z$, where $R_z = \frac{\sqrt{Q_1}{Q_2}}{a}$. The coordinates $x^i$, $i = 1,\ldots 4$ are also periodic: they parametrize a torus $T^4$, and their precise ranges are unimportant. The time coordinate is given by $t \in \mathbb{R}$, the radial coordinate is $r > 0$, and the coordinates $\theta$, $\phi$, $\psi$ parametrize a $3$-sphere, with $\theta \in [0, \pi/2]$, and $\phi$, $\psi \in [0, 2\pi]$.

This metric is asymptotically Kaluza-Klein. We have
\begin{equation*}
	g =
	-\upd t^2
	+ \upd r^2
	+ r^2 g_{\mathbb{S}^3}
	+ g_X
	+ \mathcal{O}(r^{-2})
\end{equation*}
where $g_X$ is given by
\begin{equation*}
	g_X = \upd z^2 + \sum_{i = 1}^4 \upd x_i^2
\end{equation*}
Note that $M = 0$ in this case.

For this metric, the group $\mathcal{G}$ is given by $\left( U(1) \right)^5$, and it acts by rotating the $z$ and $x_i$ coordinates. The space $X$ can be taken to be simply $\left( U(1) \right)^5$, parametrized by the coordinates $(z, x_i)$: we do not need to quotient out by some isotropy subgroup, since the isotropy subgroup is trivial in this case. In other words, the only group element which fixes a point on $X$ is the identity.

The vector fields $L_A$ can be written in terms of the coordinates defined above as
\begin{equation*}
	\begin{split}
		L_z &:= \frac{\partial}{\partial_z}
		\\
		L_i &:= \frac{\partial}{\partial x_i}
	\end{split}
\end{equation*}
Note that these are Killing fields.

Note also that, in addition to the asymptotically timelike Killing vector field $T = \frac{\partial}{\partial t}$, the metric also possesses a \emph{globally null} Killing vector field:
\begin{equation*}
	V = \frac{\partial}{\partial t} + \frac{\partial}{\partial z}
\end{equation*}

\section{Evanescent ergosurfaces}
\label{section evanescent ergosurfaces}

We shall say that an asymptotically flat (or asymptotically Kaluza-Klein) manifold posesses an evanescent ergosurface\footnote{Note that the conditions we impose also exclude manifolds with black hole regions or event horizons.} if either of the following conditions holds:

\begin{enumerate}[label=(ES\arabic*)]
\item \label{condition asymp flat}$\mathcal{M}$ is asymptotically flat and stationary, and the asymptotically timelike vector field $T$ (given by $\partial_t$ in the asymptotic region) is \emph{globally causal} and nonvanishing. Moreover, there is a submanifold $\mathcal{S}$ such that
	\begin{enumerate}[label=(\arabic*)]
		\item $\mathcal{S}$ is spatially compact (i.e.\ $\mathcal{S} \cap \Sigma_0$ is compact for some spacelike Cauchy surface $\Sigma_0$) and codimension at least one
		\item $T$ is null on $\mathcal{S}$ and timelike on $\mathcal{M}\setminus \mathcal{S}$
		\item for every open set $\mathcal{U}$ such that $\mathcal{S} \subset \mathcal{U}$, there exists some constant $c_{(T,\mathcal{U})} > 0$ such that 
			\begin{equation*}
				\inf_{x \in \mathcal{M}\setminus\mathcal{U}} |g(T,T)| \geq c_{(T,\mathcal{U})}
			\end{equation*}
		\item either $\mathcal{M}\setminus\mathcal{S}$ consists of a single connected component, \emph{or} $\mathcal{M}\setminus\mathcal{S}$ consists of at least two components, one\footnote{We can also deal with the case where $\mathcal{M}\setminus\mathcal{S}$ consists of a multiple components, if each one of the components includes an ``asymptotic region'', but for simplicity we stick with a single asymptotic region.} of which (which we call $\mathcal{M}_{(\text{ext})}$) includes the asymptotic region, and where $\mathcal{M} \setminus \mathcal{M}_{(\text{ext})}$ is precompact, and either
		\begin{itemize}
			\item there is some other Killing field $\Phi$ such that $[T, \Phi] = 0$ and the span of $T$ and $\Phi$ is timelike on $\mathcal{S}$, \emph{or}
			\item the manifold is real analytic in a neighbourhood of $\mathcal{S}$, \emph{or}
			\item $\mathcal{M} \setminus \mathcal{S}$ consists of two\footnote{Again, if there are multiple asymptotic regions, then this condition can be modified to allow for more components.} connected components: one which includes the asymptotic region, which we label $\mathcal{U}_{\text{asy}}$, and an ``enclosed'' region, which we label $\mathcal{U}_{\text{enc}}$. Then, we need the following unique continuation criteria: for every solution $\phi$ to the wave equation $\Box_g \phi = 0$, if $\phi \equiv 0$ on $\mathcal{U}_{\text{asy}}$, then $\phi = 0$ on all of $\mathcal{M}$.
		\end{itemize}
	\end{enumerate} 
	\item \label{condition asymp KK} $\mathcal{M}$ is asymptotically Kaluza-Klein in the sense of section \ref{section asymptotic structure}, and there is a \emph{globally causal} Killing vector field $V$ (which is not necessarily identical to the asymptotically timelike Killing vector field $T$). In addition, there is submanifold $\mathcal{S}$ such that 
		\begin{enumerate}[label=(\arabic*)]
			\item $\mathcal{S}$ is spatially compact (i.e.\ $\mathcal{S} \cap \Sigma_0$ is conpact for some spacelike Cauchy surface $\Sigma_0$) and codimension at least one
			\item $g(V,V)$ vanishes to at least second order on $\mathcal{S}$
			\item for all $L_A \in L_{\mathcal{G}}$, we have
				\begin{equation*}
					g(V, L_A)\big|_{\mathcal{S}} = 0
				\end{equation*}
			and, for all $x \in \mathcal{M}\setminus\mathcal{S}$ there exists some $L_A \in L_{\mathcal{G}}$ such that $g(V, L_A) \neq 0$
			\item for every open set $\mathcal{U}$ such that $\mathcal{S} \subset \mathcal{U}$, there exists some constant $c_{(V,\mathcal{U})} > 0$ such that 
				\begin{equation*}
					\inf_{x \in \mathcal{M}\setminus\mathcal{U}} \ \sup_{L_A \in L_{\mathcal{G}}} \frac{ |g(V, L_A)| }{\sqrt{g(L_A, L_A)}} \geq c_{(V,\mathcal{U})}
				\end{equation*}
			\item either $\mathcal{M}\setminus\mathcal{S}$ consists of a single connected component, \emph{or} $\mathcal{M}\setminus\mathcal{S}$ consists of at least two components, one of which (which we call $\mathcal{M}_{(\text{ext})}$) includes the asymptotic region, and where $\mathcal{M} \setminus \mathcal{M}_{(\text{ext})}$ is precompact, and either
		\begin{itemize}
			\item there is some other Killing field $\Phi$ such that $[T, \Phi] = 0$ and the span of $T$ and $\Phi$ is timelike on $\mathcal{S}$, \emph{or}
			\item the manifold is real analytic in a neighbourhood of $\mathcal{S}$, \emph{or}
			\item $\mathcal{M} \setminus \mathcal{S}$ consists of two\footnote{Again, if there are multiple asymptotic regions, then this condition can be modified to allow for more components.} connected components: one which includes the asymptotic region, which we label $\mathcal{U}_{\text{asy}}$, and an ``enclosed'' region, which we label $\mathcal{U}_{\text{enc}}$. Then, we need the following unique continuation criteria: for every solution $\phi$ to the wave equation $\Box_g \phi = 0$, if $\phi \equiv 0$ on $\mathcal{U}_{\text{asy}}$, then $\phi = 0$ on all of $\mathcal{M}$.
		\end{itemize}
	\end{enumerate} 
\end{enumerate}

\begin{remark}
\label{remark unique continuation}
	In each case, the first few points give a fairly straightforward definition of an evanescent ergosurface. However, the final point in each definition (that is, point $(4)$ in the asymptotically flat case and point $(5)$ in the asymptotically Kaluza-Klein case) requires some additional discussion. We include this point in the definition in analogy to assumption $(A1)$ of \cite{Moschidis2016}: it is made so that, in the case of an evanescent ergosurface which divides the manifold into an ``inside'' and an ``outside'', \emph{if} a wave decays on the outside, then it is also guaranteed to decay on the inside. If we do not make this assumption, then there is a third type of behaviour which is possible for waves on manifolds with evanescent ergosurfaces: the energy of the wave is neither amplified by an arbitrary amount, nor do  we have an Aretakis-like instability, but instead the wave approaches a solution to the wave equation which vanishes outside the ergosurface but is nonzero inside of it. This is only a viable alternative if there is a solution of the wave equation which is compactly supported within the ergosurface for all time, and it is ruled out by any of the conditions in the fourth point of our definitions of an evanescent ergosurface.
	
	Note that in many of the explicit cases of spacetimes with evanescent ergosurfacese - such as the supersymmetric microstates - the evanescent ergosurface is at least codimension two, and so there is no issue here.
\end{remark}

\subsection{The evanescent ergosurface on the example geometry}

Recall the metric \eqref{equation microstate geometry}, which we are using as an example to illustrate our definitions. In this case, we have already seen that the metric is asymptotically Kaluza-Klein. We have also seen that the vector field $V$ (given in terms of coordinate derivatives by $\partial_t + \partial_z$) is globally null.

If we define $L_i = \frac{\partial}{\partial x_i}$, then we find that $g(V, L_i) \equiv 0$. On the other hand, with $L_z = \partial_z$, we can compute
\begin{equation*}
	g(V, L_z) = \frac{1}{h}
\end{equation*}
Thus, this spacetime has an evanescent ergosurface if $h$ diverges somewhere, i.e.\ when $f$ = 0. Such a region does in fact exist and it has (spatial) topology $\mathbb{S} \times \mathbb{S}^3 \times T^4$ \cite{Gibbons2013}. In particular, it is a smooth, compact, co-dimension two submanifold.

We can also compute
\begin{equation*}
	\frac{ |g(V, L_z)| }{ \sqrt{ g(L_z, L_z) } } = \frac{1}{\sqrt{h} \sqrt{1 + \frac{Q_p}{f}}} 
\end{equation*}
so, if $f > 0$ (and hence $h < \infty$) this quantity is positive, as required.

Finally, we note that since the ergosurface is co-dimension two, $\mathcal{M} \setminus \mathcal{S}$ consists of a single connected component, so the final required property holds. In fact, the manifold is also analytic, so this gives another reason that the required ``unique continuation'' result holds.

\subsection{Null geodesics on the ergosurface}

In both cases, we have the following (see also \cite{Eperon2016}) :
\begin{proposition}
 There exists a null geodesic $\gamma$ lying entirely withing the evanescent ergosurface $\mathcal{S}$. In the asymptotically flat case $T$ is tangent to an affinely parameterised null geodesic on $\mathcal{S}$, while in the Kaluza-Klein case, $V$ is tangent to an affinely parameterised null geodesic on $\mathcal{S}$.
\end{proposition}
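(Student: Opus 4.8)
\section*{Proof proposal}

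The plan is to exhibit the geodesic explicitly: it is the integral curve, through any point $p\in\mathcal{S}$, of the relevant causal Killing field --- write $K := T$ in the asymptotically flat case and $K := V$ in the Kaluza--Klein case. The two ingredients are the standard Killing identity $\nabla_K K = -\tfrac12\big(\upd(g(K,K))\big)^{\sharp}$, which follows from antisymmetry of $\nabla_\mu K_\nu$, together with the observation that $\mathcal{S}$ is the common zero set of scalar functions that are constant along the flow of $K$.

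First I would show that $\upd\big(g(K,K)\big)$ vanishes on $\mathcal{S}$. In the asymptotically flat case $T$ is globally causal, so $f := g(T,T)\le 0$ throughout $\mathcal{M}$, while $f = 0$ on $\mathcal{S}$; hence every point of $\mathcal{S}$ is a global maximum of the smooth function $f$ and, as $\mathcal{M}$ has no boundary, $\upd f = 0$ there. In the Kaluza--Klein case this is immediate from the hypothesis that $g(V,V)$ vanishes to at least second order on $\mathcal{S}$. Substituting into the Killing identity gives $\nabla_K K = 0$ on $\mathcal{S}$ in both cases, and trivially $g(K,K) = 0$ on $\mathcal{S}$.

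Next I would verify that the integral curve $\gamma$ of $K$ through $p\in\mathcal{S}$ remains inside $\mathcal{S}$. In the flat case $\mathcal{S} = \{g(T,T)=0\}$ by (ES1)(2), and $g(T,T)$ is $T$-invariant since $T\big(g(T,T)\big) = (\mathcal{L}_T g)(T,T) + 2g(\mathcal{L}_T T, T) = 0$; hence $\gamma\subset\{g(T,T)=0\}=\mathcal{S}$. In the Kaluza--Klein case (ES2)(3) gives $\mathcal{S} = \bigcap_A\{\psi_A = 0\}$ with $\psi_A := g(V,L_A)$, and $V(\psi_A) = (\mathcal{L}_V g)(V,L_A) + g(\mathcal{L}_V V, L_A) + g(V,[V,L_A]) = g(V,[V,L_A])$; using that the commutators $[V,L_A]$ lie in the span of $L_{\mathcal{G}}$ --- so that $V(\psi_A)$ is a linear combination of the $\psi_B$ --- the functions $\{\psi_A\}$ satisfy a closed linear system of ODEs along $\gamma$, and since they all vanish at $p$ they vanish along all of $\gamma$ by uniqueness, i.e.\ $\gamma\subset\mathcal{S}$.

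Combining the steps: along $\gamma$ one has $\dot\gamma = K$ and $\gamma\subset\mathcal{S}$, hence $\nabla_{\dot\gamma}\dot\gamma = (\nabla_K K)\big|_{\mathcal{S}} = 0$ and $g(\dot\gamma,\dot\gamma) = g(K,K)\big|_{\mathcal{S}} = 0$, so $\gamma$ is an affinely parametrised null geodesic lying in $\mathcal{S}$; and $\mathcal{S}\ne\emptyset$ since it is spatially compact. The step I expect to require the most care is the Kaluza--Klein verification that $\gamma$ stays in $\mathcal{S}$: this rests on the commutators $[V,L_A]$ closing back into $L_{\mathcal{G}}$ (equivalently, the isometric flow of $V$ mapping the family $\{L_A\}$ into its own span), a structural property of how $V$ interacts with the Kaluza--Klein symmetry group that must be read off from the setup --- in the asymptotically flat case the analogous point is trivial because $\mathcal{S}$ is directly a level set of the $T$-invariant function $g(T,T)$.
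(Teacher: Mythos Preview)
Your overall strategy matches the paper's: use the Killing identity $\nabla_K K = -\tfrac12(\upd g(K,K))^\sharp$ to get $\nabla_K K = 0$ on $\mathcal{S}$, then argue separately that the integral curve through a point of $\mathcal{S}$ stays in $\mathcal{S}$. The asymptotically flat case is handled identically.

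The one substantive difference is in the Kaluza--Klein step, and here you have a genuine gap. You correctly compute $V(\psi_A) = g(V,[V,L_A])$, but then you invoke $[V,L_A]\in\spn L_{\mathcal{G}}$ to close the ODE system. The paper's hypotheses do \emph{not} give you this: only $T$ is assumed to commute with the $\mathcal{G}$-action (i.e.\ $[T,L_A]=0$), and $V$ is in general a different Killing field. You flagged this honestly as the step needing care, but it cannot in fact be ``read off from the setup.''

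The paper sidesteps the issue by pushing your computation one line further using only that $L_A$ is Killing:
\[
g(V,[V,L_A]) \;=\; -\,g(V,\mathcal{L}_{L_A}V) \;=\; -\tfrac12\,L_A\big(g(V,V)\big),
\]
since $L_A(g(V,V)) = (\mathcal{L}_{L_A}g)(V,V) + 2g(\mathcal{L}_{L_A}V,V) = 2g(\mathcal{L}_{L_A}V,V)$. Now $g(V,V)$ vanishes to second order on $\mathcal{S}$, so $L_A(g(V,V))|_{\mathcal{S}}=0$; equivalently, $g(V,V)$ is constant (equal to zero) along $\gamma$ and attains its global maximum there, so $\upd(g(V,V))=0$ along $\gamma$. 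Either way $V(\psi_A)=0$ along $\gamma$, and the $\psi_A$ stay zero. This argument uses only that $V$ and the $L_A$ are Killing and that $V$ is globally causal---no commutation hypothesis on $[V,L_A]$ is needed.
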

\begin{proof}
 Using the fact that $T$ is a Killing vector field, we have
\begin{equation*}
 \left(\nabla_T T \right)^\mu = -T^\nu \nabla^\mu T_\nu = -\frac{1}{2}\nabla^\mu (T^\nu T_\nu)
\end{equation*}
 Now, $g(T,T) = 0$ on $\mathcal{S}$ so any derivatives of $g(T,T)$ tangent to $\mathcal{S}$ vanish. Additionaly, since $T$ is a smooth vector field and the function $g(T,T)$ attains its maximum on $\mathcal{S}$, transverse derivatives of $g(T,T)$ also vanish on $\mathcal{S}$. Hence $\nabla_T T = 0$ on $\mathcal{S}$.

In the second case, since $V$ is a Killing vector field, we also have
\begin{equation*}
 \left(\nabla_V V \right)^\mu = \partial^\mu \left( g(V,V) \right)
\end{equation*}
since $g(V,V)$ vanishes to at least second order on $\mathcal{S}$, the right hand side vanishes on $\mathcal{S}$, so $V$ is tangent to affinely parameterised null geodesic. Now, let $\gamma$ be such a geodesic, with tangent $V$, passing through $\mathcal{S}$ at some point. For any $L_A \in L_{\mathcal{G}}$ we compute
\begin{equation*}
 \begin{split}
  V(g(V, L_A))\big|_{\mathcal{S}} &= g(\mathcal{L}_V V, L_A)\big|_{\mathcal{S}} + g(V, \mathcal{L}_V L_A)\big|_{\mathcal{S}} \\
  &= -g(V, \mathcal{L}_{L_A} V)\big|_{\mathcal{S}} \\
  &= -\frac{1}{2} L_A \left(g(V,V)\right) = 0
 \end{split}
\end{equation*}
where in the first and last lines we have used the fact that $V$ and $L_A$ are Killing fields for the metric $g$, respectively. Hence, if $g(V, L_A) = 0$ at some point of $\gamma$, then $g(V, L_A) = 0$ everywhere on $\gamma$, i.e.\ $\gamma$ remains within $\mathcal{S}$.
\end{proof}

\section{The discrete isometry}
\label{section the discrete isometry}

We require our manifolds to possess a discrete isometry $\mathscr{T}$ with the following properties:

\begin{itemize}
	\item $\mathscr{T}$ is an isometry, i.e.\
	\begin{equation*}
		\begin{split}
			\mathscr{T}: \mathcal{M} &\rightarrow \mathcal{M} \\
			\mathscr{T}^*(g) &= g
		\end{split}
	\end{equation*}
	\item $\mathscr{T}$ reverses the direction of Killing vector field $T$, i.e.\
	\begin{equation*}
		\mathscr{T}_*(T) = -T
	\end{equation*}
	\item There exists some \emph{spacelike} hypersurface $\Sigma_0$, which is a Cauchy surface for $\mathcal{M}$ and which is fixed by the discrete isometry $\mathscr{T}$
\end{itemize}

Note that we do not require that the \emph{points} of the hypersurface $\Sigma_0$ are invariant under the discrete isometry, which would lead to the requirement that the spacetime is static. For example, ``$(t-\phi)$-symmetric'' spacetimes are acceptable, in which there is a discrete isometry which maps $T$ to $-T$ and also $\Phi$ to $-\Phi$, where $\Phi$ is an axisymmetric Killing vector field. In this case, in terms of standard coordinates, the hypersurface $\{ t = 0 \}$ can be chosen as $\Sigma_0$: it is invariant under the discrete isometry, although a point with coordinates $\phi = \phi_1$ is mapped to a point with coordinates $\phi = -\phi_1$.

Note also that, in the case we have an additional symmetry of the right kind (see section \ref{section additional symmetry}) we actually do not need this additional symmetry. 

\subsection{The discrete isometry in the example geometry}

The example geometry with metric \eqref{equation microstate geometry} possesses a discrete isometry of the required kind. The map $\mathscr{T}$ can be specified using the coordinates: we have
\begin{equation*}
	\mathscr{T} (t, r, \theta, \phi, \psi, z, x_i)
	=
	(-t, r, \theta, \phi, \psi, 2\pi R_z - z, x_i)
\end{equation*}
In other words, we replace $t \mapsto -t$ and $z \mapsto 2\pi R_z - z$. This metric could be said to be ``$t-z$ symmetric''. We clearly have $\mathscr{T}_* V = -V$, since $V$ is given in coordinates by $\partial_t + \partial_z$.

Examining the form of the metric \eqref{equation microstate geometry} reveals that this is indeed an isometry. Moreover, this isometry clearly fixes the hypersurface $\Sigma_0 = \{t = 0\}$. This hypersurface is spacelike: we can compute
\begin{equation*}
	g^{-1}(\upd t, \upd t) = 
	-\frac{1}{hf} \left(
		f
		+ Q_1
		+ Q_2
		+ Q_p
		+ \frac{ Q_1 Q_2 + Q_1 Q_p + Q_2 Q_p }{ r^2 + (\tilde{\gamma}_1 + \tilde{\gamma}_2)^2 \eta }
	\right)
\end{equation*}
which turns out to be uniformly bounded and negative. Hence the isometry $\mathscr{T}$ given above fixes a spacelike hypersurface as required. We also find that the hypersurfaces $\Sigma_t$ are given by the level sets of $t$.

The discrete isometry descends to a discrete isometry on the hypersurface $t = 0$, which can also be given in coordinates:
\begin{equation*}
	\overline{\mathscr{T}} (r, \theta, \phi, \psi, z, x_i)
	=
	(r, \theta, \phi, \psi, 2\pi R_z - z, x_i)
\end{equation*}
It is easy to see that this is an isometry of the submanifold $t = 0$ equipped with the metric induced by $g$. This isometry can be viewed as a reflection in the hypersurface $z = 0$.

\section{The energy momentum tensor and energy currents}
\label{section energy momentum}

For any smooth, compactly supported function $\phi$ on a manifold $\mathcal{M}$ with metric $g$, we define the associated \emph{energy momentum tensor} $Q$:
\begin{equation*}
	Q_{\mu\nu}[\phi]:= (\partial_\mu \phi)(\partial_\nu \phi) - \frac{1}{2} g_{\mu\nu} g^{\alpha \beta} (\partial_\alpha \phi)(\partial_\beta \phi)
\end{equation*}
Given a smooth vector field $X$ we define the associated deformation tensor
\begin{equation*}
	^{(X)}\pi_{\mu\nu} := (\mathcal{L}_X g)_{\mu\nu} = \nabla_\mu X_\nu + \nabla_\nu X_\mu 
\end{equation*}
Using these we define the following two energy currents:
\begin{equation*}
\begin{split}
	\left( {^{(X)}J[\phi]} \right)_\mu &:= X^\nu Q_{\mu\nu}[\phi] \\
	^{(X)}K[\phi] &:= \frac{1}{2} {^{(X)}\pi^{\mu\nu}} Q_{\mu\nu}[\phi]
\end{split}
\end{equation*}

Now, for some time function $t$ we define the hypersurface
\begin{equation*}
	\Sigma_t := \left\{ x\in \mathcal{M} \big| t(x) = t \right\}
\end{equation*}
as well as the (open) spacetime region
\begin{equation*}
	\mathcal{M}_{t_1}^{t_2} := \left\{ x \in \mathcal{M} \big| t_1 < t(x) < t_2 \right\}
\end{equation*}
We choose the time function $t$ to agree with the coordinate function $t$ in the asymptotic region.

In general, given an initial surface $\Sigma_0$, we will choose $\Sigma_t$ to be the $T$-translate of $\Sigma_0$, where $t$ is a parameter such that $T(t) = 1$. Moreover, we choose the initial hypersurface $\Sigma_0$ to be a hypersurface which is fixed by the discrete isometry, as detailed in section \ref{section the discrete isometry}. 

Then, we have the energy identity: for $t_2 \geq t_1$, using $\imath$ to denote the interior product, then
\begin{equation}
\label{equation energy estimate}
	\int_{\Sigma_{t_2}} \imath_{ ({^{(X)}J[\phi]}) } \dVol_g = \int_{\Sigma_{t_1}} \imath_{ ({^{(X)}J[\phi]}) } \dVol_g + \int_{\mathcal{M}_{t_1}^{t_2}} {^{(X)}K[\phi]} \dVol_g
\end{equation}
In particular, if $X$ is a Killing vector field, then $^{(X)}K[\phi]$ vanishes identically, and the associated energy is conserved.

Now, suppose that $X$ is \emph{uniformly} timelike and future directed. Then the energy current ${^{(X)}J[\phi]}$ is ``non-degenerate'' in a sense we shall make precise below. As above, let $t$ be a time function for $\mathcal{M}$, so the hypersurfaces $\Sigma_t$ are uniformly timelike (i.e.\ the one-form $\upd t$ is uniformly timelike). The unit, future-directed normal to $\Sigma_t$ is defined by
\begin{equation*}
 n := -\frac{(\upd t)^\sharp}{\sqrt{ -g^{-1}(\upd t , \upd t) }}
\end{equation*}

Now, for any sufficiently small open set $\mathcal{U}\subset\mathcal{M}$, we can find an orthonormal basis for the tangent space $\mathcal{T}(\mathcal{U})$, consisting of the vector fields $\{ e_0, \ldots, e_d\}$, satisfying $g(e_a, e_b) = \eta_{ab}$, where $\eta_{00} = -1$, $\eta_{aa} = 1$ for $a \in \{1, 2, \ldots, d\}$ and $\eta_{ab} = 0$ for other values of $a$ and $b$. Furthermore, we can choose $e_0 = n$, and we can choose $e_1$ so that
\begin{equation*}
 X = X^0 n + X^1 e_1
\end{equation*}
for some smooth functions $X^0$ and $X^1$. Finally, we set
\begin{equation*}
 |\upd t| := \sqrt{-g^{-1}(\upd t, \upd t)}
\end{equation*}
Then, a short computation shows that in $\mathcal{U}$
\begin{equation}
\label{equation energy current}
 \begin{split}
  &\left( \imath_{\left(^{(X)}J[\phi]\right)}\dVol_g \right)\big|_{\Sigma_t} \\
  &= \frac{1}{2}\left( \frac{|\upd t|}{X(t)} \left(X(\phi)\right)^2 + \frac{ |\upd t|^2 g(X,X)}{\left( X(t) \right)^2} \left( e_1(\phi)\right)^2 + \sum_{a=2}^{d} \frac{\left(X(t)\right)}{|\upd t|} \left( e_a(\phi)\right)^2 \right) \imath_n \dVol_g \big|_{\Sigma_t}
 \end{split}
\end{equation}
The condition that $X$ is uniformly timelike and future directed ensures that 
\begin{equation}
\label{equation uniform transversal}
 c|\upd t| \leq X(t) \leq C|\upd t|
\end{equation}
and so we find that
\begin{equation*}
 \begin{split}
  &\frac{|\upd t|}{X(t)} \left(X(\phi)\right)^2 + \frac{ |\upd t|^2 g(X,X)}{\left( X(t) \right)^2} \left( e_1(\phi)\right)^2 + \sum_{a=2}^{d} \frac{\left(X(t)\right)}{|\upd t|} \left( e_a(\phi)\right)^2 \\
  &\sim \max\left\{c, \frac{1}{C}, \frac{c}{C^2} \right\} \left( (X(\phi))^2 + \sum_{a = 1}^{d} (e_a(\phi))^2 \right)
 \end{split}
\end{equation*}
so the energy associated with the vector field $X$ is equivalent to the $L^2$ norm of the derivatives, i.e.\ 
\begin{equation*}
 \int_{\Sigma_t}\left( \imath_{\left(^{(X)}J[\phi]\right)}\dVol_g \right) \sim || \partial \phi ||^2_{L^2(\Sigma_t)}
\end{equation*}
where we recall that $\partial \phi$ mean the collection of first derivatives of $\phi$, defined using an orthonormal frame.

Now, suppose that $X$ is future directed and ``uniformly transverse'' to the leaves of the foliation $\Sigma_t$, in such a way that the bounds in equation (\eqref{equation uniform transversal}) hold, but this time we do not require $X$ to be uniformly timelike. Then we can see from equation (\eqref{equation energy current}) that the energy current $^{(X)}J$ is \emph{not} necessarily comparible to the $L^2$ norm of the derivatives -- instead, the energy current will be ``missing'' a derivative wherever $X$ is null, and will not be positive definite where $X$ is spacelike.

\section{Energy currents in a spacetime with an evanescent ergosurface}

Let $\mathcal{M}$ be a stationary spacetime with an evanescent ergosurface, as defined in section \ref{section evanescent ergosurfaces}. In case the asymptotically flat case \ref{condition asymp flat} we define the \emph{conserved energy} at time $t$:
\begin{equation}
 \mathcal{E}^{(T)}[\phi](t) := \int_{\Sigma_t} \imath_{{^{(T)}J[\phi]}} \dVol_g
\end{equation}
while in the Kaluza-Klein case \ref{condition asymp KK} we define
\begin{equation}
 \mathcal{E}^{(V)}[\phi](t) := \int_{\Sigma_t} \imath_{{^{(V)}J[\phi]}} \dVol_g
\end{equation}
Note that these ``energies'' are conserved in the sense that, for any $t \in \mathbb{R}$,
\begin{equation*}
 \mathcal{E}^{(A)}[\phi](t) = \mathcal{E}^{(A)}[\phi](0) 
\end{equation*}
where $A = T$ or $V$, depending on whether we are in the asymptotically flat case or the asymptotically Kaluza-Klein case. This follows from the fact that both $T$ and $V$ are Killing vector fields for their respective geometries. In addition, since both $T$ and $V$ are globally causal, the energy currents $\mathcal{E}^{(A)}[\phi]$ are non-negative.

We also define the ``non-degenerate energy'', which is the energy associated with some \emph{uniformly timelike} vector field $N$, which is also $T$-invariant (or $V$-invariant, in the Kaluza-Klein case), that is, $[T, N] = 0$ (or $[V, N] = 0$).
\begin{equation*}
 \mathcal{E}^{(N)}[\phi](t) := \int_{\Sigma_t} \imath_{{^{(N)}J[\phi]}} \dVol_g
\end{equation*}
Note that, since $N$ is \emph{not} required to be a Killing vector field, this energy is not in general conserved. It represents the total energy measured by a family of observers moving along the timelike curves given by integral curves of $N$. Since $N$ is $T$-invariant, the \emph{family} of observers (though not each individual observer!) is stationary, that is, it is invariant under time translation.

Both the energy current $\mathcal{E}^{(T)}[\phi]$ and the current $\mathcal{E}^{(V)}[\phi]$ are ``degenerate'': the associated energies are \emph{not} equivalent to the $L^2$ norm of the derivatives of $\phi$. In case \ref{condition asymp flat}, the vector field $T$ becomes null on $\mathcal{S}$, so the energy current $\mathcal{E}^{(T)}[\phi]$ is ``missing'' a derivative on the evanescent ergosurface. On the other hand, in case \ref{condition asymp KK} the vector field $V$ is only \emph{globally causal}, and so the energy current $\mathcal{E}^{(V)}[\phi]$ might (in the case that $V$ is null everywhere) be ``missing'' a derivative everywhere.

However, we shall see that, when the function $\phi$ is invariant under the action of $\mathcal{G}$, then the energy current $\mathcal{E}^{(V)}[\phi]$ is, in fact, non-degenerate away from the submanifold $\mathcal{S}$. Indeed, this is the motivation for labelling the surface $\mathcal{S}$ an ``evanescent ergosurface'' in case \ref{condition asymp KK}.

\begin{proposition}
\label{proposition V energy current degenerate}
 Let $\phi$ be a function on $\mathcal{M}$ invariant under the action of the group $\mathcal{G}$, i.e.\ $L_A(\phi) = 0$ for all $L_A \in L_{\mathcal{G}}$. Suppose also that $t$ is invariant under the action of $\mathcal{G}$. Then, for any open set $\mathcal{U}$ such that $\mathcal{S}\subset\mathcal{U}$, there exists some $c = c(\mathcal{U}) > 0$ such that
\begin{equation}
 \int_{\Sigma_t \setminus \mathcal{U}} \imath_{{^{(V)}J[\phi]}}\dVol \geq c ||\partial \phi||^2_{L^2(\Sigma_t \setminus \mathcal{U})}
\end{equation}
\end{proposition}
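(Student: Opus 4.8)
The plan is to reduce the stated integral inequality to a pointwise lower bound for the $V$-energy density on $\Sigma_t \setminus \mathcal{U}$, and then integrate. Fix $x \in \Sigma_t \setminus \mathcal{U}$; let $n$ be the future unit normal to $\Sigma_t$ and extend $n = e_0$ to a $g$-orthonormal frame $\{e_0, e_1, \ldots, e_d\}$ chosen, wherever the spatial projection of $V$ onto $\Sigma_t$ is non-zero, so that $V = V^0 n + V^1 e_1$ with $V^1 \neq 0$ (at points where this spatial projection vanishes, $V$ is strictly timelike and the energy current is already non-degenerate, so there is nothing to prove there). A short computation from \eqref{equation energy current} gives, at $x$,
\begin{equation*}
  \imath_{{^{(V)}J[\phi]}} \dVol_g \big|_{\Sigma_t} \ \gtrsim\ \Big( (V\phi)^2 + \sum_{a=2}^{d} (e_a \phi)^2 \Big)\, \imath_n \dVol_g \big|_{\Sigma_t},
\end{equation*}
where the implied constant is uniform because $V^0 = -g(V,n) = V(t)/|\upd t|$ is bounded above (as $V$ is bounded) and below (as $V$ is causal, future-directed and nowhere vanishing, using compactness of $\mathcal{M} \setminus \mathcal{U}^{\mathrm{as}}$ together with the asymptotic form of $g$ on $\mathcal{U}^{\mathrm{as}}$), and $|V^1| \leq V^0$ since $V$ is causal. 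Thus the energy density already controls $(V\phi)^2$ and $\sum_{a \geq 2}(e_a\phi)^2$; the only first derivative of $\phi$ it can fail to control directly is $e_1 \phi$, the derivative along the spatial projection of $V$, and only where $V$ is null.

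The core of the argument is to recover $e_1\phi$ from the $\mathcal{G}$-invariance of $\phi$ together with condition (ES2)(4). At our point $x \in \Sigma_t \setminus \mathcal{U}$ (so $x \in \mathcal{M}\setminus\mathcal{S}$), choose $L_A \in L_{\mathcal{G}}$ with $|g(V,L_A)| / \sqrt{g(L_A,L_A)} \geq c_{(V,\mathcal{U})}$ and set $\hat{L} := L_A / \sqrt{g(L_A,L_A)}$, a unit spacelike vector (each such $L_A$ is spacelike, its orbit being a closed curve on the globally hyperbolic $\mathcal{M}$). Because $t$ is $\mathcal{G}$-invariant we have $L_A(t) = 0$, so $\hat{L}$ is tangent to $\Sigma_t$; in particular $g(n,\hat{L}) = 0$, and also $\hat{L}(\phi) = 0$. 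From $V = V^0 n + V^1 e_1$ we therefore get $g(V, \hat{L}) = V^1 g(e_1, \hat{L})$, hence
\begin{equation*}
  |g(e_1, \hat{L})| = \frac{|g(V, \hat{L})|}{|V^1|} \geq \frac{c_{(V,\mathcal{U})}}{V^0} \gtrsim c_{(V,\mathcal{U})}.
\end{equation*}
Writing $\hat{L} = \sum_{a=1}^{d} \ell^a e_a$ with $\sum_a (\ell^a)^2 = 1$ and $|\ell^1| = |g(e_1,\hat{L})| \gtrsim c_{(V,\mathcal{U})}$, the relation $\hat{L}(\phi) = 0$ reads $\ell^1 (e_1\phi) = -\sum_{a \geq 2} \ell^a (e_a \phi)$, and Cauchy--Schwarz with $\sum_{a \geq 2} (\ell^a)^2 \leq 1$ yields $(e_1\phi)^2 \lesssim c_{(V,\mathcal{U})}^{-2} \sum_{a \geq 2}(e_a\phi)^2$. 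Hence $(e_1\phi)^2$ is controlled by the energy density as well.

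It remains to assemble the estimate. Having controlled $e_1\phi$ and $\sum_{a \geq 2}(e_a\phi)^2$, the identity $V\phi = V^0 n\phi + V^1 e_1\phi$ with $V^0$ bounded below gives control of $(n\phi)^2$; and since $T$ is uniformly timelike, future-directed and uniformly transverse to $\Sigma_t$, with uniformly bounded components in the frame $\{n, e_1, \ldots, e_d\}$, we get $(T\phi)^2 \lesssim (n\phi)^2 + \sum_a (e_a\phi)^2$. Combining these, pointwise on $\Sigma_t \setminus \mathcal{U}$,
\begin{equation*}
  |\partial\phi|^2 = |T\phi|^2 + \sum_{a=1}^{d} |e_a\phi|^2 \ \lesssim\ \big(1 + c_{(V,\mathcal{U})}^{-2}\big)\, \frac{\imath_{{^{(V)}J[\phi]}} \dVol_g|_{\Sigma_t}}{\imath_n \dVol_g|_{\Sigma_t}},
\end{equation*}
the remaining implied constant depending only on the fixed geometric data, and in particular independent of $t$ (the geometric quantities involved are $T$-invariant or are bounded above and below on the compact $T$-invariant set $\mathcal{M}\setminus\mathcal{U}^{\mathrm{as}}$, and are controlled by the asymptotic form of $g$ on $\mathcal{U}^{\mathrm{as}}$). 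Integrating this inequality against $\dVol = \imath_n\dVol_g|_{\Sigma_t}$ over $\Sigma_t \setminus \mathcal{U}$ proves the proposition, with $c(\mathcal{U}) \sim (1 + c_{(V,\mathcal{U})}^{-2})^{-1}$.

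I expect the only step that goes beyond routine manipulation to be the lower bound $|g(e_1,\hat{L})| \gtrsim c_{(V,\mathcal{U})}$: it encodes the fact that, away from $\mathcal{S}$, the one spatial direction missing from the $V$-energy current is uniformly non-orthogonal to the $\mathcal{G}$-orbit, which is precisely the quantitative content of (ES2)(4) once one notices that the $L_A$ are tangent to $\Sigma_t$ so that their normal components drop out of $g(V,L_A)$. Everything else is bookkeeping — uniformity of constants over the non-compact leaves $\Sigma_t$ and in $t$, and disposing of the harmless case in which $V$ is parallel to $n$.
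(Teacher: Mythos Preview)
Your proof is correct and follows essentially the same approach as the paper: both identify that the $V$-energy density is missing only the $e_1$-derivative (the spatial projection of $V$), and both recover it from $L_A(\phi)=0$ together with the quantitative non-orthogonality condition (ES2)(4). Your version is in fact somewhat cleaner than the paper's sketch, since you explicitly use $L_A(t)=0$ to kill the normal component of $\hat L$ (so $g(V,\hat L)=V^1 g(e_1,\hat L)$ directly), and you track uniformity of the constants and the degenerate case $V^1=0$ more carefully; the one superfluous remark is the ``closed orbit'' justification for $L_A$ being spacelike, which is unnecessary once you have $L_A$ tangent to the spacelike leaf $\Sigma_t$.
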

\begin{proof}
 The calculations in section \ref{section energy momentum} show that the energy current $\mathcal{E}^{(V)}[\phi]$ is almost equivalent to the $L^2$ norm of the derivatives, except that it is ``missing'' a derivative in the direction of the orthogonal projection of $V$ onto the surfact $\Sigma_t$ (that is, in the $e_1$ direction). Note that $V^1 e_1 = V + g(V, n)n$. In order to prove the proposition, we need to show that, away from $\mathcal{S}$, we can express this derivative in terms of the vector fields $L_A$ as well as the other vector fields on $\Sigma_t$ that are orthogonal to the projection of $V$ onto $\Sigma_t$.

 At all points away from $\mathcal{S}$, there exists some $L_{\hat{A}} \in L_{\mathcal{G}}$ such that $g(L_{\hat{A}}, V) \neq 0$. We claim that
\begin{equation*}
 V + g(V,n)n = \frac{(g(V,n))^2}{g(L_{\hat{A}}, V)} \left( L_{\hat{A}} + g(L_{\hat{A}}, n)n - \sum_{A=2}^d g(L_{\hat{A}}, e_A)e_A \right)
\end{equation*}
 To prove this we simply take contractions with the orthonormal base $\{n, e_1, \ldots, e_d\}$. Additionally, using the conditions in the definition of the evanescent ergosurface, as well as the condition that $c|\upd t| \leq V(t) \leq C|\upd t|$, we also find that for any function $\phi$, in the set $\mathcal{M}\setminus\mathcal{U}$ there exists some $c > 0$ such that
\begin{equation*}
 \left| V(u) + g(V,n)n(\phi) \right| \leq c\left( \frac{|L_{\hat{A}}(u)|}{\sqrt{g(L_{\hat{A}},L_{\hat{A}})}} + |n(\phi)| + \sum_{A=2}^d |e_A \phi| \right)
\end{equation*}
In particular, if $\phi$ is invariant under the action of $\mathcal{G}$, then the ``missing'' derivative in the energy can be \emph{uniformly} bounded in terms of the derivatives which \emph{do} appear in the energy, proving the proposition.
\end{proof}

\section{Estimates in asymptotically Kaluza-Klein spaces}
\label{section estimates in KK spaces}

In the following two sections we will need to make use of some of the results proved by Moschidis in both \cite{Moschidis2016} and \cite{Moschidis2015}. In section \ref{section general case} we appeal to the results of \cite{Moschidis2016}, which apply to asymptotically flat manifolds with an ergoregion but with no horizon; we can therefore apply all estimates which do not rely on the existence of a nonempty ergoregion to our manifolds in the asymptotically flat case (in the sense of section \ref{section asymptotic structure}). However, for manifolds with Kaluza-Klein asymptotics of the appropriate form (also defined in section \ref{section asymptotic structure}), some modification is needed.

In section \ref{section additional symmetry} we will also make some use of the results of \cite{Moschidis2015}, which establishes logarithmic decay of the local energy on a very general class of geometries. These results are proved for asymptotically flat manifolds with a \emph{globally} timelike Killing vector $T$, or else on manifolds with an ergoregion which overlaps with a ``red shift region''. Neither of these conditions holds in the cases we are considering, and the additional symmetry assumed in section \ref{section additional symmetry} plays a crucial role in recovering some of the results. These issues concern the local geometry near the evanescent ergosurface, and will be adressed in section \ref{section additional symmetry}. In this section, we shall sketch an adaptation of some of the methods of \cite{Moschidis2016, Moschidis2015} to the asymptotically Kaluza-Klein case.

The key inequalities which we need to extend to the asymptotically Kaluza-Klein case are the ``Carleman inequalities'' (see section 6 of \cite{Moschidis2016}), as well as the ``$p$-weighted energy estimate'' (see section 5 of \cite{Moschidis2015}\footnote{Note that this powerful technique, first developed in \cite{Dafermos2010b}, was also extensively developed for use on the linear wave equation in \cite{Moschidis2015a} (among many others), and was also extended for use with nonlinear equations in \cite{Yang2013, Yang2014, Yang2013a, Keir2018}.}).  The Carleman inequality is used to establish decay in \cite{Moschidis2016}, while \cite{Moschidis2015} used the $p$-weighted estimates for this purpose, so there is actually some redundancy here (this redundancy is noted explicitly in \cite{Moschidis2016}), however, since we will quote the results of these two papers rather than re-derive them, it is important that both types of estimates can be applied in our setting, i.e.\ on asymptotically Kaluza-Klein spacetimes.

Some of the estimates in \cite{Moschidis2016} and \cite{Moschidis2015} will actually \emph{not} hold for general waves on asymptotically Kaluza-Klein spacetimes. However, they will still hold for $\mathcal{G}$-invariant waves on these spacetimes. Before discussing this further, we need the following standard result:

\begin{proposition}
 Let $\phi$ be a solution to the wave equation $\Box_g \phi = 0$ on an asymptotically Kaluza-Klein manifold, and suppose that the initial data for $\phi$, that is $\phi\big|_{\Sigma_0}$ and $n(\phi)\big|_{\Sigma_0}$ is invariant under the action of $\mathcal{G}$, i.e.\ $L_A \phi\big|_{\Sigma_0} = n L_A \phi\big|_{\Sigma_0} = 0$ for all $L_A \in L_{\mathcal{G}}$. Then $L_A \phi = 0$ everywhere in $\mathcal{M}$.
\end{proposition}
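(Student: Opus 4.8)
The plan is to use the fact that each $L_A$ is a Killing field which commutes with the wave operator, so that $L_A \phi$ is itself a solution to the wave equation, and then apply uniqueness for the Cauchy problem. Concretely, fix $A \in \mathfrak{g}$ and set $\psi := L_A \phi$. Since $L_A$ generates a one-parameter family of isometries $\phi_s(A)$, the map $\mathcal{L}_{L_A}$ commutes with $\Box_g$; that is, $\Box_g (L_A \phi) = L_A (\Box_g \phi)$. This is the key algebraic step and it follows from differentiating $\Box_{(\phi_s(A))^* g}((\phi_s(A))^*\phi) = (\phi_s(A))^*(\Box_g \phi)$ in $s$ at $s = 0$ and using $(\phi_s(A))^* g = g$. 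Hence $\Box_g \psi = L_A(\Box_g \phi) = L_A(0) = 0$, so $\psi$ solves the wave equation.

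Next I would check that $\psi$ has vanishing Cauchy data on $\Sigma_0$. By hypothesis $L_A \phi\big|_{\Sigma_0} = 0$, i.e. $\psi\big|_{\Sigma_0} = 0$. For the normal derivative, note that $L_A$ is tangent to $\Sigma_0$: this is because $t$ is invariant under the action of $\mathcal{G}$ (assumed in the Kaluza-Klein setup), so $L_A(t) = 0$, which means $L_A$ is tangent to the level sets $\Sigma_t$. Therefore, to compute $n(\psi)\big|_{\Sigma_0}$ we can commute $n$ and $L_A$ up to a tangential vector field: $[n, L_A]$ is tangent to $\Sigma_0$ (since $L_A$ is Killing, $[n,L_A] = \mathcal{L}_{L_A} n$, and $L_A$-flow preserves $\Sigma_0$ and the normal direction, so this Lie bracket is a multiple of $n$ plus something tangent; in fact since the flow of $L_A$ preserves both $g$ and the foliation it preserves $n$ exactly on $\Sigma_0$, giving $[n,L_A]\big|_{\Sigma_0}$ tangent). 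Then $n(\psi) = n(L_A \phi) = L_A(n\phi) + [n, L_A]\phi$. On $\Sigma_0$, the first term vanishes because $n(\phi)\big|_{\Sigma_0}$ is $\mathcal{G}$-invariant by hypothesis (and $L_A$ is tangent to $\Sigma_0$, so it only differentiates along $\Sigma_0$), and the second term vanishes because $[n,L_A]$ is tangent to $\Sigma_0$ and $\phi\big|_{\Sigma_0}$ is $\mathcal{G}$-invariant. Hence $n(\psi)\big|_{\Sigma_0} = 0$.

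Finally I would invoke uniqueness for the Cauchy problem for $\Box_g \phi = 0$: since $\Sigma_0$ is a Cauchy surface for $\mathcal{M}$ and $\psi$ is a solution with zero data on $\Sigma_0$, we conclude $\psi \equiv 0$ on $\mathcal{M}$, i.e. $L_A \phi = 0$ everywhere. Since $A \in \mathfrak{g}$ was arbitrary and the $L_A$ span $L_{\mathcal{G}}$, this gives $L_A \phi = 0$ for all $L_A \in L_{\mathcal{G}}$, as claimed.

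The main obstacle — though it is fairly mild — is the careful bookkeeping in the second step: showing that $L_A$ is tangent to $\Sigma_0$ and that the commutator $[n, L_A]$ does not introduce a normal component that would spoil the argument, so that $\mathcal{G}$-invariance of the data genuinely passes to vanishing of the full Cauchy data of $\psi$. One should also be slightly careful about whether the hypothesis of $\mathcal{G}$-invariance of the data is stated for the generators $L_A$ or for the full group action, but since the group is connected and generated by the $\exp(sA)$ these are equivalent. Everything else (commuting $\Box_g$ with an isometry generator, and uniqueness of the Cauchy problem for a linear wave equation on a globally hyperbolic region) is entirely standard, which is why the proposition is billed as a "standard result."
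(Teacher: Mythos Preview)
Your proof is correct and follows essentially the same approach as the paper: commute the Killing field $L_A$ with $\Box_g$, observe that $L_A\phi$ solves the wave equation with vanishing Cauchy data, and invoke uniqueness. Your treatment of why the Cauchy data of $L_A\phi$ vanishes (via tangency of $L_A$ to $\Sigma_0$ and the commutator $[n,L_A]$) is in fact more careful than the paper's one-line assertion that ``the initial data for $L_A\phi$ vanishes identically,'' but the underlying argument is identical.
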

\begin{proof}
 Since the vector fields $L_A$ are Killing vector fields for the metric $g$, they commute with the geometric wave operator $\Box_g$, so $L_A \phi$ satisfies
\begin{equation*}
 \Box_g (L_A \phi) = 0
\end{equation*}
 However, the initial data for $L_A \phi$ vanishes identically, and so by standard uniqueness results $L_A \phi = 0$ everywhere.
\end{proof}

Now, suppose $\phi$ is a solution to the wave equation which is invariant under the action of $\mathcal{G}$. Then we claim that $\phi$ satisfies a wave equation (with a right hand side) on a related manifold in the ``asymptotic region''. In fact, away from the fixed points of the action of $\mathcal{G}$ we can use the action of $\mathcal{G}$ to define a smooth projection. We have
\begin{equation*}
 \pi_{\mathcal{G}} : \mathcal{U}^{\text{as}} \rightarrow \mathcal{U}^{\text{as}} / \mathcal{G}
\end{equation*}
In the asymptotic region we have a diffeomorphism $\varphi: \mathcal{U}^{\text{as}} \rightarrow \mathbb{R}\times(\mathbb{R}^d\setminus B)\times X$ and the group action of $\mathcal{G}$ descends to a transitive action on the space $X$, so the projection $\pi_\mathcal{G}(\mathcal{U}^{\text{as}})$ of the asymptotic region may be identified with $\mathbb{R}\times(\mathbb{R}^d\setminus B)$. That is we can define the projection
\begin{equation*}
 \begin{split}
  \bar{\pi}_\mathcal{G} : \mathbb{R}\times(\mathbb{R}^d\setminus B)\times X &\rightarrow \mathbb{R}\times(\mathbb{R}^d\setminus B) \\
  (t,x,gH) &\mapsto (t,x)
 \end{split}
\end{equation*}
This projection allows us to define \emph{vertical vector fields} in $T\mathcal{M}$. To be precise, we say that a vector field $X$ is \emph{vertical} if its pushforward by the projection map $\pi_\mathcal{G}$ (or, equivalently in the asymptotic region, its pushforward by the map $\varphi \circ \bar{\pi}_\mathcal{G}$) vanishes.

Similarly, \emph{in the asymptotic region} (although not necessarily elsewhere) we can make sense of the notion of \emph{horizontal} vector fields as follows: we can define the map
\begin{equation*}
 \begin{split}
  \bar{\pi}_{\text{hor}} : \mathbb{R}\times(\mathbb{R}^d\setminus B)\times X &\rightarrow X \\
  (t, x, gH) &\mapsto (gH)
 \end{split}
\end{equation*}
Then a vector field $X \in T\mathcal{U}^{\text{as}}$ is called \emph{horizontal} if its pushforward under the projection $\varphi_{\text{as}} \circ \bar{\pi}_{\text{hor}}$  vanishes. Using this, we can split the tangent space in the asymptotic region as
\begin{equation*}
 T_p(\mathcal{M}) = \left( T_p(\mathcal{M}) \right)^\text{ver} \oplus \left( T_p(\mathcal{M}) \right)^\text{hor}
\end{equation*}
for $p \in \mathcal{U}^{\text{as}}$. Note that the spaces $\left( T_p(\mathcal{M}) \right)^\text{ver}$ and $\left( T_p(\mathcal{M}) \right)^\text{hor}$ are then linear subspaces of the tangent space at the point $p$, but they are not, in general, orthogonal subspaces.

We now take a non-vanishing section of the frame bundle of some sufficiently small set $\mathcal{U} \subset \mathcal{U}^{\text{as}}$, such that the vectors in the frame each lie within either the vertical or horizontal subspace at each point. We shall label the frame vector fields
\begin{equation*}
 \begin{split}
  (e_a)\big|_p &\in \left( T_p(\mathcal{M}) \right)^\text{hor} \text{ } \forall p \in \mathcal{U}^{\text{as}} \text{, for } a \in \{ 0, \ldots, d\} \\
  (e_A)\big|_p &\in \left( T_p(\mathcal{M}) \right)^\text{ver} \text{ } \forall p \in \mathcal{U}^{\text{as}} \text{, for } A \in \{ d+1, \ldots, d+n\} \\
 \end{split}
\end{equation*}
We may moreover take the frame vectors $e_A$ to satisfy $e_A \in L_{\mathcal{G}}$, since every vector $L_g \in L_{\mathcal{G}}$ is evidently in the vertical subspace. Hence the vectors $e_A$ are Killing vectors for the metric $g$. Note, however, that we do not assume that the vector fields in the frame commute with one another, nor are they necessarily orthogonal. Note also that we will use lower case latin letters to label vectors in the horizontal subspace, and upper case latin letters to label vectors in the vertical subspace. When we wish to sum over the entire (co)tangent space we shall use (lower case) greek letters. Finally, note that, for vector fields $L_A$, $L_B \in L_{\mathcal{G}}$, a standard computation reveals that their commutator is given by
\begin{equation*}
 [L_A, L_B] = L_{[B,A]}
\end{equation*}
where on the right hand side we use the Lie bracket associated with the Lie algebra $\mathfrak{g}$ of $\mathcal{G}$. In particular, we have that $[L_A, L_B] \in L_{\mathcal{G}}$.

Given local coordinates $\vartheta^1 ,\ldots, \vartheta^{d-1}$ for a coordinate patch on the sphere $\mathbb{S}^{d-1}$, we may use the coordinate vector fields $\partial_t$, $\partial_r$, $\partial_{\vartheta^1} , \ldots, \partial_{\vartheta^{d-1}}$ to give a bases for the horizontal subspace. Since these vector fields are coordinate-induced, their commutators vanish. Additionally, the fact that the coordinates $t, r, \vartheta^1 \ldots$ are invariant under the action of $\mathcal{G}$ (e.g.\ we have $L_A (r) = 0$ for all $A \in \mathfrak{g}$) ensures that $[e_A, e_b] \in L_{\mathcal{G}}$.

Given a vector field $X$, we can write
\begin{equation*}
 X = X^a e_a + X^A e_A
\end{equation*}
where we use the usual summation convention to sum over both the indices labelling the horizontal subspace and those labelling the vertical subspace. Similarly, given a covector field $\omega$, we write
\begin{equation*}
 \begin{split}
  \omega_a &:= \omega(e_a) \\
  \omega_A &:= \omega(e_A)
 \end{split}
\end{equation*}

Now, let $\phi$ be a function which is invariant under the action of $\mathcal{G}$, so $e_A(\phi) = 0$. Then we can compute
\begin{equation}
 \label{equation boxg phi}
 \Box_g \phi = g^{ab} e_a \left(e_b(\phi)\right) - \left( g^{ab}\Gamma^c_{ab} - g^{AB}\Gamma^c_{AB} - 2g^{aB}\Gamma^c_{aB} \right)e_c(\phi)
\end{equation}
where
\begin{equation*}
 \Gamma^\alpha_{\beta\gamma} := \frac{1}{2}(g^{-1})^{\alpha\delta} \left( e_\beta g_{\delta \gamma} + e_\gamma g_{\delta \beta} - e_\delta g_{\gamma \beta} + g([e_\delta, e_\gamma], e_\beta) + g([e_\delta, e_\beta], e_\gamma) - g([e_\gamma, e_\beta], e_\delta) \right)
\end{equation*}

In the asymptotic region, we can use the map $\varphi_{\mathcal{G}}:\mathcal{M}\rightarrow \mathcal{M}/\mathcal{G}$ to define the pushforward of the inverse metric
\begin{equation*}
 (\varphi_{\mathcal{G}})_*(g^{-1}):= \bar{g}^{-1}
\end{equation*}
 Moreover, in the asymptotic region we can see from the form of the metric that, at least for sufficiently large $r$, $\bar{g}^{-1}$ is non-degenerate. Thus, we can define a symmetric, non-degenerate inverse tensor $\bar{g} \in \Gamma\left( T_*(\mathcal{M})\otimes T_*(\mathcal{M}) \right)$. We can equip the orbit space $\mathcal{M}/\mathcal{G}$ with the metric $\bar{g}$, and for any $\mathcal{G}$-invariant function $\phi$ on $\mathcal{M}$ we can identify $\phi$ with a function $\bar{\phi}$ on $\mathcal{M}/\mathcal{G}$. Then, for sufficiently large $R$, we find that if $\phi$ satisfies equation \eqref{equation boxg phi}, then $\bar{\phi}$ satisfies a wave equation on the Lorentzian manifold $\left(\mathcal{M}/\mathcal{G} \cap \{r > R\} \ , \ \bar{g}\right)$
\begin{equation*}
 \begin{split}
  \Box_{\bar{g}} \bar{\phi} + F^a (\varphi_{\mathcal{G}})^*e_a (\bar{\phi}) &= 0 \\
  F^a &:= \frac{1}{2}(\bar{g}^{-1})^{ab}(\bar{g}^{-1})^{cd}\left( e_c\left(\bar{g}_{be} g^{eA} g_{Ad} \right) + e_d\left(\bar{g}_{be} g^{eA} g_{Ac} \right) - e_b\left(\bar{g}_{ce} g^{eA} g_{Ad} \right) \right) \\
  &\phantom{:=} - (g^{-1})^{bc}(g^{-1})^{aD} e_b(g_{cD}) -\frac{1}{2}(g^{-1})^{AB}(g^{-1})^{a\mu} e_\mu (g_{AB}) \\
  &\phantom{:=} - (g^{-1})^{Ab}(g^{-1})^{ac}e_b (g_{cA}) + (g^{-1})^{Ab}(g^{-1})^{ac}e_c (g_{Ab}) \\
  &\phantom{:=} - (g^{-1})^{Ab}(g^{-1})^{aD}\left( e_b(g_{AD}) + g([e_A, e_D], e_b) \right)
 \end{split}
\end{equation*}
Note that the terms defining the functions $F^a$, $a\in\{0,\ldots d\}$ are $\mathcal{G}$-invariant, which follows from the form of the metric in the asymptotic region, and so their pushforward to the orbit space is well-defined. Furthermore, note that the derivatives of the metric in the directions $e_a$ are $\mathcal{O}(r^{-2-\alpha})$, whereas the derivatives in the directions $e_A$ generally have worse decay. For example, since the components of $[e_A, e_B]$ will generally be $\mathcal{O}(1)$, we have $(g^{-1})^{a\mu} e_\mu (g_{AB}) = \mathcal{O}(r^{-1-\alpha})$.

Importantly, we find that, overall,
\begin{equation*}
 |F^a| = \mathcal{O}(r^{-1-\alpha})
\end{equation*}
And so, although $\phi$ does not satisfy a wave equation on $\mathcal{M}/\mathcal{G}$ (even in the asymptotic region), we can consider the terms arising from $F$ as error terms, and moreover, error terms of this kind were dealt with previously in \cite{Yang2013a, Moschidis2015a, Keir2018}. Note, for example, that \cite{Yang2013a} dealt\footnote{In fact, using the results of \cite{Keir2018}, error terms of the form $F\cdot \partial \phi$ with $F \sim r^{-1}$ can be handled, but this requires some subtle modification of the estimates.} with nonlinear equations using a similar method to that used in \cite{Dafermos2010b, Moschidis2015a}, and in the process established estimates for equations of the form $\Box_g u = F$ in the asymptotic region, for $F = \mathcal{O}(r^{-2-\alpha})$. Note also that we can view the derivatives in the compact directions $e_A$ as ``bad'' derivatives: taking a derivative of the metric in these directions does not in general improve the decay rate of the metric component, in contrast to the $e_a$ derivatives. Hence we can expect $(e_A \phi) \sim r^{-1}$.

As mentioned above, we intend to make direct use of the results of \cite{Moschidis2016} and \cite{Moschidis2015}. Note, however, that a function $\phi$ satisfying $\Box_{\bar{g}}\phi = F(\partial \phi)$, for $F = \mathcal{O}(r^{-1-\alpha})$ does not quite satisfy all of the estimates in \cite{Moschidis2015}, which are concerned with homogeneous equations of the form $\Box_{\bar{g}} \phi = 0$. One can check that an inhomogeneous term of this sort does not pose any problems for the Carleman estimates of \cite{Moschidis2016}. On the other hand, certain estimates in \cite{Moschidis2015} need a little bit of additional consideration.

To be precise, the $p$-weighted energy estimates of \cite{Dafermos2010b} cannot be established for the entire range $0 < p < 2$ as in the homogeneous case $F = 0$ studied in \cite{Moschidis2015}. To see why, we can sketch the case where $g = m$, the Minkowski metric on $\mathbb{R}^{d+1}$, and $F^a e_a(\phi) = F^{\Lbar}\Lbar(\phi)$, where $\Lbar = \partial_t - \partial_r$ in the standard Minkowski coordinates, and where we assume $F^{\Lbar} = \mathcal{O}(r^{-1-\alpha})$.

First, set $t^*$ to be the function
\begin{equation*}
	t^* :=
	\begin{cases}
		t &\text{ if } r \leq R \\
		t - r + R &\text{ if } r \geq R
	\end{cases}
\end{equation*}
and we set 
\begin{equation*}
	\Sigma_{t^*} := \{x\in \mathcal{M} \big| t^*(x) = t^* \} 
\end{equation*}
For the purposes of this example, we will assume that an \emph{integrated local energy decay} statement of the following type holds:
\begin{equation*}
 \int_{\tau = \tau_1}^{\tau_2} \left( \int_{\Sigma_\tau} \alpha(1+r)^{-1-\frac{\alpha}{2}} \left( \sum_{a = 0}^{d} |e_a \phi|^2 \right) r^{d-1} \upd r   \dVol_{\mathbb{S}^{d-1}} \right) \upd \tau \lesssim \mathcal{E}^{(N)}[\phi](\tau_1)
\end{equation*}
where $\dVol_{\mathbb{S}^{d-1}}$ is the standard volume form on the \emph{unit} $(d-1)$-sphere and $\alpha>0$ is some positive constant. Note that such an estimate can indeed be proven using the multiplier $\left(1 - (1+r)^{-\frac{\alpha}{2}} \right) \partial_r$ and an appropriately modified energy current in the asymptotic region. However, our purpose here is to sketch the argument so we will skip this computation.

Setting $L = \partial_t + \partial_r$, and defining $\psi := r^{\frac{d - 1}{2}}\phi$, we find that $\psi$ satisfies
\begin{equation*}
 -L\Lbar\psi + \slashed{\Delta}\psi + r^{\frac{d-1}{2}}F^{\Lbar} (\Lbar \phi) - \frac{(d-1)(d-3)}{4} r^{-2} \psi = 0
\end{equation*}
Multiplying by $-r^p L\psi$ and integrating by parts in the region $r\geq R$, $\tau_0 \leq \tau \leq \tau_1$ we find that
\begin{equation}
\label{equation p weighted v1}
\begin{split}
	&\int_{\Sigma_{t^*_1} \cap \{r \geq R\}} r^p (L\psi)^2 \upd r   \dVol_{\mathbb{S}^2}
	\\
	&
	+ \frac{1}{2} \int_{t^*_0}^{t^*_1} \bigg(
		\int_{\Sigma_\tau\cap\{r\geq R\}} \bigg(
			pr^{p-1}(L\psi)^2 
			+ (2-p)r^{p-1}|\slashed{\nabla}\psi|^2
			\\
			&\phantom{ + \frac{1}{2} \int_{t^*_0}^{t^*_1} \bigg( \int_{\Sigma_\tau\cap\{r\geq R\}} \bigg( }
			+ (2-p) \frac{(d-1)(d-3)}{4}r^{p-3}|\psi|^2
		\bigg) \upd r \dVol_{\mathbb{S}^{d-1}} \bigg) \upd \tau
		\\
	&\leq \int_{\Sigma_{\tau_0}\cap\{ r \geq R\}} r^p (L\psi)^2 \upd r   \dVol_{\mathbb{S}^{d-1}}
	+ \int_{t = R}^{R + \tau} \left( \int_{\mathbb{S}^{d-1}} r^p\left( (L\psi)^2 + |\slashed{\nabla}\psi|^2 \right) \dVol_{\mathbb{S}^{d-1}} \right)\upd t \\
	&\phantom{\leq} + \int_{\tau_0}^{\tau_1}\left( \int_{\Sigma_\tau}  |F^{\Lbar}|r^{\frac{d-1}{2} + p}|\Lbar \phi||L\psi| \upd r   \dVol{\mathbb{S}^{d-1}} \right) \upd \tau
 \end{split}
\end{equation}
The terms on the right hand side which are evaluated at $\tau = \tau_0$ are to be considered part of the initial data, while the term evaluated on the surface $r = R$ can be dealt with using the integrated local energy decay estimate in a standard way. The new term which must be estimated arises from the presence of the inhomogeneity $F^{\Lbar}$.

Since $F^{\Lbar}$ satisfies $|F^{\Lbar}| = \mathcal{O}(r^{-1-\alpha})$, we can estimate this term as follows: for $r \geq R$, and for any $\delta > 0$ we have
\begin{equation*}
 |F|r^{\frac{d-1}{2} + p}|\Lbar \phi||L\psi| \lesssim \delta r^{p-1}|L\psi|^2 + \frac{1}{\delta}C(R) (1+r)^{-1 + p - 2\alpha}|\Lbar \phi|^2 r^{d-1}
\end{equation*}
If we now take $p = \frac{3}{2}\alpha$ and substitute back into equation (\eqref{equation p weighted v1}), then we can estimate
\begin{equation}
\label{equation p weighted v2}
\begin{split}
	& \int_{\Sigma_{\tau_1} \cap \{r \geq R\}} r^{\frac{3}{2}\alpha} (L\psi)^2 \upd r   \dVol_{\mathbb{S}^2} \\
	& + \int_{\tau_0}^{\tau_1} \bigg(
		\int_{\Sigma_\tau\cap\{r\geq R\}} \bigg(
			\alpha r^{\frac{3}{2}\alpha-1}(L\psi)^2
			+ \left(4-3\alpha\right)r^{\frac{3}{2}\alpha-1}|\slashed{\nabla}\psi|^2 
			\\
			&\phantom{ + \int_{\tau_0}^{\tau_1} \bigg( \int_{\Sigma_\tau\cap\{r\geq R\}} \bigg( }
			+ (d-1)(d-3)(4-3\alpha) r^{\frac{3}{2}\alpha-3}|\psi|^2
		\bigg) \upd r \dVol_{\mathbb{S}^{d-1}}\bigg)\upd \tau \\
	& \lesssim \mathcal{E}^{(N)}[\phi](\tau_0) + \int_{\Sigma_{\tau_0}\cap\{ r \geq R\}} r^{\frac{3}{2}\alpha} (L\psi)^2 \upd r   \dVol_{\mathbb{S}^{d-1}} \\
	& \phantom{\lesssim} + \int_{\tau_0}^{\tau_1}\left( \int_{\Sigma_\tau} \left( \delta r^{\frac{3}{2}\alpha-1}|L\psi|^2 + C(R)\frac{1}{\delta}(1+r)^{-1 - \frac{1}{2}\alpha}|\Lbar \phi|^2 r^{d-1} \right)  \upd r   \dVol_{\mathbb{S}^{d-1}} \right) \upd \tau
 \end{split}
\end{equation}
If we fix $\delta$ as some sufficiently small constant, then the first term in the second integral on the right hand side can be absorbed by the corresponding term on the left hand side. Moreover, the second term in the second integral on the right hand side can be estimated in terms of the initial energy by using the integrated local energy decay estimate, resulting in the bound
\begin{equation}
\label{equation p weighted v3}
\begin{split}
	&\int_{\Sigma_{\tau_1} \cap \{r \geq R\}} r^{\frac{3}{2}\alpha} (L\psi)^2 \upd r   \dVol_{\mathbb{S}^2}
	+ \int_{\tau_0}^{\tau_1} \bigg( 
		\int_{\Sigma_\tau\cap\{r\geq R\}} \bigg(
			\alpha r^{\frac{3}{2}\alpha-1}(L\psi)^2
			+ \left(4-3\alpha\right)r^{\frac{3}{2}\alpha-1}|\slashed{\nabla}\psi|^2
			\\
			&\phantom{\int_{\Sigma_{\tau_1} \cap \{r \geq R\}} r^{\frac{3}{2}\alpha} (L\psi)^2 \upd r   \dVol_{\mathbb{S}^2} + \int_{\tau_0}^{\tau_1} \bigg( }
			+ (d-1)(d-3)(4-3\alpha) r^{\frac{3}{2}\alpha-3}|\psi|^2
		\bigg) \upd r \dVol_{\mathbb{S}^{d-1}} \bigg)\upd \tau \\
	&\lesssim \left(1+\alpha^{-1}C(R)\right)\mathcal{E}^{(N)}[\phi](\tau_0) + \int_{\Sigma_{\tau_0}\cap\{ r \geq R\}} r^{\frac{3}{2}\alpha} (L\psi)^2 \upd r   \dVol_{\mathbb{S}^{d-1}}
 \end{split}
\end{equation}

Next, we return to equation \eqref{equation p weighted v1}, and this time we estimate the term involving the inhomogeneity $F^{\Lbar}$ as follows:
\begin{equation*}
 |F|r^{\frac{d-1}{2} + p}|\Lbar \phi||L\psi| \lesssim (1+\tau)^{-1-\beta} r^p|L\psi|^2 + C(R) (1+\tau)^{1+\beta}(1+r)^{-2 + p - 2\alpha}|\Lbar \phi|^2 r^{d-1}
\end{equation*}
where $\beta > 0$ is some constant. Substituting this into equation \eqref{equation p weighted v1} and setting $p = 1 + \frac{3}{2}\alpha$ we can obtain the bound
\begin{equation*}
\begin{split}
	&\int_{\Sigma_{\tau_1} \cap \{r \geq R\}} r^{1 + \frac{3}{2}\alpha} (L\psi)^2 \upd r \dVol_{\mathbb{S}^2}
	+ \int_{\tau_0}^{\tau_1} \bigg(
		\int_{\Sigma_\tau\cap\{r\geq R\}} \bigg(
			(1+\alpha) r^{\frac{3}{2}\alpha}(L\psi)^2
			+ \left(2-3\alpha\right)r^{\frac{3}{2}\alpha}|\slashed{\nabla}\psi|^2
			\\
			&\phantom{\int_{\Sigma_{\tau_1} \cap \{r \geq R\}} r^{1 + \frac{3}{2}\alpha} (L\psi)^2 \upd r \upd\mu_{\mathbb{S}^2} + \int_{\tau_0}^{\tau_1} \bigg( }
			+ (d-1)(d-3)(2-3\alpha) r^{\frac{3}{2}\alpha-2}|\psi|^2
		\bigg) \upd r \dVol_{\mathbb{S}^{d-1}} \bigg)\upd \tau \\
  &\lesssim \mathcal{E}^{(N)}[\phi](\tau_0) + \int_{\Sigma_{\tau_0}\cap\{ r \geq R\}} r^{1+\frac{3}{2}\alpha} (L\psi)^2 \upd r   \dVol_{\mathbb{S}^{d-1}} \\
  &\phantom{} + \int_{\tau_0}^{\tau_1}\left( \int_{\Sigma_\tau} \left( (1+\tau)^{-1-\beta}r^{1+\frac{3}{2}\alpha}|L\psi|^2 + C(R)(1+\tau)^{1+\beta}(1+r)^{-1 - \frac{1}{2}\alpha}|\Lbar \phi|^2 r^{d-1} \right)  \upd r   \dVol_{\mathbb{S}^{d-1}} \right) \upd \tau
 \end{split}
\end{equation*}

The third term on the right hand side can be estimated by making use of the integrated local energy decay statement, while the second can be controlled using Gronwall's inequality. Note, however, that this third term grows as $(1+\tau_1)^{1+\beta}$. We are led to the inequality
\begin{equation}
\label{equation p weighted v4}
\begin{split}
	&\int_{\Sigma_{\tau_1} \cap \{r \geq R\}} r^{1 + \frac{3}{2}\alpha} (L\psi)^2 \upd r   \dVol_{\mathbb{S}^2}
	+ \int_{\tau_0}^{\tau_1} \bigg(
		\int_{\Sigma_\tau\cap\{r\geq R\}} \bigg(
			(1+\alpha) r^{\frac{3}{2}\alpha}(L\psi)^2
			+ \left(2-3\alpha\right)r^{\frac{3}{2}\alpha}|\slashed{\nabla}\psi|^2
			\\
			&\phantom{ \int_{\Sigma_{\tau_1} \cap \{r \geq R\}} r^{1 + \frac{3}{2}\alpha} (L\psi)^2 \upd r   \dVol_{\mathbb{S}^2} + \int_{\tau_0}^{\tau_1} \bigg( }
			+ (d-1)(d-3)(2-3\alpha) r^{\frac{3}{2}\alpha-2}|\psi|^2
		\bigg) \upd r \upd \mu_{\mathbb{S}^{d-1}} \bigg)\upd \tau \\
	&\lesssim (1+\alpha^{-1}C(R))(1+\tau_1)^{1+\beta}\mathcal{E}^{(N)}(\tau_0) + \int_{\Sigma_{\tau_0}\cap\{ r \geq R\}} r^{1+\frac{3}{2}\alpha} (L\psi)^2 \upd r   \dVol_{\mathbb{S}^{d-1}} \\
 \end{split}
\end{equation}

Interpolating between equations \eqref{equation p weighted v4} and \eqref{equation p weighted v3}, and making use of H\"older's inequality we can show
\begin{equation}
\label{equation p weighted v5}
\begin{split}
	&\int_{\Sigma_{\tau_1} \cap \{r \geq R\}} r (L\psi)^2 \upd r   \dVol_{\mathbb{S}^2}
	\\
	&
	+ \int_{\tau_0}^{\tau_1} \bigg(
		\int_{\Sigma_\tau\cap\{r\geq R\}} \bigg(
			(L\psi)^2
			+ |\slashed{\nabla}\psi|^2
			+ (d-1)(d-3) r^{-2}|\psi|^2
		\bigg) \upd r \dVol_{\mathbb{S}^{d-1}} \bigg) \upd \tau \\
	& \lesssim (1+\alpha^{-1}C(R))(1+\tau_1)^{(1+\beta)\left(1-\frac{3}{2}\alpha\right)}\tilde{\mathcal{E}}^{(N)}_{(1+\frac{3}{2}\alpha)}[\phi](\tau_0) \\
 \end{split}
\end{equation}
where we have defined the weighted energy
\begin{equation*}
 \tilde{\mathcal{E}}^{(N)}_{(p)}[\phi](\tau) := \tilde{\mathcal{E}}^{(N)}[\phi](\tau) + \int_{\Sigma_{\tau} \cap \{r \geq R\}} r^{p} (L\psi)^2 \upd r   \dVol_{\mathbb{S}^2} 
\end{equation*}
where, as usual, $\psi = r^{\frac{d-1}{2}}\phi$.

Now, if $\tilde{\mathcal{E}}^{(N)}_{(1+\frac{3}{2}\alpha)}[\phi](\tau_0) < \infty$ and if we define
\begin{equation*}
 \sigma := 1 - (1+\beta)\left(1-\frac{3}{2}\alpha\right)
\end{equation*}
then, if $\sigma > 0$ (e.g.\ if $\beta > \frac{3}{2}\alpha$) then we can use equation (\eqref{equation p weighted v5}) to find a diadic sequence of times $\tau_i$ such that the energy satisfies
\begin{equation*}
 \tilde{\mathcal{E}}^{(N)}[\phi](\tau_i) \lesssim (1+\tau_i)^{-\sigma}\tilde{\mathcal{E}}^{(N)}_{(1+\frac{3}{2}\alpha)}[\phi](\tau_0)
\end{equation*}
The energy boundedness assumption \ref{assumption boundedness} then allows us to drop the restriction to the diadic sequence, and we find that for all times $\tau \geq \tau_0$ we have polynomial decay of the energy:
\begin{equation*}
 \mathcal{E}^{(N)}(\tau)[\phi] \lesssim (1+\tau)^{-\sigma}\mathcal{E}^{(N)}_{(1+\frac{3}{2}\alpha)}[\phi](\tau_0)
\end{equation*}

Note that we also have the important corollary: for $\tau \geq \tau_0$, $\tilde{\mathcal{E}}^{(N)}_{(1+\frac{3}{2}\alpha)}(\tau) \lesssim \tilde{\mathcal{E}}^{(N)}_{(1+\frac{3}{2}\alpha)}[\phi](\tau_0)$, i.e.\ the weighted energy at future times is bounded by the weighted energy initially.

In the above sketch, we have shown that the additional, inhomogeneous term arising in the wave equation satisfied by $\bar{\phi}$ on the orbit space $\mathcal{M}/\mathcal{G}$ does not prevent polynomial decay, assuming that an appropriately weighted initial energy is bounded, and also that an integrated local energy decay estimate holds. In the cases we will consider in this paper (as in the general cases studied in \cite{Moschidis2015}) this integrated local energy decay estimate will only hold for \emph{bounded} frequencies; nevertheless, by combining this with an energy boundedness statement such as assumption \ref{assumption boundedness}, this is sufficient to conclude decay at a logarithmic rate. In other words, for this argument it is only important to establish polynomial decay for bounded frequency waves: the precise exponent (which depends on the maximum value of $p$) is not important.

\section{Adapted coordinates near an evanescent ergosurface}
\label{section adapted coordinates}

In this section we will construct coordinates near a point on the evanescent ergosurface. These local coordinates will be used later in order to construct initial data for the wave. We will first do this in the case of an asymptotically flat manifold admitting an evanescent ergosurface of the first kind \ref{condition asymp flat}, and then show how to adapt this construction to an asymptotically Kaluza-Klein manifold and to an ergosurface of the second kind \ref{condition asymp KK}. The coordinates we will construct are very similar to ``null Fermi coordinates'' \cite{Manasse1963, Fermi1922}.

As above, let $\Sigma_t$ be a spacelike hypersurface and let $p \in \mathcal{S} \cap \Sigma_t$ be a point on the intersection of the evanescent ergosurface with the hypersurface $\Sigma_t$. Let $Y$ be the orthogonal projection of $T$ onto the hypersurface $\Sigma_t$. That is, if $n$ is the unit, future directed normal to $\Sigma_t$ then we define
\begin{equation*}
 Y := T + g(T, n) n
\end{equation*}
Note that, at $p$, $Y$ is non-vanishing, since $T$ is null here and $\Sigma_t$ is spacelike. We can define a normalised version of $Y$:
\begin{equation*}
 \hat{Y} := \frac{1}{\sqrt{g(Y,Y)}} Y
\end{equation*}

We complete $\{T, \hat{Y}\}$ to a form a basis for the tangent space of $\mathcal{M}$ at the point $p$. We can choose these other vectors to be mutually orthogonal, normalised, and also orthogonal to both $Y$ and $T$. In other words, we have some set of vectors $e_a \in \mathcal{T}_p(\mathcal{M})$ such that
\begin{equation*}
 \begin{split}
  g(e_a, e_b) &= \delta_{ab} \\
  g(e_a, \hat{Y}) &= 0 \\
  g(e_a, T) &= 0
 \end{split}
\end{equation*}
Note that these vectors are orthogonal to both $T$ and $Y$, and the normal to the hypersurface $\Sigma_t$ at the point $p$ is proportional to $(T - Y)$. Hence these vectors are tangent to the hypsersurface $\Sigma_t$ at the point $p$.

Now, we consider an affinely parameterised geodesic $\beta_{(y, x^a)}(s)$ originating at the point $p$ and with initial tangent vector $y\hat{Y} + x^a e_a$. That is, the geodesic $\beta_{(y, x^a)}(s)$ satisfies
\begin{equation*}
 \begin{split}
  \beta_{(y, x^a)}(0) &= p \\
  \left.\frac{\partial}{\partial s} \beta_{(y, x^a)}(s) \right|_{s = 0} &= y\hat{Y} + x^a e_a
 \end{split}
\end{equation*}

\begin{figure}[htbp]
	\centering
	\includegraphics[width = 0.95\linewidth, keepaspectratio]{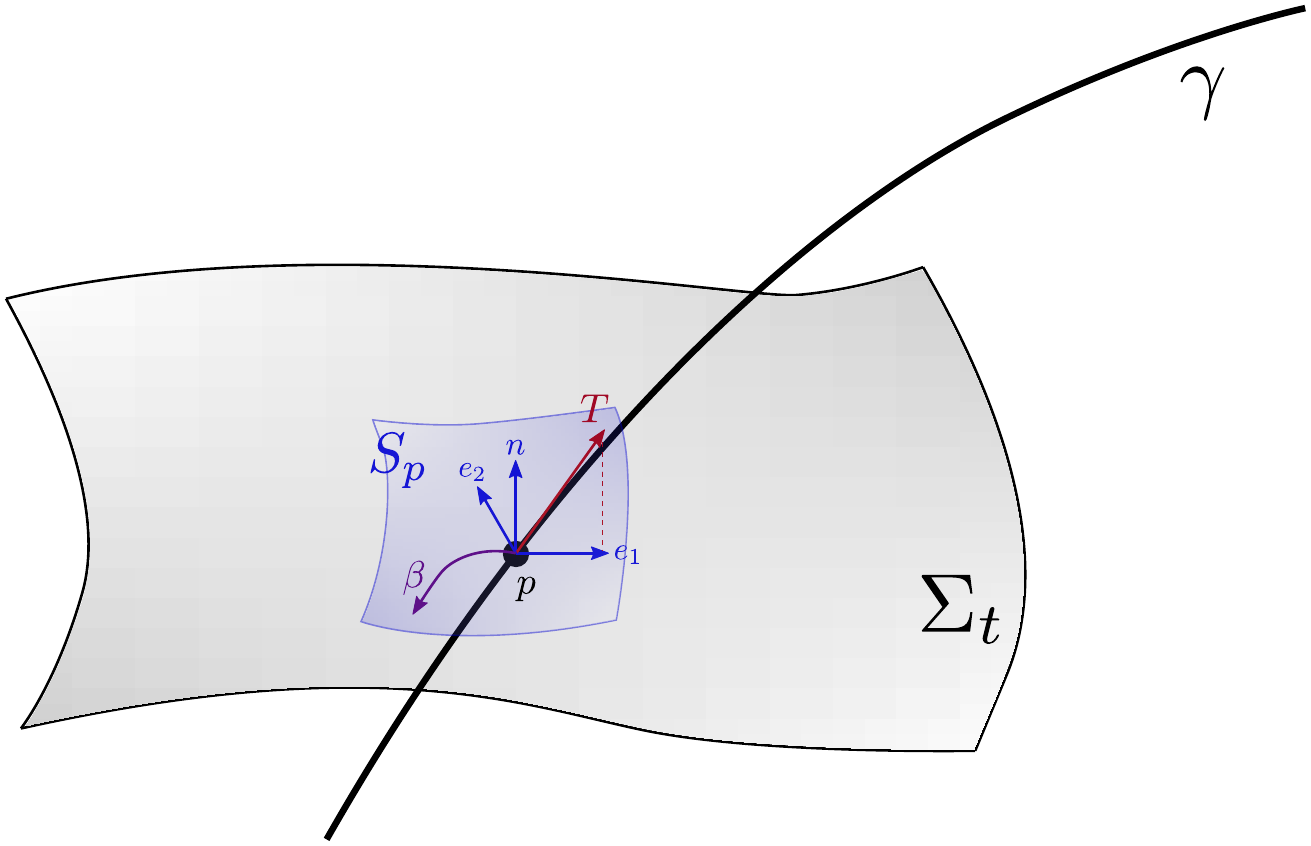}
	\caption{
		A sketch showing our construction of coordinates adapted to the ergosurface. Here, $p$ is a point on the ergosurface, and on the hypersurface $\Sigma_t$. The geodesic $\gamma$ is an integral curve of the Killing vector field $T$ (which should be replaced by $V$ in the asymptotically Kaluza-Klein case). $n$ is the future directed normal to $\Sigma_t$, and $e_1$ is a unit spacelike vector at $p$ in the direction of the orthogonal projection of $T$ onto $\Sigma_t$. $e_2$ is another unit vector field at $p$, tangent to $\Sigma_t$ and orthogonal to $e_1$. The purple curve $\beta$ is a geodesic, with tangent at $p$ which is in the span of $\{e_1, e_2\}$. The blue surface $S_p$ is the surface consisting of the union of all such curves in a neighbourhood of $p$. Note that the two hypersurfaces $S_p$ and $\Sigma_t$ are tangent at $p$ but do not, in general, agree away from $p$.
	}
	\label{figure coordinates}
\end{figure}
We can use the coordinates $(sy, sx^a)$ to label the point reached along this geodesic $\beta_{(y, x^a)}(s)$ after an affine distance $s$. Note that, since $\{Y, e_a\}$ do \emph{not} span the tangent space of $\mathcal{M}$ at the point $p$, we cannot yet use these coordinates in a neighbourhood of $p$. However, we can define the set
\begin{equation*}
 S_p := \left\{ q \in \mathcal{M} \big| \exists \ s, y, x^a \in \mathbb{R} \text{ s.t.\ } q = \beta_{(y, x^a)}(s) \right\}
\end{equation*}
This set is locally a smooth hypersurface near the point $p$. In other words, for any sufficiently small neighbourhood of $p$, the restriction of $S_p$ to this neighbourhood defines a smooth hypersurface in the neighbourhood. Moreover, if this neighbourhood is sufficiently small, then the vector field $T$ is transverse to $S_p$.

We recall that, since $p \in \mathcal{S}$, there is an affinely parameterised geodesic $\gamma$ through $p$ with tangent $T$. We will now define coordinates in a neighbourhood of $\gamma$. First, we define a function $\tilde{t}$ by the condition
\begin{equation*}
\begin{split}
	\tilde{t}\big|_{S_p} &= 0 \\
	T(\tilde{t}) &= 1
\end{split}
\end{equation*}
Next, we extend the coordinates $(y, x^a)$ off the hypsersurface $S_p$ by imposing the condition
\begin{equation*}
 T(y) = T(x^a) = 0
\end{equation*}
Note that the isometry generated by $T$ preserves distances, so by this process we are able to obtain coordinates for a local neighbourhood of the entire geodesic $\gamma$. See figure \ref{figure coordinates} for a sketch of this construction.

We note that, in these coordinates, we have
\begin{equation*}
 T = \partial_{\tilde{t}}
\end{equation*}
In addition, \emph{at the point} $p$ we have
\begin{equation*}
 \begin{split}
  \partial_y &= \hat{Y} \\
  \partial_a &= e_a 
 \end{split}
\end{equation*}

Hence, if we define
\begin{equation*}
 A := g(T, \hat{Y})
\end{equation*}
then \emph{at the point} $p$ the metric and its inverse are given by
\begin{equation}
\label{equation metric and inverse local coords p}
 \begin{split}
  g\big|_p &= 2A \upd \tilde{t}\upd y + \upd y^2 + \delta_{ab} \upd x^a \upd x^b \\
  g^{-1}\big|_p &= -A^{-2}\partial^2_{\tilde{t}} + 2A^{-1} \partial_{\tilde{t}}\partial_y + \delta^{ab}\partial_a \partial_b
 \end{split}
\end{equation}

Now, by construction, the curve with coordinates $(\tilde{t}, y, x^a) = (0, sy_0, sx^a_0)$ for \emph{arbitrary constants} $y_0$ and $x_0^a$ is a geodesic with affine parameter $s$. Since $T$ is a Killing vector field, the curve with coordinates $(t_0, sy_0, sx^a_0)$ for constants $t_0, sy_0, sx^a_0$ is also a geodesic. Hence, from the geodesic equation we obtain
\begin{equation*}
 \begin{split}
  \Gamma^{\tilde{t}}_{yy}\big|_\gamma &= \Gamma^{\tilde{t}}_{ya}\big|_\gamma = \Gamma^{\tilde{t}}_{ab}\big|_\gamma = 0 \\
  \Gamma^{y}_{yy}\big|_\gamma &= \Gamma^{y}_{ya}\big|_\gamma = \Gamma^{y}_{ab}\big|_\gamma = 0 \\
  \Gamma^{a}_{yy}\big|_\gamma &= \Gamma^{a}_{yb}\big|_\gamma = \Gamma^{a}_{bc}\big|_\gamma = 0 \\
 \end{split}
\end{equation*}

Similarly, since the curve with coordinates $(s, 0, 0)$ is an affinely parameterised geodesic we find
\begin{equation*}
 \Gamma^{\tilde{t}}_{\tilde{t}\tilde{t}}\big|_\gamma = \Gamma^{y}_{\tilde{t}\tilde{t}}\big|_\gamma = \Gamma^{a}_{\tilde{t}\tilde{t}}\big|_\gamma = 0
\end{equation*}

If it were the case that all the Christoffel symbols of $g$ vanish along $\gamma$, then we would be working in ``null Fermi coordinates'' adapted to the geodesic $\gamma$ \cite{Manasse1963}. However, this is not the case for the coordinates defined above: specifically, we cannot guarantee that all Christoffel symbols with mixed spatial and time indices vanish along $\gamma$. However note that at the point $p$ we have
\begin{equation*}
 \nabla_y (\partial_{\tilde{t}})\big|_p = \nabla_y T \big|_p = \Gamma^{\tilde{t}}_{y\tilde{t}}\big|_p T + \Gamma^{y}_{y\tilde{t}}\big|_p Y + \Gamma^{a}_{y\tilde{t}}\big|_p e_a
\end{equation*}
where we note that, although the vectors $Y$ and $e_a$ are only defined at the point $p$, $T$ is defined globally.

Taking an inner product with $T$ at the point $p$ and using the expression for the metric \eqref{equation metric and inverse local coords p} we find
\begin{equation*}
 g(\nabla_y T, T)\big|_p = A\Gamma^y_{y\tilde{t}}\big|_p
\end{equation*}
However, the left hand side is given by
\begin{equation*}
 g(\nabla_y T, T) = \frac{1}{2} \partial_y \left( g(T,T) \right)
\end{equation*}
and $g(T,T)$ vanishes at least quadratically on the evanescent ergosurface (and so, in particular, at $p$). We conclude that
\begin{equation*}
 \Gamma^y_{y\tilde{t}}\big|_p = 0
\end{equation*}

We can also compute
\begin{equation*}
 g(\nabla_T \partial_y, \partial_y)\big|_p = A \Gamma^{\tilde{t}}_{y\tilde{t}}\big|_p + \Gamma^{y}_{y\tilde{t}}\big|_p
\end{equation*}
The second term on the right hand side vanishes, as we have already seen. On the other hand, the left hand side is given by
\begin{equation*}
 \begin{split}
  g(\nabla_T \partial_y, \partial_y) &= \frac{1}{2}Tg(\partial_y, \partial_y) \\
  &= g([T, \partial_y], \partial_y) = 0
 \end{split}
\end{equation*}
where we have used the fact that $\mathcal{L}_T g = 0$ since $T$ is a Killing vector field, as well as the fact that $[T, \partial_y] = [\partial_{\tilde{t}}, \partial_y] = 0$. Hence we find that
\begin{equation*}
 \Gamma^{\tilde{t}}_{yt}\big|_p = 0
\end{equation*}

Similarly, by using the vector fields $\partial_a$ in place of $\partial_y$ we can also find that
\begin{equation*}
 \begin{split}
  \Gamma^{y}_{a\tilde{t}}\big|_p &= 0 \\
  \Gamma^{a}_{a \tilde{t}}\big|_p &= 0
 \end{split}
\end{equation*}
where in the second line there is \emph{no summation} over the index $a$.

To summarize the above calculations, we find that, with our choice of coordinates, the only Christoffel symbols which can be nonzero at $p$ (and hence along $\gamma$) are 
\begin{equation*}
 \Gamma^{\tilde{t}}_{\tilde{t}a}\ ,\quad \Gamma^a_{\tilde{t}y}\ ,\quad \Gamma^a_{by}
\end{equation*}
where $a \neq b$.

We now define
\begin{equation*}
 |\tilde{x}| := \sqrt{ y^2 + \sum_a (x^a)^2}
\end{equation*}
and we also give labels to certain metric components: we define
\begin{equation*}
 \begin{split}
  g(T, T) &= g_{\tilde{t} \tilde{t}} := a \\
  g(T, \partial_a) &= g_{\tilde{t}a} := b_a
 \end{split}
\end{equation*}
and we note that
\begin{equation*}
 \begin{split}
  |a| &= \mathcal{O}(|\tilde{x}|^2) \\
  |b| &= \mathcal{O}(|\tilde{x}|)
 \end{split}
\end{equation*}
where we have defined
\begin{equation*}
 |b| := \sqrt{ \sum_a (b_a)^2 }
\end{equation*}
Finally, we note that, since $T$ is timelike away from the evanescent ergosurface, we have
\begin{equation*}
 a \leq 0
\end{equation*}

Putting together the calculations above, we conclude that the metric near the point $p$ is given by
\begin{equation*}
 \begin{split}
  g &= 2A \upd \tilde{t} \upd y + \upd y^2 + \delta_{ab}\upd x^a \upd x^b + a \upd \tilde{t}^2 + b_a \upd \tilde{t} \upd x^a 
    + \mathcal{O}(|\tilde{x}|)\upd \tilde{t} \upd y
    + \mathcal{O}(|\tilde{x}|^2)\upd y^2 \\
    &\phantom{=} + \mathcal{O}(|\tilde{x}|^2)\upd y \upd x^a + \mathcal{O}(|\tilde{x}|^2)\upd x^a \upd x^b
 \end{split}
\end{equation*}
Consequently, the inverse metric is given by
\begin{equation*}
 \begin{split}
  g^{-1} &= A^{-2} \partial_{\tilde{t}}^2 + 2A^{-1}\partial_{\tilde{t}}\partial_y + \delta^{ab}\partial_a \partial_b  + \mathcal{O}(|\tilde{x}|) \partial_{\tilde{t}}^2 + \mathcal{O}(|\tilde{x}|)\partial_{\tilde{t}}\partial_y + \mathcal{O}(|\tilde{x}|)\partial_{\tilde{t}}\partial_a + \mathcal{O}(|\tilde{x}|)\partial_y \partial_a \\
  &\phantom{=} + \left( A^{-2}|b|^2 - A^{-2}a + \mathcal{O}(|\tilde{x}|^3) \right)\partial_y^2 + \mathcal{O}(|\tilde{x}|^2)\partial_a \partial_b
 \end{split}
\end{equation*}

Finally, again making use of the expressions for the Christoffel symbols at the point $p$, we find that the wave operator can be expressed as
\begin{equation}
\label{equation wave local coords AF}
 \begin{split}
  \Box_g u &= -A^{-2}\partial_{\tilde{t}}^2 u + 2A^{-1} \partial_{\tilde{t}}\partial_y u + \delta^{ab}\partial_a \partial_b u + A^{-2}\left( |b|^2 - a \right)\partial^2_y u + \mathcal{O}(|\tilde{x}| |\partial u|) + \mathcal{O}(|\tilde{x}||\partial T u|) \\
  &\phantom{=} + \mathcal{O}(|\tilde{x}|^2 |\bar{\partial}\partial u|) + \mathcal{O}(|\tilde{x}|^3 |\partial^2 u|)
 \end{split}
\end{equation}
where we have defined
\begin{equation*}
 \begin{split}
  |\partial u| &:= \sqrt{ |\partial_{\tilde{t}}u|^2 + |\partial_y u|^2 + \sum_a |\partial_a u|^2 } \\
  |\bar{\partial} u| &:= \sqrt{ |\partial_{\tilde{t}}u|^2 + \sum_a |\partial_a u|^2 } \\
 \end{split}
\end{equation*}

We now explain how to adapt this construction to the asymptotically Kaluza-Klein case, with an evanescent ergosurface of the second kind \ref{condition asymp KK}. In this case, we once again take a point $p \in \mathcal{S}\cap\Sigma_t$. We now define a linear subspace of the tangent space of $p$
\begin{equation*}
 G_p \subset \mathcal{T}_p(\mathcal{M}) := \left\{ X \in \mathcal{T}_p(\mathcal{M}) \big| X = L_A \text{\quad for some } A \in \mathfrak{g} \right\}
\end{equation*}
Note that, since the group $\mathcal{G}$ does not necessarily act freely, this subspace is not necessarily isomorphic to $\mathfrak{g}$. Indeed, if $p$ is a fixed point of $\mathcal{G}$, then $G_p$ is trivial.

Note that $V$ is orthogonal to $G_p$, in the sense that
\begin{equation*}
 g(V, X) = 0 \text{ \quad for all } X \in G_p
\end{equation*}
which follows from the definition of an evanescent ergosurface of the second kind. Consequently, the vectors $X$ must either be spacelike or proportional to $V$. However, if a vector $X \in G_p$ were proportional to $V$ then, since $V$ generates an action of $\mathbb{R}$ on $\mathcal{M}$ by isometries, $\mathcal{G}$ would have a subgroup isomorphic to $\mathbb{R}$. But this is impossible, since $\mathcal{G}$ is a compact Lie group. Hence the vectors $X$ must all be spacelike.

Analogously to the previous case, we now define
\begin{equation*}
 Y := V + g(n, V) n
\end{equation*}
as the orthogonal projection of $V$ onto the hypersurface $\Sigma_t$ at the point $p$. Note that $Y$ is nonvanishing, since $V$ is null at $p$ and $\Sigma_t$ is spacelike. Again, we can define the normalised version of $Y$ as
\begin{equation*}
 \hat{Y} := \frac{1}{\sqrt{g(Y,Y)}} Y
\end{equation*}
Note that $Y \notin G_p$ (and hence $\hat{Y} \notin G_p$) because $Y$ is \emph{not} orthogonal to $V$. Again, this follows from the fact that $V$ is null at $p$ and $\Sigma_t$ is spacelike.

We now take an orthonormal basis for $G_p$, which we shall label as $e_A$. Note that this is possible since $G_p$ is spacelike. We now complete the set $\{V, \hat{Y}, e_A\}$ to form a basis for $\mathcal{T}_p(\mathcal{M})$ by adding some vectors $e_a$, which are chosen to satisfy
\begin{equation*}
 \begin{split}
  g(e_a, e_b) &= \delta_{ab} \\
  g(e_a, V) &= 0 \\
  g(e_a, \hat{Y}) &= 0 \\
  g(e_a, e_A) &= 0
 \end{split}
\end{equation*}
We note here that we do not necessarily have $g(e_A, \hat{Y})\big|_p = 0$.

Now we repeat the previous construction to define coordinates. We first consider an affinely parameterised geodesic $\beta_{(y, x^a, z^A)}(s)$ originating at the point $p$ and with initial tangent vector $y\hat{Y} + x^a e_a + z^A e_A$. So, the geodesic $\beta_{(y, x^a, z^A)}(s)$ satisfies
\begin{equation*}
 \begin{split}
  \beta_{(y, x^a, z^A)}(0) &= p \\
  \left.\frac{\partial}{\partial s} \beta_{(y, x^a, z^A)}(s) \right|_{s = 0} &= y\hat{Y} + x^a e_a + z^A e_A
 \end{split}
\end{equation*}

We can use the coordinates $(sy, sx^a, sz^A)$ to label the point reached along this geodesic $\beta_{(y, x^a, z^A)}(s)$ after an affine distance $s$. We define the set
\begin{equation*}
 S_p := \left\{ q \in \mathcal{M} \big| \exists \ s, y, x^a, z^A \in \mathbb{R} \text{ s.t.\ } q = \beta_{(y, x^a, z^A)}(s) \right\}
\end{equation*}
As before, this set is locally a smooth hypersurface near the point $p$. We now extend these coordinates off this hypersurface by defining
\begin{equation*}
 \begin{split}
  \tilde{t}\big|_{S_p} &= 0 \\
  V(\tilde{t}) &= 1 \\
  V(y) = V(x^a) = V(z^A) &= 0
 \end{split}
\end{equation*}
This gives us local coordinates near the geodesic $\gamma$ through $p$ with tangent $V$.

Note that, although the vector fields $\partial_A$ are tangent to the generators of the group action $\mathcal{G}$ \emph{at p}, they do not necessarily remain so. Note also that, in these coordinates, we have
\begin{equation*}
 V = \partial_{\tilde{t}}
\end{equation*}

In analogy to the previous case we define
\begin{equation*}
 A := g(V, \hat{Y})\big|_p
\end{equation*}
We also define
\begin{equation*}
 B_A := g(\hat{Y}, e_A)\big|_p
\end{equation*}
Now, the metric at the point $p$ can be expressed as
\begin{equation*}
 g\big|_p = 2A \upd \tilde{t}\upd y + 2B_A \upd \tilde{t} \upd z^A + \upd y^2 + \delta_{ab}\upd x^a \upd x^b + \delta_{AB}\upd z^A \upd z^B
\end{equation*}
and the inverse metric is
\begin{equation*}
 g^{-1}\big|_p = -A^{-2}( 1 -  |B|^2 )\partial_{\tilde{t}}^2 + 2A^{-1}\partial_{\tilde{t}}\partial_y  - 2A^{-1}B^A \partial_{\tilde{t}}\partial_{A} + \delta^{AB}\partial_A \partial_B + \delta^{ab}\partial_a \partial_b
\end{equation*}
Where we lower and raise the indices $A, B, \ldots$ using the Euclidean metric $\delta_{AB}$ and its inverse, and we have defined $|B|^2 = B^A B_A$. Note that, since $\hat{Y}$ and $e_A$ are spacelike, unit vectors, we have $|B|^2 \leq 1$. Moreover, we cannot have $|B|^2 = 1$ because this would imply that $Y \in G_p$, which, as we have seen above, is impossible. Hence $|B|^2 < 1$, and so the coefficient of $\partial_{\tilde{t}}^2$ in the inverse metric is strictly negative.

As before, using the fact that the curve with coordinates $(\tilde{t}, y, x^a, z^A) = (\tilde{t}_0, sy_0, sx_0^a, sz_0^A)$ is a geodesic with affine parameter $s$, that many of the Christoffel symbols vanish. Specifically,
\begin{equation*}
 \begin{split}
  \Gamma^{\tilde{t}}_{yy}\big|_\gamma &= \Gamma^{\tilde{t}}_{ya}\big|_\gamma = \Gamma^{\tilde{t}}_{yA}\big|_\gamma = \Gamma^{\tilde{t}}_{ab}\big|_\gamma = \Gamma^{\tilde{t}}_{aA}\big|_\gamma = \Gamma^{\tilde{t}}_{AB}\big|_\gamma = 0 \\
  \Gamma^{y}_{yy}\big|_\gamma &= \Gamma^{y}_{ya}\big|_\gamma = \Gamma^{y}_{yA}\big|_\gamma = \Gamma^{y}_{ab}\big|_\gamma = \Gamma^{y}_{aA}\big|_\gamma = \Gamma^{y}_{AB}\big|_\gamma = 0 \\
  \Gamma^{a}_{yy}\big|_\gamma &= \Gamma^{a}_{yb}\big|_\gamma = \Gamma^{a}_{yA}\big|_\gamma = \Gamma^{a}_{bc}\big|_\gamma = \Gamma^{a}_{bA}\big|_\gamma = \Gamma^{a}_{AB}\big|_\gamma = 0 \\
  \Gamma^{A}_{yy}\big|_\gamma &= \Gamma^{A}_{ya}\big|_\gamma = \Gamma^{A}_{yB}\big|_\gamma = \Gamma^{A}_{ab}\big|_\gamma = \Gamma^{A}_{aB}\big|_\gamma = \Gamma^{A}_{BC}\big|_\gamma = 0 \\
 \end{split}
\end{equation*}

In addition, the curve with coordinates $(s, 0, 0, 0)$ is an affinely parameterized geodesic with affine parameter $s$. Hence
\begin{equation*}
 \Gamma^{\tilde{t}}_{\tilde{t}\tilde{t}} = \Gamma^{y}_{\tilde{t}\tilde{t}} = \Gamma^{a}_{\tilde{t}\tilde{t}} = \Gamma^{A}_{\tilde{t}\tilde{t}} = 0
\end{equation*}

Finally, we note that we have the expression
\begin{equation*}
 \nabla_{\partial_y}\partial_{\tilde{t}} = \Gamma^{\tilde{t}}_{y\tilde{t}} \partial_{\tilde{t}} + \Gamma^y_{y\tilde{t}}\partial_y + \Gamma^a_{y\tilde{t}}\partial_a + \Gamma^A_{y\tilde{t}}\partial_A
\end{equation*}
Noting that, at the point $p$ we have $\partial_t = V$, $\partial_y = \hat{Y}$, $\partial_a = e_a$ and $\partial_A = e_A$, we can evaluate the expression above at the point $p$ and then take the inner product with $V$. We find
\begin{equation*}
 g(\nabla_{\partial_y}V, V) = \frac{1}{2}\partial_y g(V,V) = A\Gamma^y_{y\tilde{t}}
\end{equation*}
Now, since $A \neq 0$ and $g(V,V)$ vanishes to (at least) second order on the evanescent ergosurface, we also conclude that
\begin{equation*}
 \Gamma^y_{y\tilde{t}}\big|_p = 0
\end{equation*}

Similarly, by considering $g(\nabla_a V, V)$ and $g(\nabla_A V, V)$ we conclude that
\begin{equation*}
 \Gamma^y_{a\tilde{t}}\big|_p = \Gamma^y_{A\tilde{t}}\big|_p = 0
\end{equation*}

Defining now
\begin{equation*}
 \begin{split}
  |\tilde{x}| &= \sqrt{y^2 + \sum_a (x^a)^2 + \sum_A (z^A)^2} \\
  a &:= g(V,V) = g_{\tilde{t}\tilde{t}} = \mathcal{O}(|\tilde{x}|^2) \\
  b_a &:= g(V, \partial_a) = g_{\tilde{t}a} = \mathcal{O}(|\tilde{x}|) \\
  c_A &:= g(V, \partial_A) = g_{\tilde{t}A} = \mathcal{O}(|\tilde{x}|)
 \end{split}
\end{equation*}
 where, since $V$ is globally causal, we have $a \leq 0$.
 
 We find that the metric near the point $p$ can be expressed as
\begin{equation*}
 \begin{split}
  g &= 2A \upd \tilde{t}\upd y + 2B_A \upd \tilde{t} \upd z^A + \upd y^2 + \delta_{ab}\upd x^a \upd x^b + \delta_{AB}\upd z^A \upd z^B + a \upd\tilde{t}^2 + b_a \upd \tilde{t} \upd x^a + c_A \upd \tilde{t} \upd z^A \\
  &\phantom{=} + \mathcal{O}(|\tilde{x}|)\upd\tilde{t} \upd y + \mathcal{O}(|\tilde{x}|^2)\upd y^2 + \mathcal{O}(|\tilde{x}|^2)\upd y \upd x^a + \mathcal{O}(|\tilde{x}|^2) \upd y \upd z^A + \mathcal{O}(|\tilde{x}|^2) \upd x^a \upd x^b \\
  &\phantom{=} + \mathcal{O}(|\tilde{x}|^2) \upd x^a \upd z^A + \mathcal{O}(|\tilde{x}|^2) \upd z^A \upd z^B
  \end{split}
\end{equation*}
and the inverse metric can be expressed as
\begin{equation}
\label{equation expression for inverse metric}
\begin{split}
	g^{-1}
	&= 
	- A^{-2}(1 - |B|^2)\partial_{\tilde{t}}^2 + 2A^{-1}\partial_{\tilde{t}}\partial_y
	- 2A^{-1}B^A \partial_{\tilde{t}}\partial_{A}
	+ \delta^{AB}\partial_A \partial_B
	+ \delta^{ab}\partial_a \partial_b
	\\
	&\phantom{=}
	+ A^{-2}(|b|^2 + |c|^2 - a)\partial^2_y 
	+ \mathcal{O}(|\tilde{x}|)\partial^2_{\tilde{t}}
	+ \mathcal{O}(|\tilde{x}|)\partial_{\tilde{t}}\partial_y
	+ \mathcal{O}(|\tilde{x}|)\partial_{\tilde{t}}\partial_a
	+ \mathcal{O}(|\tilde{x}|)\partial_{\tilde{t}}\partial_A
	+ \mathcal{O}(|\tilde{x}|^3)\partial^2_y
	\\
	&\phantom{=}
	+ \mathcal{O}(|\tilde{x}|)\partial_{y}\partial_a
	+ \mathcal{O}(|\tilde{x}|)\partial_{y}\partial_A
	+ \mathcal{O}(|\tilde{x}|^2)\partial_a \partial_b
	+ \mathcal{O}(|\tilde{x}|)\partial_{a}\partial_A
	+ \mathcal{O}(|\tilde{x}|)\partial_{A}\partial_B
 \end{split}
\end{equation}

Finally, we note that the wave equation can be expressed as
\begin{equation}
\label{equation wave local coords KK}
 \begin{split}
   \Box_g u &= -A^{-2}(1 - |B|^2)\partial_{\tilde{t}}^2 u + 2A^{-1}\partial_{\tilde{t}}\partial_y u  - 2A^{-1}B^A \partial_{\tilde{t}}\partial_{A} u + \delta^{AB}\partial_A \partial_B u + \delta^{ab}\partial_a \partial_b u \\
   &\phantom{=} + A^{-2}(|b|^2 + |c|^2 - a)\partial^2_y u + \mathcal{O}(|\tilde{x}|\partial V u|) + \mathcal{O}(|\tilde{x}^2| |\bar{\partial}\partial u|) + \mathcal{O}(|\tilde{x}| |\partial u|) + \mathcal{O}(|\tilde{x}|^3 |\partial^2 u|) 
  \end{split}
\end{equation}

\section{Instability in the general case, without additional symmetry assumptions}
\label{section general case}

The purpose of this section is to establish the existence of some kind of linear ``instability'' in the general case of a spacetime with an evanescent ergosurface. As above, we shall only consider spacetimes which are asymptotically flat or asymptotically Kaluza-Klein, and which do not have event horizons.

As explained in the introduction, we allow for two different types of instability, which we refer to as case \ref{case (A)} and case \ref{case (B)}. So far, we have only given a ``rough'' version of the statement referring to these instabilities. Here, we state the primary theorem of this paper, and in doing so we make precise the two kinds of instabilities which may be present.

\begin{theorem}[Evanescent ergosurface instability, general case]
\label{theorem main theorem general case}
	Let $(\mathcal{M}, g)$ be a smooth, Lorentzian manifold which is stationary and either asymptotically flat in the sense of subsection \ref{subsection asymptotically flat}, or asymptotically Kaluza-Klein in the sense of subsection \ref{subsection asymptotically KK}. Suppose that the manifold possesses an evanescent ergosurface in the sense of \ref{condition asymp flat} (in the asymptotically flat case) or \ref{condition asymp KK} (in the asymptotically Kaluza-Klein case). Finally, suppose that the manifold possesses a discrete symmetry $\mathcal{I}$ as in section \ref{section the discrete isometry}.
	
	Then at least one of the following applies:
	\begin{enumerate}[label=(\Alph*)]
		\item For any $C > 0$, and any open set $\mathcal{U}_0 \subset \Sigma_0$ such that $(\mathcal{S} \cap \Sigma_0) \subset \mathcal{U}_0$, there exists a solution $\phi$ to the wave equation $\Box_g \phi = 0$ arising from smooth, compactly supported (and $\mathcal{G}$-invariant, in the Kaluza-Klein case) data, and a time $\tau$ such that
			\begin{equation*}
				\frac{\mathcal{E}^{(N)}_{\mathcal{U}}[\phi_C](\tau)}{\mathcal{E}^{(N)}[\phi_C](0)} \geq C
			\end{equation*}
		where we recall that $\mathcal{E}^{(N)}_{\mathcal{U}}[\phi_C](\tau)$ measures the non-degenerate energy of the wave $\phi_C$ in the set $\mathcal{U}_{\tau}$, which is the time translate of the set $\mathcal{U}_0$ onto the surface $\Sigma_\tau$.
		\item For any open set $\mathcal{U}_0 \subset \Sigma_0$ with $(\mathcal{S} \cap \Sigma_0) \subset \mathcal{U}_0$, there exists some constant $\mathring{C}>0$ and a solution $\phi$ to the wave equation $\Box_g \phi = 0$, arising from smooth, compactly supported data, such that for \emph{all} times $\tau > 0$, we have 
			\begin{equation*}
				\frac{\mathcal{E}^{(N)}_{\mathcal{U}}[\phi_C](\tau)}{\mathcal{E}^{(N)}[\phi_C](0)} \geq \mathring{C}
			\end{equation*}
		Furthermore we have the following pointwise blowup behaviour: there exists a constant $c > 0$ such that, given any set $\mathcal{U}_\epsilon \subset \Sigma_0$ with $\mathcal{S} \cap \mathcal{U}_\epsilon \neq \emptyset$ and $\text{Volume}(\mathcal{U}_\epsilon) = \epsilon$, there exists a solution $\phi$ to the wave equation $\Box_g \phi = 0$ and a time $\tau$ such that
			\begin{equation*}
				|| Y \phi ||_{L^\infty [\mathcal{U}_\epsilon](\tau)} \geq \frac{c}{\epsilon} 
			\end{equation*}			
	\end{enumerate}
\end{theorem}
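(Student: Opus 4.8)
The plan is to graft a local construction near the evanescent ergosurface onto the global boundedness-and-decay machinery of \cite{Moschidis2016, Moschidis2015}, using the discrete isometry to run the wave equation both forwards and backwards in time. Throughout, ``the Killing field'' means $T$ in the asymptotically flat case and $V$ in the Kaluza--Klein case, where one restricts to $\mathcal{G}$-invariant waves and uses Proposition \ref{proposition V energy current degenerate}. \emph{Step 1 (the fundamental data).} First I would construct, for any $C>0$, any $\mathcal{U}_0\supset\mathcal{S}\cap\Sigma_0$ and any time $\tau$, smooth compactly supported ($\mathcal{G}$-invariant) data on $\Sigma_\tau$, supported in $\mathcal{U}_\tau$, with $\mathcal{E}^{(N)}_{\mathcal{U}}[\phi](\tau)$ arbitrarily large and $\mathcal{E}^{(T)}[\phi]/\mathcal{E}^{(N)}_{\mathcal{U}}[\phi](\tau)\leq C^{-1}$. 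This uses the adapted coordinates of Section \ref{section adapted coordinates} about the null geodesic $\gamma\subset\mathcal{S}$: by \eqref{equation energy current} the density of the conserved energy is $\lesssim |T\phi|^2 + |g(T,T)|\,|\hat Y\phi|^2 + \sum_{a\geq 2}|e_a\phi|^2$ with $|g(T,T)|=\mathcal{O}(|\tilde x|^2)$, whereas the non-degenerate density is $\sim|\partial\phi|^2$. Taking $\phi$ concentrated in a tube $\{|\tilde x|\lesssim\delta_0\}$ about $\gamma$, oscillating only in the $y$ direction at scale $\lambda\ll\delta_0^2$, and choosing $n\phi|_{\Sigma_\tau}$ so that $T\phi\equiv 0$ on $\Sigma_\tau$ (possible since $g(T,n)$ is bounded below near $\mathcal{S}$), one obtains $\mathcal{E}^{(N)}[\phi](\tau)\sim\lambda^{-2}$ and $\mathcal{E}^{(T)}[\phi]\lesssim\delta_0^2\lambda^{-2}$, so the ratio is $\gtrsim\delta_0^{-2}$ while the overall scale is governed by $\lambda$; in the Kaluza--Klein case the reduction of Section \ref{section estimates in KK spaces} contributes only the harmless $\mathcal{O}(r^{-1-\alpha})$ terms, which are irrelevant on this bounded region.

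\emph{Step 2 (reduction to a boundedness statement).} Suppose \ref{case (A)} fails; then there exist $C_0>0$ and $\mathcal{U}_0^\ast$ with $\mathcal{E}^{(N)}_{\mathcal{U}^\ast}[\phi](\tau)<C_0\,\mathcal{E}^{(N)}[\phi](0)$ for every solution from compactly supported data and every $\tau$. Since the Killing field is uniformly timelike (resp.\ non-degenerate for $\mathcal{G}$-invariant waves, by Proposition \ref{proposition V energy current degenerate}) on $\mathcal{M}\setminus\mathcal{U}^\ast$, one has $\mathcal{E}^{(N)}\lesssim\mathcal{E}^{(N)}_{\mathcal{U}^\ast}+\mathcal{E}^{(T)}$ on every slice, and $\mathcal{E}^{(T)}$ is conserved; combining these yields genuine energy boundedness, $\mathcal{E}^{(N)}[\phi](\tau)\lesssim\mathcal{E}^{(N)}[\phi](0)$, and by the discrete isometry the same holds for the backward evolution. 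With boundedness in hand, the Carleman estimates of \cite{Moschidis2016}, the $r^p$-weighted estimates of \cite{Moschidis2015} adapted to the Kaluza--Klein case in Section \ref{section estimates in KK spaces}, and the ``enclosed region'' hypothesis of point $(4)$ (resp.\ $(5)$) of the definition of an evanescent ergosurface, give decay of the local energy \emph{away from} $\mathcal{S}$: $\mathcal{E}^{(N)}_{\mathcal{K}}[\phi](\tau)\to 0$ as $\tau\to\infty$ for every precompact $\mathcal{K}$ with $\overline{\mathcal{K}}\cap\mathcal{S}=\emptyset$.

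\emph{Step 3 (non-dispersion and assertion \ref{case (B)}).} Feeding the Step 1 data on $\Sigma_\tau$ into the backward boundedness estimate gives $\mathcal{E}^{(N)}[\phi](0)\sim\lambda^{-2}$, and since $\mathcal{E}^{(T)}[\phi]\lesssim\delta_0^2\lambda^{-2}\ll\lambda^{-2}$ the inequality $\mathcal{E}^{(N)}\lesssim\mathcal{E}^{(N)}_{\mathcal{U}}+\mathcal{E}^{(T)}$ forces $\mathcal{E}^{(N)}_{\mathcal{U}}[\phi](0)\gtrsim\lambda^{-2}$: the local energy near $\mathcal{S}$ does not disperse to the past. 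Transporting by the discrete isometry $\mathcal{I}$ (which fixes $\mathcal{S}$, can be arranged to fix $\mathcal{U}$, and swaps past and future) yields, for arbitrarily large $\tau$, solutions $\tilde\phi_\tau$ with smooth compactly supported data on $\Sigma_0$ and $\mathcal{E}^{(N)}_{\mathcal{U}}[\tilde\phi_\tau](\tau)\gtrsim\mathcal{E}^{(N)}[\tilde\phi_\tau](0)$. By the decay of Step 2, this non-vanishing local energy is carried by no fixed compact set disjoint from $\mathcal{S}$, so as $\tau\to\infty$ it must concentrate onto $\mathcal{S}$ itself (a measure-zero set). A weak-$\ast$ compactness argument in the energy space, as in \cite{Moschidis2016}, extracts a limiting solution $\Phi$: either $\Phi\not\equiv 0$ and, using the decay away from $\mathcal{S}$ together with conservation of $\mathcal{E}^{(T)}$, one reads off $\mathcal{E}^{(N)}_{\mathcal{U}}[\Phi](\tau)\geq\mathring C\,\mathcal{E}^{(N)}[\Phi](0)$ for all $\tau$, the first half of \ref{case (B)}; or $\Phi\equiv 0$, in which case the energy of the $\tilde\phi_\tau$ genuinely concentrates on $\mathcal{S}$. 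For the pointwise blow-up one picks, for each $\epsilon$, a Step 1 solution whose trapped energy at the collapse time $\tau$ is tuned (through $\lambda$) to sit inside the given $\mathcal{U}_\epsilon$ with total size $\sim\epsilon^{-1}$; since the conserved energy density degenerates precisely in the $\hat Y$ direction, this energy is carried essentially entirely by $|Y\phi|^2$, and a mean-value estimate over the volume-$\epsilon$ set gives $\|Y\phi\|_{L^\infty[\mathcal{U}_\epsilon]}(\tau)\gtrsim\epsilon^{-1}$, just as in the Aretakis instability.

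\emph{Main obstacle.} The delicate step is the compactness argument in Step 3: one is working precisely in the regime where no a priori decay holds up to $\mathcal{S}$ (that degeneracy is the whole point), so a weak limit of bounded-energy solutions may shed energy, and one must show that the shed energy can neither radiate to infinity nor escape into the region enclosed by $\mathcal{S}$ --- this is where the decay-away-from-$\mathcal{S}$ estimates and the enclosed-region hypotheses of point $(4)$/$(5)$ are indispensable --- so that the only alternatives are a genuine non-decaying limit solution or concentration onto $\mathcal{S}$, which are exactly the two clauses of \ref{case (B)}. This is the degenerate analogue of the mechanism by which \cite{Moschidis2016} manufactured a trapped solution from a negative-energy state; carrying it out also requires that the Carleman, $r^p$-weighted, and integrated-local-energy estimates all survive on asymptotically Kaluza--Klein manifolds for $\mathcal{G}$-invariant waves, which is what Section \ref{section estimates in KK spaces} establishes.
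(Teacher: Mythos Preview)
Your proof has the right architecture — assume (A) fails, derive boundedness, invoke Moschidis's decay machinery away from $\mathcal{S}$, then exhibit case (B) — and your Step 2 matches the paper. But there is a genuine mismatch between your Step 1 and the decay input you actually have available in Step 3.

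Your Step 1 construction (set $T\phi\equiv 0$ on $\Sigma_\tau$, let $\phi$ oscillate in $y$ at scale $\lambda\ll\delta_0^2$) is indeed much simpler than the paper's, and it does give $\mathcal{E}^{(T)}[\phi]/\mathcal{E}^{(N)}[\phi]\lesssim\delta_0^2$ at the level of $\phi$. Combined with approximate conservation of $\mathcal{E}^{(N)}$ (forward boundedness plus backward boundedness via $\mathscr{T}$), this even suffices for the \emph{first} clause of (B). However, the decay input from \cite{Moschidis2016} — as stated in Proposition~\ref{proposition decay away from extended ergosurface} and the Lemma~4.2 limit argument — gives $\mathcal{E}^{(N)}_{\mathcal{K}}[T\phi](\tau)+\mathcal{E}^{(N)}_{\mathcal{K}}[T^2\phi](\tau)\to 0$ for $\mathcal{K}$ away from $\mathcal{S}$, \emph{not} $\mathcal{E}^{(N)}_{\mathcal{K}}[\phi](\tau)\to 0$. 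You need that decay to force energy concentration onto $\mathcal{S}$ for the pointwise blow-up. If you try to transfer your construction to the $T\phi$ level, it collapses: with $T\phi\equiv 0$ on $\Sigma_\tau$ all tangential derivatives $\partial_y(T\phi),\partial_a(T\phi)$ vanish there, so $\mathcal{E}^{(N)}[T\phi]$ is carried entirely by $T^2\phi$, which is \emph{constrained by the wave equation} via \eqref{equation TTphi}. The dominant surviving term is $(|b|^2-a)\partial_y^2\phi\sim\delta_0^2\lambda^{-2}$, and it enters $\mathcal{E}^{(N)}[T\phi]$ and $\mathcal{E}^{(T)}[T\phi]$ with the same coefficient — so the ratio is $\mathcal{O}(1)$, not small.

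This is exactly why the paper's construction (Claim~\ref{claim initial data 1}) is so much more intricate: one must take $T\phi$ itself oscillating in $y$ (so that $\partial_y T\phi$ dominates $\mathcal{E}^{(N)}[T\phi]$), and then arrange that $T^2\phi$ — which by \eqref{equation TTphi} is $2A\partial_y T\phi + A^2\Delta\phi + (|b|^2-a)\partial_y^2\phi+\cdots$ — is small. Choosing the transverse profile $F_2$ to be an eigenfunction of the harmonic-oscillator problem \eqref{equation poisson F_2} makes the three leading terms cancel exactly, and the Agmon estimates of Proposition~\ref{proposition Agmon bounds} control the errors coming from the cut-off $\chi_2$ in the ``forbidden region''. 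Without this cancellation mechanism you cannot produce the disparity $\mathcal{E}^{(T)}[T\phi]\ll\mathcal{E}^{(N)}[T\phi]$, and hence cannot combine non-decay near $\mathcal{S}$ with the available decay away from $\mathcal{S}$ to get concentration and $L^\infty$ blow-up.

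Separately, your Step 3 runs through a weak-$\ast$ compactness extraction with a dichotomy ($\Phi\not\equiv 0$ versus $\Phi\equiv 0$); in the latter branch you lose the first clause of (B), so the two clauses are not established simultaneously. The paper avoids this by arguing by contradiction (Proposition~\ref{proposition non decay on ergosurface}): suppose \emph{every} solution's local energy at the $T\phi$ level decays in a neighbourhood of $\mathcal{S}$; the discrete isometry then gives backward decay; feeding in the Claim~\ref{claim initial data 1} data at a late time $\tau_0$ and evolving backwards produces a solution with $\mathcal{E}^{(N)}[T\phi](0)\lesssim 1$ but $\mathcal{E}^{(N)}[T\phi](\tau_0)\gtrsim\delta^{-1}$, contradicting boundedness. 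This yields directly a single solution with $\mathcal{E}^{(N)}_{\mathcal{U}}[T\phi](t)+\mathcal{E}^{(N)}_{\mathcal{U}}[T^2\phi](t)\geq\mathring C$ for all $t>0$, and the concentration and blow-up then follow by combining with Proposition~\ref{proposition decay away from extended ergosurface} and \eqref{equation decay on interior}. No compactness subtleties arise.
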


In this general setting we cannot establish many details of the instability. Note that the two possible behaviours are not mutually exclusive. Note also that, since we will argue by contradiction, we will not obtain any details of the initial data which gives rise to these instabilities, other than smoothness and compact support: in particular, our proof is not constructive. Finally, we  remark that although we have several examples of manifolds giving rise to the behaviour of case \ref{case (A)} (for example, supersymmetric microstate geometries), we do not have an example of a manifold giving rise to the behaviour of case \ref{case (B)}.

This is in marked contrast to the situation in which an additional symmetry is present, discussed in section \ref{section additional symmetry} below. In that case, we can rule out case \ref{case (B)}, and explicitly construct data giving rise to the behaviour in case \ref{case (A)}. Moreover, we can also establish some bounds on the time at which the local energy becomes large, and the required support of the initial data. Finally, when this additional symmetry is present we can also construct (possibly non-smooth, and non-compactly supported) initial data such that the local energy of the resulting solution is actually \emph{unbounded} in time.

The structure of the proof of theorem \ref{theorem main theorem general case} is a little convoluted, so for clarity we outline it below.

\begin{enumerate}
 \item \label{step 1} We begin by assuming that case \ref{case (A)} does \emph{not} hold, that is, we assume that the local energy is bounded by some multiple of its initial value. This will be referred to as a \emph{nondegenerate energy boundedness} statement.
 \item There are now two possibilities: either the local energy of all suitable waves \emph{decays} over time, or it does not.
  \begin{enumerate}
  \item If the local energy does not decay, then we are led to an Aretakis-type instability and case \ref{case (B)}.
  \item On the other hand, if the local energy \emph{does} decay, then we can construct initial data for a wave whose local energy is amplified by an arbitrarily large factor, i.e.\ case \ref{case (A)}. This contradicts the boundedness assumption made in step \ref{step 1} above. See figure \ref{figure instability} for an overview of the construction of the instability in this case.
  \end{enumerate}
 \item From the above argument we see that, if \ref{case (A)} does not hold, then we must have \ref{case (B)}.
 \item Hence we must have at least one of \ref{case (A)} or \ref{case (B)}.
\end{enumerate}

\subsection{The nondegenerate energy boundedness assumption}

In order to make progress we will need to assume a suitable \emph{nondegenerate energy boundedness} inequality holds, which is essentially\footnote{Technically, the negation of the statement of case \ref{case (A)} only entails that there exists some open set $\mathcal{U}_0$, which includes the ergosurface, such that a nondegenerate energy boundedness statement holds in that region. However, away from the ergoregion, the conservation of the $T$ energy already gives the required bound.} the negation of the statement of case \ref{case (A)}. Note, however, that this is \emph{not} an assumption which limits the scope of the theorem: if this assumption does not hold, then we can show that we have an instability of type \ref{case (A)}. On the other hand, after making this assumption, we will be able to show that a consequence of this assumption is an Aretakis-type instability in the general case. Note that, in section \ref{section additional symmetry} we will show that the nondegenerate energy boundedness inequality assumption leads directly to a contradiction in the case where an additional symmetry is present.

To make the statement precise, we make the following assumption:

\begin{assumption}
\label{assumption boundedness}
 There exists some constant $C^{(N)} > 0$ such that, for all ($\mathcal{G}$-invariant) solutions $\phi$ to the linear wave equation $\Box_g \phi = 0$ and all $t \in \mathbb{R}$, 
\begin{equation*}
\label{equation boundedness}
 \mathcal{E}^{(N)}[\phi](t) \leq C^{(N)} \mathcal{E}^{(N)}[\phi](0)
\end{equation*}
\end{assumption}

Note that, if $N$ were a Killing field, then it would be easy to verify \ref{assumption boundedness} using the energy estimate \eqref{equation energy estimate} associated with $N$. Likewise, if there were to exist a uniformly timelike Killing vector field, then it would be easy to verify assumption \ref{assumption boundedness}, even if $N$ is not chosen to be this Killing vector field\footnote{In this case, the energy associated with $N$ will not generally be conserved, however, it will still remain bounded: the energy associated with $N$ and the energy associated with the timelike Killing vector field provide equivalent norms.}. However, the geometries we are studying only possess a globally causal (and not globally timelike!) Killing field. Thus we cannot straightforwardly verify assumption \ref{assumption boundedness}, and indeed, in some cases it can lead to a contradiction. For now, we shall proceed, making the assumption \ref{assumption boundedness}.

\subsection{Local energy decay away from the evanescent ergosurface}
\label{subsection local decay away from ergosurface}

The results of \cite{Moschidis2016} imply the decay of the local energy of waves away from the evanescent ergosurface, assuming a boundedness statement of the form \ref{assumption boundedness} holds. To be precise, the following proposition is a very slight adaptation of proposition ($4.1$) of \cite{Moschidis2016}, making use of the comments above regarding asymptotically Kaluza-Klein manifolds.

We first need to define (in analogy to the ``extended ergoregions'' of \cite{Moschidis2016}) the \emph{extended ergosurface}:
\begin{definition}[The extended ergosurface]
	Suppose that $(\mathcal{M}\setminus\mathcal{S})$ consists of a number of connected components, which we can seperate into two types: those that include an asymptotic region, and those that do not. For simplicity, suppose that there is only one component which includes an asymptotic region. We label this region $\mathcal{M}_{(\text{outer})}$. Then we define the extended ergosurface:
	\begin{equation*}
		\mathcal{S}_{(\text{ext})} := \mathcal{M} \setminus \mathcal{M}_{(\text{outer})}
	\end{equation*}
\end{definition}

\begin{proposition}
\label{proposition decay away from extended ergosurface}
 Let $\mathcal{M}$ be an asymptotically flat or asymptotically Kaluza-Klein manifold with an evanescent ergosurface in the sense of \ref{condition asymp flat}, or an asymptotically Kaluza-Klein manifold with an evanescent ergosurface in the sense of \ref{condition asymp KK}. Let $\phi$ be a smooth ($\mathcal{G}$-invariant) solution to $\Box_g \phi = 0$ arising from compactly supported initial data. Suppose in addition that the initial energy of $\phi$ and its first three $T$ derivatives is finite, that is,
 \begin{equation*}
  \sum_{j = 0}^{3} \mathcal{E}^{(N)}[T^j \phi](0) < \infty
 \end{equation*}
 Finally, suppose that the boundedness statement \ref{assumption boundedness} holds.

 Then, for any $\delta > 0$ let $\mathcal{U}_0 \subset \Sigma_0$ be any compact set such that the distance\footnote{The distance can be measured using the induced Riemannian metric on $\Sigma_0$.} from $\mathcal{U}_0$ to $\mathcal{S}_{(\text{ext})} \cap \Sigma_0$ is at least $\delta$. Then, for any $\epsilon > 0$ there is a time $\tau > 0$ such that
\begin{equation}
 \mathcal{E}^{(N)}_{\mathcal{U}}[T\phi](\tau) + \mathcal{E}^{(N)}_{\mathcal{U}}[T^2 \phi](\tau) < \epsilon
\end{equation}

\end{proposition}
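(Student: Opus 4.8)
The plan is to deduce this from Proposition 4.1 of \cite{Moschidis2016}, which is precisely a statement of this form for asymptotically flat manifolds carrying an ergoregion but no horizon, once one checks that none of the ingredients of its proof genuinely require the ergoregion to be open or nonempty. Concretely, I would assemble the argument as follows. First, since $\phi$ arises from compactly supported data, the conserved energy $\mathcal{E}^{(T)}[\phi]$ (respectively $\mathcal{E}^{(V)}[\phi]$ in the Kaluza--Klein case) is finite and constant in $t$; the same holds for $T^j\phi$, $j\le 3$, which also solve $\Box_g(\cdot)=0$ because $T$ is Killing. Second, Assumption \ref{assumption boundedness} gives a uniform bound on $\mathcal{E}^{(N)}[T^j\phi](t)$, $j\le 3$, in terms of the (finite) initial data. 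Third, away from $\mathcal{S}_{(\text{ext})}$ one has an \emph{integrated local energy decay} (ILED) estimate for the bounded-frequency part of the solution, obtained from the Carleman estimates of \S6 of \cite{Moschidis2016} together with the $p$-weighted estimates of \cite{Moschidis2015} governing the asymptotic region; these are exactly the two families of estimates discussed in \S\ref{section estimates in KK spaces}. Fourth, combining the frequency-localised ILED with the boundedness assumption through the iteration scheme of \cite{Moschidis2015, Moschidis2016} (letting the frequency cut-off tend to infinity) produces quantitative, e.g.\ logarithmic, decay of the local non-degenerate energy, and in particular, for each $\epsilon>0$, a time $\tau$ with $\mathcal{E}^{(N)}_{\mathcal{U}}[T\phi](\tau)+\mathcal{E}^{(N)}_{\mathcal{U}}[T^2\phi](\tau)<\epsilon$. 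The reason the statement is phrased for $T\phi$ and $T^2\phi$ rather than for $\phi$ itself is that each application of the ILED estimate loses a derivative, so controlling the local energy of $T\phi$ and $T^2\phi$ requires a priori control of $\mathcal{E}^{(N)}[T^j\phi](0)$ for $j$ up to $3$ --- precisely the hypothesis imposed.

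The only structural departures from the setting of \cite{Moschidis2016} are (i) the ``ergoregion'' is degenerate, being the evanescent ergosurface $\mathcal{S}$ together with any region it encloses, and (ii) we allow Kaluza--Klein asymptotics. For (i): the trapping that obstructs ILED is concentrated near $\mathcal{S}_{(\text{ext})}$ (which, by the Proposition of \S\ref{section evanescent ergosurfaces}, contains a trapped null geodesic tangent to $T$, resp.\ $V$), and the hypothesis that $\mathcal{U}_0$ lies at distance at least $\delta$ from $\mathcal{S}_{(\text{ext})}\cap\Sigma_0$ is what lets the Carleman estimate be run on a neighbourhood of $\mathcal{U}$ disjoint from $\mathcal{S}_{(\text{ext})}$. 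One checks that the Carleman weights of \cite{Moschidis2016}, built from the distance to the (extended) ergoregion, degenerate gracefully as that region shrinks to $\mathcal{S}_{(\text{ext})}$, and that the argument only ever uses that $|g(T,T)|$ (resp.\ the quantity in point (4) of \ref{condition asymp KK}) is bounded below away from $\mathcal{S}$ --- which is exactly what conditions \ref{condition asymp flat} and \ref{condition asymp KK} provide. The unique-continuation-type hypothesis (point (4), resp.\ point (5), cf.\ Remark \ref{remark unique continuation}) is what rules out a residual non-decaying solution supported inside $\mathcal{S}_{(\text{ext})}$, so that decay outside $\mathcal{S}_{(\text{ext})}$ is genuine rather than vacuous.

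For (ii), I would invoke \S\ref{section estimates in KK spaces} directly: $\mathcal{G}$-invariance of the data is propagated by the equation, and a $\mathcal{G}$-invariant $\phi$ descends, in the asymptotic region, to a function $\bar\phi$ on the orbit space solving $\Box_{\bar g}\bar\phi + F^a e_a\bar\phi = 0$ with $|F^a| = \mathcal{O}(r^{-1-\alpha})$; the computations of \S\ref{section estimates in KK spaces} show that such an inhomogeneity leaves the Carleman estimates of \cite{Moschidis2016} untouched and leaves a sufficient range of $p$-weighted estimates intact --- enough to drive polynomial, hence, combined with boundedness, logarithmic decay of bounded-frequency waves. Away from the asymptotic region the geometry is that of a fixed compact piece of $\mathcal{M}$ and no modification is needed. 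Threading these through the iteration of the first paragraph gives the conclusion in the Kaluza--Klein case as well.

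I expect the main obstacle to be bookkeeping rather than a genuinely new difficulty: one must verify, essentially line by line, that the ILED/Carleman machinery of \cite{Moschidis2016} never uses openness or non-emptiness of the ergoregion, and --- in the Kaluza--Klein case --- that the error term $F^a e_a\bar\phi$ is harmless at every stage where an energy identity or a Carleman estimate is invoked (this is the point of the model computation in \S\ref{section estimates in KK spaces}, but it must be confirmed to apply uniformly on the actual orbit-space geometry, not merely in the Minkowski toy model). Everything else --- conservation of $\mathcal{E}^{(T)}$, Assumption \ref{assumption boundedness}, the derivative-counting that forces the three-derivative hypothesis, and the final pigeonhole extraction of the time $\tau$ --- is routine once those two points are settled.
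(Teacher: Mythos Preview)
Your proposal is correct and matches the paper's approach: the paper simply states that this is a very slight adaptation of Proposition~4.1 of \cite{Moschidis2016} (which gives an \emph{integrated} version of the same inequality), combined with the mean value theorem to extract the time $\tau$, and with the Kaluza--Klein modifications of \S\ref{section estimates in KK spaces}. One small point: the unique-continuation hypothesis you invoke is not actually needed for \emph{this} proposition, since $\mathcal{U}_0$ is required to lie at positive distance from the entire extended ergosurface $\mathcal{S}_{(\text{ext})}$; that hypothesis enters only in the subsequent subsection to handle decay \emph{inside} $\mathcal{S}_{(\text{ext})}$.
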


Note that this proposition actually follows from an application of the mean value theorem to the proposition given in \cite{Moschidis2016}, which establishes a very similar inequality for an \emph{integrated} energy quantity. However, we have chosen to present the proposition in the form which will be most useful for our purposes.

Note also that the distance to the extended evanescent ergosurface, $\delta$, can be chosen to be as small as we like, although the time $\tau$ taken to decay will depend on $\delta$. Hence, this proposition establishes decay of the local energy everywhere \emph{away from the ergosurface}. This will play an important role in our argument for instability. Indeed, if this decay can be extended to cover the ergoregion as well, then we will find a contradiction with the boundedness assumption \ref{assumption boundedness}. On the other hand, if this decay cannot be extended to the ergosurface, then we are faced with a situation in which the energy decays everywhere except for on the ergosurface. In this case, an instability of a very similar kind to that encountered in extremal black holes is present.

\subsection{Local energy decay ``inside'' the evanescent ergosurface}

In situations where the evanescent ergosurface divides the manifold $\mathcal{M}$ into an ``inside'' and an ``outside'', we also need to establish energy decay ``inside'' the ergosurface. In other words, we need to establish decay in the set $\mathcal{S}_{(\text{ext})} \setminus \mathcal{S}$, if this is nonempty.

Here, we can use lemma $4.2$ of \cite{Moschidis2016}, which we quote\footnote{with a very slight modification to account for the lack of a horizon.} here for convenience:

\begin{lemma}[Lemma $4.2$ of \cite{Moschidis2016}]

	Let $\phi$ be a smooth solution to $\Box_g \phi = 0$ arising from compactly supported initial data, such that
	\begin{equation*}
		\sum_{j = 0}^3 \mathcal{E}^{(N)}[T^j \phi](0) < \infty
	\end{equation*}
	Define the function $\psi_\tau : \mathcal{M} \rightarrow \mathbb{R}$ as
	\begin{equation*}
		\psi_\tau (t, x) := \begin{cases}
			T\phi(t+\tau, x) \quad &, \quad t \geq -\tau,
			\\
			0 \quad &, \quad t < \tau
			\end{cases}
	\end{equation*}
	Then, there exists an increasing sequence $\{\tau_n\}_{n \in \mathbb{N}}$ of non-negative integers and a function $\tilde{\psi}$, with $\tilde{\psi}$, $T\tilde{\psi} \in H^1_{\text{loc}}(\mathcal{M})$ such that $\Box_g \tilde{\psi} = 0$ and
	\begin{equation}
		\int_{-\tau_*}^{\tau_*} \left( \mathcal{E}^{(N)}[\tilde{\psi}](\tau) + \mathcal{E}^{(N)}[T\tilde{\psi}] \right) \upd \tau < \infty \quad \text{ for any } \tau_* > 0
	\end{equation}
	and also
	\begin{equation}
		\tilde{\psi} \equiv 0 \quad \text{ on } \mathcal{M}\setminus \mathcal{S}_{(\text{ext})}
	\end{equation}
	
	Moreover, $(T\phi_{\tau_n}, T^2 \phi_{\tau_n}) \rightarrow (\tilde{\psi}, T\tilde{\psi})$ weakly in $H^1_{(\text{loc})}(\mathcal{M}) \times H^1_{(\text{loc})}(\mathcal{M})$, strongly in $H^1_{(\text{loc})}(\mathcal{M}\setminus \mathcal{S}_{\text{ext}}) \times H^1_{(\text{loc})}(\mathcal{M}\setminus \mathcal{S}_{\text{ext}})$ and strongly in $L^2_{(\text{loc})}(\mathcal{M}) \times L^2_{(\text{loc})}(\mathcal{M})$ in the following sense:
	\begin{itemize}
		\item For any compactly supported test functions $\{\zeta_j \}_{j=0,1} \in L^2(\mathcal{M})$ and compactly supported vector fields $\{X_j \}_{j=0,1}$ on $\mathcal{M}$ such that $|X_j|_{g_{(\text{ref})}} \in L^2(\mathcal{M})$:
		\begin{equation*}
			\lim_{n\rightarrow \infty} \sum_{j=0}^1 \int_{\mathcal{M}} \bigg(
				g_{(\text{ref})} \left( \nabla (T^j \psi_{\tau_n} - T^j \tilde{\psi} \ , \ X_j \right)
				+ (T^j \psi_{\tau_n} - T^j \tilde{\psi} ) \zeta_j
			\bigg) \dVol = 0
		\end{equation*}
		\item For any compact subset $\mathcal{K} \subset \mathcal{M}$ and any $\delta > 0$ 
		\begin{equation*}
			\lim_{n\rightarrow \infty} \bigg(
				\sum_{j=0}^1 \int_{\mathcal{K}}
					|T^j \psi_{\tau_n} - T^j \tilde{\psi}|^2 \dVol
				+ \sum_{j=0}^1 \int_{\mathcal{K} \setminus (\mathcal{S}_{(\text{ext})})_{(\delta)}}
					| \nabla (T^j \psi_{\tau_n}) - \nabla (T^j \tilde{\psi}) |^2 _{g_{(\text{ref})}}
					\dVol
				\bigg) = 0
		\end{equation*}
	\end{itemize}
	
	where $g_{(\text{ref})}$ is an arbitrarily chosen smooth, $T$-invariant Riemannian metric on $\mathcal{M}$.
	
\end{lemma}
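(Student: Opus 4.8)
The plan is to establish the lemma by a weak-compactness extraction, following \emph{mutatis mutandis} the proof of Lemma $4.2$ of \cite{Moschidis2016}; the only ingredients beyond that argument are the local energy decay Proposition \ref{proposition decay away from extended ergosurface} (which, in the asymptotically Kaluza-Klein case, already builds in the modifications of Section \ref{section estimates in KK spaces}) and the boundedness Assumption \ref{assumption boundedness}, in force throughout this section. Fix once and for all the auxiliary smooth, $T$-invariant Riemannian metric $g_{(\text{ref})}$ with respect to which the local Sobolev norms are measured.

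\emph{Uniform bounds, extraction, and identification of the limit.} Since $T$ is Killing, $\Box_g$ commutes with $T$, so $T\phi$ and $T^2\phi$ solve the wave equation, and $\psi_{\tau_n}$ agrees — away from its truncation locus $\{t=-\tau_n\}$ — with a time-translate of $T\phi$, so that $\mathcal{E}^{(N)}[\psi_{\tau_n}](t)=\mathcal{E}^{(N)}[T\phi](t+\tau_n)$ and $\mathcal{E}^{(N)}[T\psi_{\tau_n}](t)=\mathcal{E}^{(N)}[T^2\phi](t+\tau_n)$. By Assumption \ref{assumption boundedness} the right-hand sides are bounded by $C^{(N)}\big(\mathcal{E}^{(N)}[T\phi](0)+\mathcal{E}^{(N)}[T^2\phi](0)\big)$, uniformly in $n$ and in $t$; since the non-degenerate energy is equivalent to the squared $L^2$ norm of the full gradient, $(\psi_{\tau_n},T\psi_{\tau_n})$ is bounded in $H^1_{\text{loc}}(\mathcal{M})\times H^1_{\text{loc}}(\mathcal{M})$ uniformly in $n$ (on any fixed compact region one uses only those $n$ with $\tau_n$ exceeding the time-extent of the region, so the truncation plays no role). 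Passing to a subsequence, $(\psi_{\tau_n},T\psi_{\tau_n})\rightharpoonup(\tilde\psi,\chi)$ weakly in $H^1_{\text{loc}}\times H^1_{\text{loc}}$ and, by Rellich--Kondrachov, strongly in $L^2_{\text{loc}}\times L^2_{\text{loc}}$. Since the $T$-derivative is continuous on $\mathcal{D}'(\mathcal{M})$ and $\psi_{\tau_n}\to\tilde\psi$ there, $\chi=T\tilde\psi$, so $\tilde\psi,T\tilde\psi\in H^1_{\text{loc}}$. Testing the identity $\int_{\mathcal{M}}\psi_{\tau_n}\,\Box_g\zeta\,\dVol=0$ (valid for $\zeta$ of small support once $n$ is large) against the strong $L^2_{\text{loc}}$ limit gives $\Box_g\tilde\psi=0$ distributionally. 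Finally, the time-integrated energy equals, up to the bounded lapse, the spacetime integral of $|\partial(\cdot)|^2_{g_{(\text{ref})}}$, a weakly lower-semicontinuous functional, so $\int_{-\tau_*}^{\tau_*}\mathcal{E}^{(N)}[\tilde\psi](\tau)\,\upd\tau\lesssim\liminf_n\int_{-\tau_*}^{\tau_*}\mathcal{E}^{(N)}[\psi_{\tau_n}](\tau)\,\upd\tau\lesssim\tau_*\,\mathcal{E}^{(N)}[T\phi](0)<\infty$, and likewise for $T\tilde\psi$ with $\mathcal{E}^{(N)}[T^2\phi](0)$.

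\emph{Choosing $\tau_n$ so that $\tilde\psi$ vanishes outside $\mathcal{S}_{(\text{ext})}$.} Fix an exhaustion $\mathcal{K}_1\subset\mathcal{K}_2\subset\cdots$ of $\mathcal{M}\setminus\mathcal{S}_{(\text{ext})}$ by compact sets, each at positive distance from $\mathcal{S}_{(\text{ext})}$. Applying the time-integrated form of Proposition \ref{proposition decay away from extended ergosurface} (the form from which it is derived in \cite{Moschidis2016}) to each $\mathcal{K}_m$ and running a diagonal argument, we may arrange the increasing sequence $\tau_n\to\infty$ above so that in addition $\int_{\tau_n-n}^{\tau_n+n}\big(\mathcal{E}^{(N)}_{\mathcal{K}_m}[T\phi](\tau)+\mathcal{E}^{(N)}_{\mathcal{K}_m}[T^2\phi](\tau)\big)\upd\tau\to0$ for every $m$. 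In terms of $\psi_{\tau_n}$ this says $\nabla\psi_{\tau_n}\to0$ and $\nabla(T\psi_{\tau_n})\to0$ in $L^2_{\text{loc}}(\mathcal{M}\setminus\mathcal{S}_{(\text{ext})})$, which together with the weak convergence upgrades the convergence on compact subsets of $\mathcal{M}\setminus(\mathcal{S}_{(\text{ext})})_{(\delta)}$ to strong $H^1_{\text{loc}}$ and forces $\nabla\tilde\psi\equiv0$, hence $\tilde\psi\equiv\text{const}$ on the connected region $\mathcal{M}\setminus\mathcal{S}_{(\text{ext})}=\mathcal{M}_{(\text{outer})}$. Promoting ``constant'' to ``zero'' is done exactly as in \cite{Moschidis2016}: in the asymptotic region $\psi_{\tau_n}$ is supported in $\{r\lesssim\tau_n\}$ by finite speed of propagation and the compact support of the data, and the $r^p$-weighted energy estimates recalled in Section \ref{section estimates in KK spaces} are incompatible with a nonzero constant limit there. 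Thus $\tilde\psi\equiv0$ on $\mathcal{M}\setminus\mathcal{S}_{(\text{ext})}$, and collecting the statements above yields all the asserted modes of convergence — the $g_{(\text{ref})}$-based test objects $\zeta_j$, $X_j$ in the statement simply encode weak/strong $L^2$ and $H^1$ convergence, and the two displayed limits are respectively the $L^2_{\text{loc}}$ convergence on all of $\mathcal{M}$ and the $H^1_{\text{loc}}$ convergence away from $(\mathcal{S}_{(\text{ext})})_{(\delta)}$.

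\emph{Expected main obstacle.} The genuinely delicate point is the upgrade of ``$\tilde\psi$ constant on $\mathcal{M}_{(\text{outer})}$'' to ``$\tilde\psi\equiv0$'': because the non-degenerate energy carries no zeroth-order term, vanishing of the gradient alone is insufficient, and one must actively exploit the asymptotic structure (compact support plus $r^p$-weighted decay) to exclude a nonzero constant. A secondary, purely technical point is to make sure the \emph{time-integrated} version of Proposition \ref{proposition decay away from extended ergosurface} is the one invoked, so that the diagonal extraction controls $\psi_{\tau_n}$ on compact \emph{spacetime} regions rather than merely on individual leaves $\Sigma_\tau$.
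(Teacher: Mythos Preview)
The paper does not supply its own proof of this lemma: it is explicitly quoted from \cite{Moschidis2016} (``we quote here for convenience'') and then used as a black box, so there is no in-paper argument to compare against. Your sketch is a faithful reconstruction of the weak-compactness argument that \cite{Moschidis2016} actually runs --- uniform $H^1_{\text{loc}}$ bounds from Assumption~\ref{assumption boundedness}, diagonal extraction, Rellich--Kondrachov for the strong $L^2_{\text{loc}}$ limit, passage of $\Box_g$ through the distributional limit, and the use of the time-integrated local energy decay away from $\mathcal{S}_{(\text{ext})}$ to force $\nabla\tilde\psi=0$ there --- and the steps you flag as delicate (time-integrated rather than fixed-time decay; ruling out a nonzero constant via the asymptotics) are exactly the ones that need care. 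In short: correct approach, and it matches the cited source rather than anything the present paper adds.
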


Now, by our assumption on the spacetime, since $\tilde{\psi}$ solves the wave equation and vanishes outside the ergoregion, we actually have $\tilde{\psi} \equiv 0$ everywhere on $\mathcal{M}$ (see remark \ref{remark unique continuation})

If we apply this lemma also to the field $T\phi$ (i.e.\ we take one more $T$ derivative), then we can obtain, in particular, that there is a sequence of times $\tau_n$ such that
\begin{equation*}
	\lim_{n\rightarrow \infty}
		\sum_{j=0}^2 \int_{\tau_n}^{\tau_n + 1} \left(
			\int_{\mathcal{S}_{(\text{ext})} \cap \Sigma_\tau }
				|T^j \psi_{\tau_n}|^2 \dVol \right) \upd \tau
		= 0
\end{equation*}

Now, in $\mathcal{S}_{(\text{ext})} \setminus \mathcal{S} $, $T$ is a timelike vector field. Therefore, by using elliptic estimates and the mean value theorem, we have that, for any $\delta > 0$, there is a time $\tilde{\tau}_n$ with $\tau_n \leq \tilde{\tau}_n \leq \tau_n + 1$ such that
\begin{equation}
\label{equation decay on interior}
	\sum_{|j| \leq 2} \mathcal{E}^{(N)}_{( \mathcal{S}_{(\text{ext})} \setminus \mathcal{S}_{(\delta)} )}[\partial^{(j)}\psi_{\tau_n}] ( \tilde{\tau}_n) \rightarrow 0
\end{equation}

\subsection{Local energy decay on the evanescent ergosurface and energy amplification}

The purpose of this section is to show that, if the local energy decay of subsection \ref{subsection local decay away from ergosurface} can be extended to cover the evanescent ergosurface, then this leads to a contradiction with the boundedness assumption \ref{assumption boundedness}. This lies at the heart of our argument for instability. Put another way, we show that under assumption \ref{assumption boundedness} we \emph{cannot} extend the local energy decay to cover the evanescent ergosurface. This is the form taken by the following proposition.

\begin{proposition}
\label{proposition non decay on ergosurface}
 Let $\mathcal{M}$ be an asymptotically flat or an asymptotically Kaluza-Klein manifold with an evanescent ergosurface in the sense of \ref{condition asymp flat}, or an asymptotically Kaluza-Klein manifold with an evanescent ergosurface in the sense of \ref{condition asymp KK}. Suppose also that the manifold admits a discrete isometry as in section \ref{section the discrete isometry}. Additionally, suppose that the boundedness statement \ref{assumption boundedness} holds.

 Let $\mathcal{U}_0 \subset \Sigma_0$ be any compact set such that $(\mathcal{S}\cap\Sigma_0 ) \subset \mathcal{U}_0$. Then there exists some positive constant $\mathring{C}$ and a solution to the wave equation $\Box_g \phi = 0$ such that, for all times $t > 0$ the local energy of its $T$ derivatives in the set $\mathcal{U}$ is at least $\mathring{C}$, i.e.\
 \begin{equation}
  \mathcal{E}^{(N)}_{\mathcal{U}}[T\phi](t) + \mathcal{E}^{(N)}_{\mathcal{U}}[T^2 \phi](t) \geq \mathring{C}
 \end{equation}

 Moreover, the solution $\phi$ can be chosen to be smooth and to arise from compactly supported initial data satisfying
 \begin{equation*}
  \sum_{j = 0}^{3} \mathcal{E}^{(N)}[T^j u](0) < \infty
 \end{equation*}

\end{proposition}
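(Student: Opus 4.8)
The plan is to produce the solution as a single high‑frequency wave packet concentrated near (a compact piece of) the null geodesic $\gamma\subset\mathcal{S}$, and then to propagate its energies in time using only conservation of the degenerate energy, the boundedness assumption~\ref{assumption boundedness}, and the fact that the non‑degenerate and degenerate energy currents are equivalent away from $\mathcal{S}$.

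\textbf{Step 1 (the wave packet).} Using the coordinates adapted to $\gamma$ built in Section~\ref{section adapted coordinates}, I would prescribe smooth, compactly supported data on $\Sigma_0$ of WKB form, schematically $\phi|_{\Sigma_0}\sim\chi(\tilde{x})\,e^{i\lambda\psi}$ with frequency $\lambda\to\infty$, cut‑off $\chi$ supported in $\{|\tilde{x}|\lesssim\epsilon\}$ near $\gamma\cap\Sigma_0$, and phase $\psi$ chosen so that $\mathrm{d}\psi$ agrees with $T^{\flat}$ (resp.\ $V^{\flat}$) along $\gamma$; the datum for $n\phi$ is fixed so that $T\phi|_{\Sigma_0}$ (resp.\ $V\phi|_{\Sigma_0}$) is, to leading order, proportional to $g(T,T)$ (resp.\ $g(V,V)$) times the profile. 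Since $g(T,T)$ (resp.\ $g(V,V)$) vanishes \emph{quadratically} on $\mathcal{S}$ — the property established in the proof of the null‑geodesic proposition, and reflected in the error structure of~\eqref{equation wave local coords AF}--\eqref{equation wave local coords KK} — the ``missing'' derivative in the conserved‑energy current carries a weight $\lesssim|\tilde{x}|^{2}\lesssim\epsilon^{2}$ on the support of the packet. A direct computation with the expression~\eqref{equation energy current} for the energy current then shows that for a suitable coupling $\lambda=\lambda(\epsilon)$ (e.g.\ $\lambda\sim\epsilon^{-1-\alpha}$), after normalizing,
\[
  \mathcal{E}^{(N)}[T\phi](0)+\mathcal{E}^{(N)}[T^{2}\phi](0)\ \geq\ 1,\qquad
  \mathcal{E}^{(T)}[T^{j}\phi](0)\ \leq\ \delta_{\epsilon}\,\mathcal{E}^{(N)}[T^{j}\phi](0)\quad(j=0,1,2,3),
\]
with $\delta_{\epsilon}\to0$ as $\epsilon\to0$ and $\sum_{j\leq3}\mathcal{E}^{(N)}[T^{j}\phi](0)<\infty$ automatic; in the Kaluza--Klein case the packet is taken $\mathcal{G}$‑invariant and $V$ replaces $T$. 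The subtle point is that $T^{j}\phi$ must retain this structure for $j\leq3$: since $T=\partial_{\tilde{t}}$ and $T\psi=g(T,T)$ in these coordinates, each $T^{j}\phi$ is again a profile of the same oscillatory type, and one only has to check that the amplitude derivatives and the $\mathcal{O}(|\tilde{x}|)$, $\mathcal{O}(|\tilde{x}|^{2})$, $\mathcal{O}(|\tilde{x}|^{3})$ corrections to the metric do not spoil the two displayed bounds; this is what forces the precise balance between $\epsilon$ and $\lambda$.

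\textbf{Step 2 (propagation).} For $k\in\{1,2\}$ the fields $T^{k}\phi$ again solve $\Box_{g}(\cdot)=0$ (and are $\mathcal{G}$‑invariant), since $T$ (resp.\ $V$) commutes with $\Box_{g}$. For any $t$ split $\mathcal{E}^{(N)}[T^{k}\phi](t)=\mathcal{E}^{(N)}_{\mathcal{U}}[T^{k}\phi](t)+\mathcal{E}^{(N)}_{\mathcal{M}\setminus\mathcal{U}}[T^{k}\phi](t)$. On $\mathcal{M}\setminus\mathcal{U}$ the Killing field $T$ is uniformly timelike (by the defining properties of the evanescent ergosurface; in the Kaluza--Klein case one uses Proposition~\ref{proposition V energy current degenerate}, valid for $\mathcal{G}$‑invariant fields), so there ${}^{(N)}J$ is pointwise comparable to ${}^{(T)}J$ (resp.\ ${}^{(V)}J$) and
\[
  \mathcal{E}^{(N)}_{\mathcal{M}\setminus\mathcal{U}}[T^{k}\phi](t)\ \lesssim\ \mathcal{E}^{(T)}_{\mathcal{M}\setminus\mathcal{U}}[T^{k}\phi](t)\ \leq\ \mathcal{E}^{(T)}[T^{k}\phi](t)\ =\ \mathcal{E}^{(T)}[T^{k}\phi](0),
\]
the last step because the full conserved energy is time‑independent and dominates its restriction to any subregion. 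Combined with assumption~\ref{assumption boundedness} applied to $T^{k}\phi$ (which, by stationarity, gives $\mathcal{E}^{(N)}[T^{k}\phi](t)\geq (C^{(N)})^{-1}\mathcal{E}^{(N)}[T^{k}\phi](0)$), summing over $k=1,2$ and inserting the bounds of Step~1 yields
\[
  \mathcal{E}^{(N)}_{\mathcal{U}}[T\phi](t)+\mathcal{E}^{(N)}_{\mathcal{U}}[T^{2}\phi](t)\ \geq\ \tfrac{1}{C^{(N)}}-c\,\delta_{\epsilon}\ =:\ \mathring{C}\ >\ 0
\]
for every $t$, once $\epsilon$ is fixed small enough; since the packet is smooth with compactly supported data and finite higher‑order energy, this is exactly the assertion.

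\textbf{Conceptual remark and main obstacle.} This is the backwards‑evolution picture of Figure~\ref{figure instability} and the point of the discrete isometry of Section~\ref{section the discrete isometry}: were the local energy of such a packet to disperse, running the construction backwards would amplify $\mathcal{E}^{(N)}$ by an arbitrarily large factor — case~\ref{case (A)} — contradicting~\ref{assumption boundedness}, and the discrete isometry $\mathscr{T}$ makes the forward and backward evolutions symmetric so that this dichotomy is available (the dispersive alternative, together with Proposition~\ref{proposition decay away from extended ergosurface} and the interior‑decay estimate~\eqref{equation decay on interior}, is what is then used to pin down the case~\ref{case (B)}/Aretakis‑type blow‑up elsewhere). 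The hard part is Step~1: manufacturing one smooth compactly supported datum for which \emph{all of} $\mathcal{E}^{(T)}[\phi],\ldots,\mathcal{E}^{(T)}[T^{3}\phi]$ are negligible relative to the corresponding non‑degenerate energies while $\mathcal{E}^{(N)}[T\phi]+\mathcal{E}^{(N)}[T^{2}\phi]$ stays bounded below — this requires carefully tuning $\epsilon$ against $\lambda$ and controlling the error terms in the adapted‑coordinate form~\eqref{equation wave local coords AF}--\eqref{equation wave local coords KK} of $\Box_{g}$ as well as the way $\Sigma_{0}$ sits inside those coordinates.
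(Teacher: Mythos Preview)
Your Step~2 is correct and is in fact a more direct route than the paper's own. The paper argues by contradiction: it supposes the local energy near $\mathcal{S}$ decays to zero for every nice solution, uses the discrete isometry $\mathscr{T}$ to turn forward decay into backward decay, and then evolves the special data of Claim~\ref{claim initial data 1} backward to manufacture a solution whose $N$-energy is amplified by an arbitrarily large factor, contradicting assumption~\ref{assumption boundedness}. You instead notice that~\ref{assumption boundedness} together with stationarity immediately yields the two-sided bound $(C^{(N)})^{-1}\mathcal{E}^{(N)}[u](0)\le\mathcal{E}^{(N)}[u](t)$ for all $t$, and plug the same special data directly into this lower bound. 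This bypasses both the contradiction argument and --- more strikingly --- the discrete isometry hypothesis; in the paper, dropping $\mathscr{T}$ is only achieved later (Section~\ref{section additional symmetry}) under the extra axial Killing-field assumption.

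Two caveats about Step~1. First, the bound $\mathcal{E}^{(T)}[T^{j}\phi]\le\delta_{\epsilon}\,\mathcal{E}^{(N)}[T^{j}\phi]$ for $j=2,3$ is not achieved by a packet of this kind: once the leading part of $TT\phi$ has been cancelled one has $T^{3}\phi\approx 2A\,\partial_y T^{2}\phi$ to top order, so $\|T^{3}\phi\|_{L^{2}}\sim\|\partial_y T^{2}\phi\|_{L^{2}}$ and hence $\mathcal{E}^{(T)}[T^{2}\phi]\sim\mathcal{E}^{(N)}[T^{2}\phi]$. This is harmless --- since $\mathcal{E}^{(N)}_{\mathcal{U}}[T^{2}\phi]\ge0$, your Step~2 only needs the $j=1$ case, which is exactly the paper's Claim~\ref{claim initial data 1}. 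Second, and more substantively, your specific ansatz (a bare WKB packet with $T\phi|_{\Sigma_0}$ proportional to $g(T,T)$ times the profile) does not produce the required smallness of $\mathcal{E}^{(T)}[T\phi]$. The obstacle is that $TT\phi$ is \emph{determined} by~\eqref{equation TTphi}, and the three leading contributions $2A\partial_y T\phi$, $A^{2}\delta^{ab}\partial_a\partial_b\phi$ and $(|b|^{2}-a)\partial_y^{2}\phi$ are individually large and must be made to cancel. The paper arranges this by taking the transverse profile $F_{2}$ to solve the quadratic-potential eigenvalue problem~\eqref{equation poisson F_2} (with eigenvalue $\omega$ bounded independently of the frequency), and then uses Agmon estimates (Proposition~\ref{proposition Agmon bounds}) to show that truncating $F_{2}$ at the boundary introduces only exponentially small errors. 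A naive cutoff $\chi$ cannot do this: the errors from $\partial_a\partial_a\chi$ and from $(|b|^{2}-a)\partial_y^{2}\phi$ on the support scale badly, and the ratio $\mathcal{E}^{(T)}[T\phi]/\mathcal{E}^{(N)}[T\phi]$ does not tend to zero.
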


We can compare the conclusion of proposition \ref{proposition non decay on ergosurface} runs counter to the conclusion of proposition \ref{proposition decay away from extended ergosurface}. We are showing that, if the set $\mathcal{U}_0$ is allowed to contain the ergosurface $\mathcal{S}$, then assumption \ref{assumption boundedness} leads to the exact opposite behaviour to the case where the set $\mathcal{U}_0$ is disjoint from the ergosurface. We shall see further consequences of this conclusion in the next subsection.

\begin{proof}
 The proof of proposition \ref{proposition non decay on ergosurface} will proceed by contradiction. That is, we shall suppose that, for all solutions of the wave equation arising from suitable initial data, the local energy in the set $\mathcal{U}_0$ eventually becomes arbitrarily small. We shall then derive a contradiction with assumption \ref{assumption boundedness}.

 To be precise, suppose the following: for all smooth solutions $\phi$ to $\Box_g \phi = 0$ such that $\phi$ arises from compactly supported initial data satisfying
 \begin{equation*}
  \sum_{j = 0}^{3} \mathcal{E}^{(N)}[T^j \phi](0) < \infty
 \end{equation*}
 Then, for all sets $\mathcal{U}_t$ as defined in proposition \ref{proposition non decay on ergosurface} and for all $\epsilon > 0$ there is some time $\tau$ such that
 \begin{equation*}
  \mathcal{E}^{(N)}_{\mathcal{U}}[T\phi](\tau) + \mathcal{E}^{(N)}_{\mathcal{U}}[T^2 \phi](\tau) < \epsilon
 \end{equation*}
 Suppose in addition that assumption \ref{assumption boundedness} holds.

Using the discrete isometry $\mathscr{T}$ we can see that the same results would hold in the \emph{time reversed} manifold, which is the manifold with the other choice of time orientation. To see this, we note the following: we can combine the discrete isometry $\mathscr{T}$ with the one-parameter family of isometries associated to the Killing vector field $T$ (or $V$), which we label $F_t$, to form the isometry
\begin{equation*}
	\mathscr{T}_t := F_{t} \circ \mathscr{T} \circ F_{-t}
\end{equation*}
Since $\mathscr{T}$ fixes the hypersurface $\Sigma_0$, we find that $\mathscr{T}_t$ is a discrete isometry fixing the hypersurface $\Sigma_t$. In particular, $\mathscr{T}_t$ descends to a discrete isometry of $\Sigma_t$ together with its induced metric, which we denote by $\overline{\mathscr{T}}_t$.

Now, suppose that $\phi$ is a solution to the wave equation, inducing the following data on the hypersurface $\Sigma_t$:
\begin{equation*}
\begin{split}
	\phi\big|_{\Sigma_t} &= \phi_0 \\
	T\phi\big|_{\Sigma_t} &= \phi_1
\end{split}
\end{equation*}
Then $(\mathscr{T}_t^{-1})^*(\phi)$ will be a solution to the wave equation on the \emph{time reversed} manifold: that is, on the manifold $(\mathscr{T}_t^{-1}(\mathcal{M}), (\mathscr{T}_t^{-1})^*(g)) = (\mathcal{M}, g)$. Moreover, this solution will induce data on the hypersurface $\Sigma_t$ given by
\begin{equation*}
\begin{split}
	(\mathscr{T}_t^{-1})^*(\phi)\big|_{\Sigma_t} &= (\overline{\mathscr{T}}_t^{-1})^* \phi_0 \\
	T \left( (\mathscr{T}_t^{-1})^*(\phi)\right)\big|_{\Sigma_t} &= -(\overline{\mathscr{T}}_t^{-1})^*\phi_1
\end{split}
\end{equation*}
Since $\phi_0$ and $\phi_1$ are smooth and compactly supported, this initial data is also smooth and compactly supported. Moreover, $(\mathscr{T}_t^{-1})^*(\phi)$ solves $\Box_g (\mathscr{T}_t^{-1})^*(\phi) = 0$. Hence this solution will disperse in the future: for any $\epsilon$ and for any compact set $\mathcal{U}_0$ there is some time $\tau > 0$ such that
 \begin{equation*}
  \mathcal{E}^{(N)}_{\mathcal{U}}[T(\mathscr{T}_t^{-1})^*(\phi)](\tau) + \mathcal{E}^{(N)}_{\mathcal{U}}[T^2 (\mathscr{T}_t^{-1})^*(\phi)](\tau) < \epsilon
 \end{equation*}
 If we now apply the discrete isometry $\mathscr{T}_t$ to this solution, we find\footnote{Note that the energy, on a surface which is fixed by the discrete isometry, is invariant under this isometry.} that
 \begin{equation*}
  \mathcal{E}^{(N)}_{\mathcal{U}}[T\phi](t-\tau) + \mathcal{E}^{(N)}_{\mathcal{U}}[T^2 \phi](t-\tau) < \epsilon
 \end{equation*}

Note that we have made use of the fact that the dispersion result holds for \emph{all} initial data. It is also important that the discrete isometry fixes a Cauchy hypersurface, since this allows us to pick ``time reversed'' initial data.

 We now make the following claim: 
 \begin{claim}
 \label{claim initial data 1}
  In the asympotitcally flat case, for all $\delta > 0$ and for any $\tau_0$ there exists data (on $\Sigma_{\tau_0}$) for the wave equation such that
  \begin{equation}
   \begin{split}
    \mathcal{E}^{(N)}[Tu](\tau_0) & \geq \delta^{-1} \\
    \mathcal{E}^{(T)}[Tu](\tau_0) & = 1 \\
    \sum_{j=0}^3 \mathcal{E}^{(N)}[T^j u](\tau_0) &< \infty
   \end{split}
  \end{equation}
  Likewise, in the asymptotically Kaluza-Klein case, for all $\delta > 0$ and for any $\tau_0$ there exists $\mathcal{G}$-invariant data such that
  \begin{equation}
   \begin{split}
    \mathcal{E}^{(N)}[Tu](\tau_0) & \geq \delta^{-1} \\
    \mathcal{E}^{(V)}[Tu](\tau_0) & = 1 \\
    \sum_{j=0}^3 \mathcal{E}^{(N)}[T^j u](\tau_0) &< \infty
   \end{split}
  \end{equation}
 \end{claim}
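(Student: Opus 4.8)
The plan is to reduce the claim to a single scaling statement and then produce the required family by a localised geometric‑optics (quasimode) construction concentrated on the null geodesic $\gamma\subset\mathcal{S}$, using the adapted coordinates of Section~\ref{section adapted coordinates}.

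\textbf{Reduction.} First, the third condition $\sum_{j=0}^{3}\mathcal{E}^{(N)}[T^{j}u](\tau_{0})<\infty$ is automatic once the data of $u$ on $\Sigma_{\tau_{0}}$ is smooth and compactly supported (and, in the Kaluza--Klein case, $\mathcal{G}$‑invariant), since then each $T^{j}u$ again arises from such data. Second, if one can exhibit a family $u_{\lambda}$ of such solutions with $\mathcal{E}^{(N)}[Tu_{\lambda}](\tau_{0})/\mathcal{E}^{(T)}[Tu_{\lambda}](\tau_{0})\to\infty$ as $\lambda\to\infty$ ($\mathcal{E}^{(T)}$ replaced by $\mathcal{E}^{(V)}$ in the Kaluza--Klein case), then replacing $u_{\lambda}$ by $c_{\lambda}u_{\lambda}$ with $c_{\lambda}:=(\mathcal{E}^{(T)}[Tu_{\lambda}](\tau_{0}))^{-1/2}$ normalises the conserved energy of $Tu_{\lambda}$ to exactly $1$ while leaving the ratio unchanged, so $\mathcal{E}^{(N)}[T(c_{\lambda}u_{\lambda})](\tau_{0})\to\infty$ and any sufficiently large $\lambda$ (depending on $\delta$) works. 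By the $T$‑invariance of all the relevant structures it suffices to construct $u_{\lambda}$ at one convenient $\tau_{0}$, noting $\mathcal{S}\cap\Sigma_{\tau_{0}}\ne\emptyset$.

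\textbf{Construction.} Work in the adapted coordinates $(\tilde{t},y,x^{a})$ (resp.\ $(\tilde{t},y,x^{a},z^{A})$) of Section~\ref{section adapted coordinates} near a point $p\in\mathcal{S}\cap\Sigma_{\tau_{0}}$, and use the two structural facts proved there: (i) $y$ is chosen so that $T(y)=0$ (resp.\ $V(y)=0$) and $g^{-1}(\mathrm{d}y,\mathrm{d}y)=g^{yy}=\mathcal{O}(|\tilde{x}|^{2})$, i.e.\ $\mathrm{d}y$ is null to second order along $\gamma$; and (ii), by \eqref{equation energy current} with $X=T$ (resp.\ $X=V$), the conserved‑energy density of a solution $\phi$ equals, after eliminating the normal derivative,
\[
\frac{1}{2T^{0}}(T\phi)^{2}+\frac{-g(T,T)}{2T^{0}}(e_{1}\phi)^{2}+\frac{T^{0}}{2}\sum_{a\ge2}(e_{a}\phi)^{2},
\]
where $e_{1}$ is the direction of the orthogonal projection of $T$ onto $\Sigma$ and the coefficient $-g(T,T)=\mathcal{O}(\mathrm{dist}(\cdot,\mathcal{S})^{2})$ of the $e_{1}$‑derivative degenerates on $\mathcal{S}$. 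I take $u_{\lambda}$ to be the solution whose Cauchy data on $\Sigma_{\tau_{0}}$ agree with those of a geometric‑optics profile $u_{\lambda}^{\mathrm{app}}$ oscillating at frequency $\lambda$ in the $y$‑direction, localised in a $\lambda^{-1/2}$‑neighbourhood of $\gamma\cap\Sigma_{\tau_{0}}$, with phase eikonal to second order along $\gamma$ and amplitude solving the attendant transport equation (in the Kaluza--Klein case both are taken $\mathcal{G}$‑invariant, so that $u_{\lambda}$ is $\mathcal{G}$‑invariant by the proposition of Section~\ref{section estimates in KK spaces}). Because $T$ annihilates the leading phase ($T(y)=0$, so $T\Phi$ vanishes on $\gamma$ and is $\mathcal{O}(|\tilde{x}|)$ nearby), a direct computation on the tube gives that $|e_{1}(Tu_{\lambda})|$ exceeds $|T^{2}u_{\lambda}|$ and $|e_{a}(Tu_{\lambda})|$ ($a\ge2$) by a factor $\sim\lambda^{1/2}$; integrating over the tube (volume $\sim\lambda^{-d/2}$) yields $\mathcal{E}^{(N)}[Tu_{\lambda}](\tau_{0})\sim\lambda^{3}$, while $\mathcal{E}^{(T)}[Tu_{\lambda}](\tau_{0})\lesssim\lambda^{2}$, the latter using in addition that $-g(T,T)\lesssim\lambda^{-1}$ on the tube; hence the ratio is $\gtrsim\lambda$.

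\textbf{Main obstacle.} The delicate point is precisely that the claim concerns $Tu$, whose conserved energy weights the second time derivative $T^{2}u$ with a non‑degenerate coefficient; since $T^{2}u$ is tied to $u$ through $\Box_{g}u=0$, on $\Sigma_{\tau_{0}}$ one has $T^{2}u=T^{2}u_{\lambda}^{\mathrm{app}}-((g^{-1})^{nn})^{-1}\Box_{g}u_{\lambda}^{\mathrm{app}}+(\text{terms fixed by the matching data})$, and a crude localised profile has $\Box_{g}u_{\lambda}^{\mathrm{app}}$ too large (the uncancelled transport term is of size $\sim\lambda^{1/2}$ larger than $T^{2}u_{\lambda}^{\mathrm{app}}$), which would force $\mathcal{E}^{(T)}[Tu_{\lambda}]$ up to the size of $\mathcal{E}^{(N)}[Tu_{\lambda}]$. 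This is why one must use a genuine first‑order geometric‑optics profile (eikonal to second order, amplitude solving the transport equation), so that $|\Box_{g}u_{\lambda}^{\mathrm{app}}|$ is small enough on the tube and $T^{2}u_{\lambda}$ stays subdominant; the remaining difference between $u_{\lambda}$ and $u_{\lambda}^{\mathrm{app}}$, and between their first two derivatives along $\Sigma_{\tau_{0}}$, is then controlled via the energy identity \eqref{equation energy estimate} and a Duhamel estimate for $\Box_{g}v=F$ with $F$ small. The Kaluza--Klein case is identical after replacing $T$ by $V$ and $\mathcal{E}^{(T)}$ by $\mathcal{E}^{(V)}$, and restricting to $\mathcal{G}$‑invariant profiles (for which $\partial_{A}u_{\lambda}^{\mathrm{app}}=0$ removes the vertical second‑derivative terms from $\Box_{g}u_{\lambda}^{\mathrm{app}}$).
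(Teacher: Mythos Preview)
Your approach is essentially correct and closely parallel to the paper's, though phrased in the language of Gaussian beams rather than an explicit eigenfunction construction. Both constructions oscillate at high frequency in the $y$-direction (the direction ``missing'' from the $T$- or $V$-energy), both identify the obstacle as the wave-equation constraint determining $T^{2}u$ on $\Sigma_{\tau_{0}}$, and both overcome it by choosing the transverse profile to satisfy an auxiliary equation. In your proposal this is the eikonal equation to second order (a complex quadratic phase correction giving Gaussian localization) together with the transport equation for the amplitude; in the paper the phase is simply $\delta_{0}^{-1}y$ and the amplitude is an eigenfunction $F_{2}$ of the Dirichlet harmonic-oscillator problem \eqref{equation poisson F_2}, which is exactly the transverse reduction of the beam ansatz in the rescaled variables $\bar{x}=\delta_{0}^{-1/2}x$. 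The main technical difference is in the localization: since the paper works with a real phase it must impose explicit cutoffs at carefully chosen and \emph{different} scales in the $y$, $X_{1}$ and $X_{2}$ directions ($\delta_{0}^{3/4}$, $\delta_{0}^{2/5}$ and $\delta_{0}^{1/2-\delta_{1}}$ respectively) and then invoke Agmon estimates to show $F_{2}$ is exponentially small where $\chi_{2}'$ is active; a genuine Gaussian beam carries its own decay and avoids this bookkeeping, which is also why your claimed ratio $\gtrsim\lambda$ is sharper than the paper's $\gtrsim\delta_{0}^{-3/20}$. One minor point: the Duhamel step in your last sentence is superfluous, since the claim concerns energies at the single time $\tau_{0}$ and the data $(u,Tu)$ are prescribed there; the only thing to control is $T^{2}u|_{\Sigma_{\tau_{0}}}$, and this is read off directly from the constraint $\Box_{g}u=0$, exactly as you already write.
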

 
 We postpone the proof of this claim, and first show that, combined with the assumptions above, this leads to a contradiction. To exhibit this contradiction, we begin with initial data as in claim \ref{claim initial data 1} at some time $\tau_0$, and evolve it \emph{backwards} in time (equivalently, we evolve it forward in time on the time-reversed manifold). We then find that, using the dispersion property derived above, this data will then disperse in the following sense: given any set $\mathcal{U}_0$, given any $\epsilon > 0$ we can find some time $\tau_1$ such that
 \begin{equation*}
  \mathcal{E}^{(N)}_{\mathcal{U}}[T\phi](\tau_0 - \tau_1) < \epsilon
 \end{equation*}
 In particular, we can pick $\epsilon = 1$ and also use the isometry generated by the Killing vector $T$ to translate the solution in time, so that $\tau_0 = \tau_1$. Additionally, we can pick the set $\mathcal{U}_0$ to include the evanescent ergosurface $\mathcal{S}$. Moreover, we can apply the boundedness assumption \ref{assumption boundedness} to the waves $\phi$, $T\phi$, $T^2 \phi$ and $T^3 \phi$ (using the fact that, since $T$ is a Killing vector field, $T^j \phi$ also solves the wave equation). Finally, we can use the fact that the $T$-energy is conserved to deduce that the $T$-energy at the initial time is $\mathcal{E}^{(T)}[T\phi](0) = 1$.
 
We arrive at initial data on $t = 0$ such that, in the case of the first kind of evanescent ergosurface \ref{condition asymp flat}
\begin{equation*}
 \begin{split}
  \mathcal{E}^{(N)}_{\mathcal{U}}[T\phi](0) &< 1 \\
  \mathcal{E}^{(T)}(0) &= 1 \\
  \sum_{j=0}^3 \mathcal{E}^{(N)}[T^j \phi](0) &< \infty  
 \end{split}
\end{equation*}
Furthermore, this data is such that, at the time $\tau_0$, we have
\begin{equation*}
 \mathcal{E}^{(N)}[T\phi](\tau_0) \geq \delta^{-1}
\end{equation*}

Similarly, given a manifold with the second kind of evanescent ergosurface \ref{condition asymp KK} we arrive at $\mathcal{G}$-invariant initial data at $t = 0$ such that
\begin{equation*}
 \begin{split}
  \mathcal{E}^{(N)}_{\mathcal{U}}[T\phi](0) &< 1 \\
  \mathcal{E}^{(V)}(0) &= 1 \\
  \sum_{j=0}^3 \mathcal{E}^{(N)}[T^j \phi](0) &< \infty  
 \end{split}
\end{equation*}

Now we only need to show that the \emph{global} $N$-energy is of order $1$ initially, and not just the local $N$-energy, as stated above. Since $N$ is \emph{uniformly} timelike, we have
\begin{equation*}
 \mathcal{E}^{(N)}[T\phi](t) \sim || \partial T\phi ||_{L^2[\Sigma_t]}^2
\end{equation*}
Combining this estimate with the estimates in section \ref{section energy momentum} and the definition of the evanescent ergosurface of the first kind (\ref{condition asymp flat}) we see that 
\begin{equation*}
 \mathcal{E}^{(T)}_{\mathcal{M}\setminus\mathcal{U}}[T\phi](t) \sim || \partial T\phi ||^2_{L^2[\Sigma_t\setminus \mathcal{U}_t]} \sim \mathcal{E}^{(N)}_{\mathcal{M}\setminus\mathcal{U}}[T\phi](t)
\end{equation*}

In other words, outside of the set $\mathcal{U}_t$, the $T$-energy $\mathcal{E}^{(T)}$ \emph{is} comparable to the $N$-energy $\mathcal{E}^{(N)}$. Similarly, we saw in proposition \ref{proposition V energy current degenerate} that, for $\mathcal{G}$-invariant waves, the $V$-energy $\mathcal{E}^{(V)}$ is comparable to the $N$-energy outside of the set $\mathcal{U}_t$. However, in the asymptotically flat case, the \emph{global} $T$-energy is bounded by 1 (at all times, since this energy is conserved). Similarly, in the asymptotically Kaluza-Klein case, the global $V$-energy is bounded by 1 at all times.

Hence, we find that, for the data defined above, in the asymptotically flat case the data is such that
\begin{equation*}
 \begin{split}
  \mathcal{E}^{(N)}[T\phi](0) &\lesssim 1 \\
  \mathcal{E}^{(T)}[T\phi](0) &= 1 \\
  \sum_{j=0}^3 \mathcal{E}^{(N)}[T^j \phi](0) &< \infty  
 \end{split}
\end{equation*}
Note, importantly, that the bound on the initial $N$-energy is \emph{independent} of $\delta$. Similarly, in the asymptotically Kaluza-Klein case we find initial data such that
\begin{equation*}
 \begin{split}
  \mathcal{E}^{(N)}[T\phi](0) &\lesssim 1 \\
  \mathcal{E}^{(V)}[T\phi](0) &= 1 \\
  \sum_{j=0}^3 \mathcal{E}^{(N)}[T^j \phi](0) &< \infty  
 \end{split}
\end{equation*}
while, in both cases, at time $\tau$ we have
\begin{equation*}
 \mathcal{E}^{(N)}[T\phi](\tau) \geq \delta^{-1}
\end{equation*}

Since we can pick $\delta$ arbitrarily small, and since $T\phi$ obeys the wave equation, we arrive at a contradiction with assumption \ref{assumption boundedness}. In fact, we have shown that the claim that the local energy decay statement of proposition \ref{proposition decay away from extended ergosurface} can be extended to cover also the evanescent ergosurface leads to a contradiction with assumption \ref{assumption boundedness}.

Subject to proving claim \ref{claim initial data 1}, we have finished the proof.
\end{proof}

\subsection{Constructing the initial data}
\label{subsection constructing the initial data}

To finish the proof of proposition \ref{proposition non decay on ergosurface}, we need only to construct initial data to prove claim \ref{claim initial data 1}. This turns out to be fairly difficult, and the construction will be the subject of this subsection. We shall need to make detailed use of the adapted coordinate system we considered in section \ref{section adapted coordinates}.

In the case of an asymptotically flat manifold with an evanescent ergosurface of the first kind \ref{condition asymp flat}, suppose the support of the wave $(T\phi)\big|_{\Sigma_t}$ is contained in a region $\mathcal{U}_t$, which is sufficiently small that the coordinates in section \ref{section adapted coordinates} are defined in this region. Then the $N$-energy of the wave $T\phi$ is given by
\begin{equation}
\label{equation N energy}
 \mathcal{E}^{(N)}[T\phi](\tau) \sim \int_{\Sigma_\tau} \left( (TT\phi)^2 + (\partial_y T\phi)^2 + \sum_a (\partial_{x^a} T\phi)^2 \right) \upd y \, \upd x^1 \ldots \upd x^{D-1}
\end{equation}
whereas the conserved $T$-energy of the wave is
\begin{equation}
\label{equation T energy}
 \mathcal{E}^{(T)}[T\phi](\tau) \sim \int_{\Sigma_\tau} \left( (T T\phi)^2 + \sum_a (\partial_{x^a} T\phi)^2 + \mathcal{O}(|\tilde{x}|^2)(\partial T\phi) \right) \upd y \, \upd x^1 \ldots \upd x^{D-1}
\end{equation}

Recall that in these coordinates we have $T = \partial_{\tilde{t}}$. Note, however, that the hypersurface $\Sigma_t$ is not necessarily locally a surface of constant $\tilde{t}$, and so we must bear in mind that, on $\Sigma_t$, the coordinate $\tilde{t}$ should be considered a function of the other coordinates $(y, x^a)$.

If we could freely prescribe both $T\phi$ and $TT\phi$ on $\Sigma_t$ then it would be very easy to prescribe initial data satisfying claim \ref{claim initial data 1}. However, we can only prescribe $\phi$ and $T\phi$ on $\Sigma_t$. Higher order \emph{spatial} derivatives of these quantities can then be obtained by taking the spatial derivatives of this data, however, the quantity $TT\phi$ is constrained by the wave equation to take on values which depend on the other derivatives.

Specifically, we have that, if $\phi$ solves the wave equation $\Box_g \phi = 0$ then in the adapted coordinates the expression \eqref{equation wave local coords AF} gives
\begin{equation}
\label{equation TTphi}
 \begin{split}
  TT \phi &=  2A\partial_y T\phi + A^2\delta^{ab}\partial_a \partial_b \phi + \left( |b|^2 - a \right)\partial^2_y \phi + \mathcal{O}(|\tilde{x}| |\partial \phi|) + \mathcal{O}(|\tilde{x}| |\partial T \phi|) \\
  &\phantom{=} + \mathcal{O}(|\tilde{x}^2||\bar{\partial}\partial \phi|) + \mathcal{O}(|\tilde{x}|^3 |\partial^2 \phi|)
 \end{split}
\end{equation}

Let $\chi$ be a smooth cut-off function such that
\begin{equation*}
 \begin{split}
  \chi(x) = 0 \text{\quad for } x \geq 1 \\
  \chi(x) = 1 \text{\quad for } x \leq \frac{1}{2}
 \end{split}
\end{equation*}

As mentioned above, we are free to prescribe both $u\big|_{\Sigma_t}$ and $T\phi\big|_{\Sigma_t}$. As a preliminary step, we first make the following choices for the same quantities associated to a function $u_0$:
\begin{equation*}
 \begin{split}
  u_0\big|_{\Sigma_t} &= \mathfrak{Re}\left( e^{i\delta_0^{-1}y } F_1(x) \right) \chi \left (\delta_0^{-\frac{3}{4}}|y| \right) \\
  T\phi_0\big|_{\Sigma_t} &= \mathfrak{Re}\left( -i \omega e^{i\delta_0^{-1}y } F_1(x) \right) \chi \left (\delta_0^{-\frac{3}{4}}|y| \right) \\
 \end{split}
\end{equation*}
where $\delta_0$ and $\omega$ are constants (to be fixed below), and $F_1(x)$ is functions (to be defined below) which depends only on the coordinates $x^a$ (and not on $y$ or $\tilde{t}$).

Recall that we have
\begin{equation*}
 0 \leq \left( |b|^2 - a \right) = \mathcal{O}(|\tilde{x}|^2)
\end{equation*}
Evaluating this at $y = 0$, in terms of the coordinates $x^a$ we can write
\begin{equation*}
 \left( |b|^2 - a \right)\big|_{y=0} = M_{ab}x^a x^b + \mathcal{O}(|x|^3)
\end{equation*}
for some symmetric matrix $M_{ab}$. Since $M$ is symmetric, we can diagonalise it by making some orthogonal transformation on the coordinates $x^a$, that is, we can define new variables
\begin{equation*}
 x'^a := R_b^{\phantom{b}a} x^b
\end{equation*}
where $R$ is an orthogonal matrix, and where the matrix $M$ is diagonal in this basis. Note that the coordinate derivatives $\partial_{x'^a}$ are still orthonormal at $p$ and satisfy the same conditions as the original coordinate derivatives, and so the form of the metric (and hence the wave equation) is unchanged by this change of variables. From now on, we will assume that this change of basis has been made, and drop the prime on the coordinates $x^a$.

Since $M$ is a positive matrix its eigenvalues are non-negative. Associated to each eigenvalue is a coordinate $x^a$. We now split the coordinates $x^a$ into two sets: those associated with a nonzero eigenvalue for $M$, and those associated with a zero eigenvalue of $M$. That is, we define
\begin{equation*}
 \begin{split}
  X_1 = \{a \in \{1, 2,\ldots \} \big| M_a^{\phantom{a}b} x^a = 0 \} \\
  X_2 = \{a \in \{1, 2,\ldots \} \big| M_a^{\phantom{a}b} x^a \neq 0 \}
 \end{split}
\end{equation*}
Note that either one of these sets might be empty. We also define the notation
\begin{equation*}
\begin{split}
	|x| &:= \sqrt{\sum_{a \in X_1 \cup X_2} (x^a)^2 } \\
	|x|_1 &:= \sqrt{\sum_{a \in X_1} (x^a)^2 } \\
	|x|_2 &:= \sqrt{\sum_{a \in X_2} (x^a)^2 } \\
\end{split}
\end{equation*}

We now define
\begin{equation*}
 F_1(x) := \chi\left( \delta_0^{-\frac{2}{5}}|x|_1 \right) F_2(x)
\end{equation*}
where $F_2$ is a function \emph{only} of the coordinates $x^a$ for $a \in X_2$.

We now construct the function $F_2(x)$ by requiring it to satisfy
\begin{equation}
\label{equation poisson F_2}
 \begin{split}
  -\Delta F_2 + A^{-2} \delta_0^{-2} (M_{ab} x^a x^b) F_2 &= 2A^{-1} \omega \delta_0^{-1} F_2 \\
  F_2(x) = 0 \text{\quad when } |x|_2 = \delta_0^{\frac{1}{2} - \delta_1}
 \end{split}
\end{equation}
where $\Delta$ is the Laplacian type operator $\sum_{a \in X_2} \partial_a^2$ and $\delta_1$ is some small constant that we fix below.

We wish to view equation \eqref{equation poisson F_2} as an elliptic eigenvalue problem for the function $F_2$ and its associated eigenvalue $\omega$. We first rescale the coordinates by a factor of $\delta_0^{-\frac{1}{2}}$:
\begin{equation*}
 \bar{x}^a := \delta_0^{-\frac{1}{2}}x^a \text{\quad for } a \in X_2
\end{equation*}
Then, using an overbar to refer to quantities defined by replacing the coordinate $x^a$ with $\bar{x}^a$ throughout, we arrive at the eigenvalue problem
\begin{equation}
\label{equation poisson F_2 v2}
 \begin{split}
  -\bar{\Delta} F_2 + A^{-2} (M_{ab} \bar{x}^a \bar{x}^b) F_2 &= 2A^{-1} \omega F_2 \\
  F_2(\bar{x}) &= 0 \text{\quad when } |\bar{x}|_2 = \delta_0^{-\delta_1}
 \end{split}
\end{equation}
where we consider $\omega$ as the eigenvalue and $M_{ab}\bar{x}^a\bar{x}^b$ as the potential for this eigenvalue problem.

By considering the variational formulation of this problem, we can place a lower bound on the number of eigenvalues $\omega$ below some threshold $\omega_{\text{max}}$. Specifically, let $N(\omega_{\text{max}}; \delta_0)$ be the number of positive eigenvalues $\omega$ for the problem \eqref{equation poisson F_2 v2} satisfying $\omega \leq \omega_{\text{max}}$. Let $N_+(\mathcal{U}; \delta_0)$ be the number of positive eigenvalues $\omega_+$ satisfying the same bound, where $\omega_+$ is an eigenvalue for the related problem
\begin{equation*}
 \begin{split}
  -\bar{\Delta} F_2 + A^{-2} \sup_{\bar{x} \in \mathcal{U}} \left( M_{ab} \bar{x}^a \bar{x}^b \right ) F_2 &= 2A^{-1} \omega_+ F_2 \\
  F_2(\bar{x})\big|_{\partial \mathcal{U}} &= 0 
 \end{split}
\end{equation*}
where $\mathcal{U} \subset \{|\bar{x}|_2 \leq \delta_0^{-\delta_1} \}$. Then we have
\begin{equation*}
 N(\omega_{\text{max}}; \delta_0) \geq N_+(\mathcal{U}; \delta_0) 
\end{equation*}
In particular, we can take the $\mathcal{U}$ to be the cubic region with unit volume
\begin{equation*}
 \mathcal{U} := \{ |\bar{x}^a| < 1 \text{ \quad for } a \in X_2 \}
\end{equation*}
Note that this set is indeed a subset of $\{|\bar{x}|_2 \leq \delta_0^{-\delta_1} \}$ for all sufficiently small $\delta_1$.

Then we can explicitly calculate the positive eigenvalues for \eqref{equation poisson F_2 v2}: they are given by
\begin{equation*}
 \omega_+ = \frac{1}{2}\left( A^{-1}\lambda_{\text{max}} + A \pi^2 \sum_{i = 1}^{|X_2|} n_i^2 \right) \text{ , \quad } n_i \in \mathbb{N}
\end{equation*}
where $\lambda_{\text{max}}$ is the largest eigenvalue of the matrix $M$, and where at least one of the $n_i$ is non-zero.

In particular, this proves the following proposition:

\begin{proposition}
 For all sufficiently small $\delta_0$, there exists a function $F_2$ and an eigenvalue $\omega$ solving the problem \eqref{equation poisson F_2 v2}. Moreover, there exists some $\omega_{\text{max}} > 0$, which is \emph{independent} of $\delta_0$, such that for all sufficiently small $\delta_0$ we can find an eigenvalue $\omega$ satisfying
\begin{equation}
 \omega \leq \omega_{\text{max}}
\end{equation}
 \end{proposition}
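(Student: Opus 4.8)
\emph{Proof strategy.} The plan is to read \eqref{equation poisson F_2 v2} as a textbook Dirichlet eigenvalue problem and to squeeze out the $\delta_0$-uniform bound via the min-max principle together with domain monotonicity. Assume $X_2 \neq \emptyset$; in the degenerate case $X_2 = \emptyset$ one has $M \equiv 0$ and \eqref{equation poisson F_2 v2} simply forces $\omega = 0$, so the conclusion is trivial. I would write $\Omega_{\delta_0} := \{ |\bar{x}|_2 \le \delta_0^{-\delta_1} \} \subset \mathbb{R}^{|X_2|}$ and let $L_{\delta_0} := -\bar{\Delta} + A^{-2} M_{ab}\bar{x}^a \bar{x}^b$, a Schr\"odinger-type operator with smooth, non-negative potential on the bounded domain $\Omega_{\delta_0}$. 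Equipped with Dirichlet boundary conditions this operator is self-adjoint with compact resolvent, and strictly positive (the Dirichlet Laplacian already is, and the potential is $\geq 0$); hence its spectrum is a sequence of eigenvalues $0 < \mu_1(\delta_0) \le \mu_2(\delta_0) \le \cdots \to \infty$ of finite multiplicity, with a complete orthonormal system of eigenfunctions in $L^2(\Omega_{\delta_0})$. Setting $\omega_k(\delta_0) := \tfrac{A}{2}\mu_k(\delta_0)$ produces the solutions $(F_2,\omega)$ of \eqref{equation poisson F_2 v2}, all with $\omega > 0$; this already gives the first assertion.

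For the uniform bound I would invoke the Courant--Fischer characterisation
\begin{equation*}
	\omega_k(\delta_0) = \frac{A}{2} \inf_{\substack{ S \subset H_0^1(\Omega_{\delta_0}) \\ \dim S = k}} \ \sup_{\substack{ u \in S \\ u \neq 0}} \ \frac{ \int_{\Omega_{\delta_0}} \left( |\bar{\nabla} u|^2 + A^{-2} M_{ab}\bar{x}^a \bar{x}^b \, |u|^2 \right) }{ \int_{\Omega_{\delta_0}} |u|^2 } ,
\end{equation*}
and then compare against a fixed sub-domain. Take the cube $\mathcal{U} = \{ |\bar{x}^a| < 1 : a \in X_2 \}$. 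Since $\delta_0^{-\delta_1} \to \infty$ as $\delta_0 \to 0$, we have $\mathcal{U} \subset \Omega_{\delta_0}$ once $\delta_0$ is small enough, and then $H_0^1(\mathcal{U})$ embeds into $H_0^1(\Omega_{\delta_0})$ by extension by zero. Restricting the infimum to subspaces of $H_0^1(\mathcal{U})$ only increases it, and on $\mathcal{U}$ the potential is bounded pointwise by the constant $c := A^{-2} \sup_{\bar{x} \in \mathcal{U}} \left( M_{ab}\bar{x}^a \bar{x}^b \right)$, which is \emph{independent} of $\delta_0$. Hence $\omega_k(\delta_0) \le \omega_k^{+}$ for all $k$, where $\{\omega_k^{+}\}$ is the Dirichlet spectrum of $-\bar{\Delta} + c$ on $\mathcal{U}$ — a problem carrying no $\delta_0$-dependence at all. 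This is precisely the inequality $N(\omega_{\text{max}}; \delta_0) \ge N_{+}(\mathcal{U}; \delta_0)$ quoted above.

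Finally, separation of variables on $\mathcal{U}$ gives the eigenvalues $\omega_k^{+}$ explicitly as $\tfrac{A}{2}(c + \lambda_k)$, where $\lambda_k > 0$ are the (explicit) Dirichlet eigenvalues of $-\bar{\Delta}$ on the cube; in particular the smallest, $\omega_1^{+}$, is a finite positive number not depending on $\delta_0$. Choosing $\omega_{\text{max}} := \omega_1^{+}$, we conclude that for all sufficiently small $\delta_0$ the problem \eqref{equation poisson F_2 v2} has an eigenvalue $\omega = \omega_1(\delta_0) \le \omega_{\text{max}}$ with $\omega_{\text{max}}$ independent of $\delta_0$, which is the claim.

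The one place that genuinely needs care — and the reason the rescaling $\bar{x} = \delta_0^{-1/2} x$ was introduced in the first place — is keeping $\omega_{\text{max}}$ from degenerating as $\delta_0 \to 0$. In the unrescaled equation \eqref{equation poisson F_2} the potential $A^{-2}\delta_0^{-2} M_{ab} x^a x^b$ blows up, but after rescaling the only surviving $\delta_0$-dependence is in the \emph{size} of the domain $\Omega_{\delta_0}$, which grows; domain monotonicity then allows the comparison with the fixed cube $\mathcal{U}$ and removes that dependence. Apart from this bookkeeping, all the ingredients (discreteness of the spectrum, Courant--Fischer, extension by zero) are entirely standard.
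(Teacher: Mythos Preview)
Your proof is correct and follows essentially the same approach as the paper: both use the variational (min--max) characterisation of the Dirichlet spectrum, compare to the fixed unit cube $\mathcal{U}\subset\Omega_{\delta_0}$ via extension by zero, bound the potential on $\mathcal{U}$ by its supremum, and then read off the $\delta_0$-independent eigenvalue bound from the explicit spectrum of $-\bar{\Delta}+\text{const}$ on the cube. Your write-up is a bit more explicit about the functional-analytic underpinnings (compact resolvent, Courant--Fischer) and you also add the nice observation about why the rescaling removes the $\delta_0$-dependence, but the argument is the same.
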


We can also establish some basic properties of the eigenfunction $F_2$. By linearity, we can rescale $F_2$ and so we can assume that
\begin{equation*}
 \int_{|x|_2 < \delta_0^{\frac{1}{2} - \delta_1}}(F_2)^2 \prod_{a\in X_2} \upd x^a = 1
\end{equation*}
Now, multiplying the eigenvalue equation \eqref{equation poisson F_2 v2} by $F_2$ and integrating by parts (using the boundary conditions satisfied by $F_2$) we find that
\begin{equation*}
 \int_{|\bar{x}|_2 < \delta_0^{-\delta_1}} \left( \sum_{a \in X_2} (\partial_{\bar{x}^a} F_2)^2 + A^{-2} (M_{ab} \bar{x}^a \bar{x}^b)(F_2)^2 \right) \prod_{a\in X_2} \upd x^a = 2A^{-1} \omega 
\end{equation*}
In particular, transforming back to the non-rescaled coordinates, we have
\begin{equation*}
 \int_{|x|_2 < \delta_0^{\frac{1}{2} - \delta_1}} \left( \sum_{a \in X_2} (\partial_a F_2)^2 \right) \prod_{a \in X_2} \upd x^a \leq \delta_0^{-1} 2A^{-1} \omega
\end{equation*}
Recall that we are choosing $\omega$ to be bounded independently of $\delta_0$. Hence, the $L_2$ norm of the derivatives of $F_2$ scales as $\delta_0^{-1}$.

 \subsubsection{Agmon estimates}
 
 We can obtain more detailed information regarding the behaviour of the eigenfunction through the use of ``Agmon estimates'' (see, for example, \cite{Holzegel2014}). These quantify the size of the solution in the so-called ``forbidden region''. Specifically, we can define a kind of forbidden region:
 \begin{equation*}
  \mathcal{U}_{\text{forbidden}}(\delta_1) := \left\{ \bar{x} \quad \big| \quad |\bar{x}|_2 \leq \delta_0^{-\delta_1} \ , \ A^{-2} M_{ab} \bar{x}^a \bar{x}^b - 2A^{-1} \omega > \delta_2 \right\}
 \end{equation*}
 for some positive constant $\delta_2$.

 We also define the ``classical region'':
 \begin{equation*}
  \mathcal{U}_{\text{classical}} := \left\{ \bar{x} \quad \big| \quad |\bar{x}|_2 \leq C_0 \ , \ A^{-2} M_{ab} \bar{x}^a \bar{x}^b - 2A^{-1} \omega < 0 \right\}
 \end{equation*}
 
 We define the ``Agmon distance'' between points $x$ and $y$ with coordinates $\bar{x} = (\bar{x}^a)$, $a \in X_2$ and $\bar{y} = (\bar{y}^a)$, $a \in X_2$. This distance is defined as
 \begin{equation*}
  d_\omega(\bar{x},\bar{y}) := \inf_{\substack{ \gamma:[0,1] \rightarrow (\bar{x}^a), \ a \in X_2 \\ \gamma \text{ smooth} \\ \gamma(0) = \bar{x}, \ \gamma(1) = \bar{y}}} \int_0^1 \sqrt{ \left( \sum_{a \in X_2} \left( \frac{\upd \gamma}{\upd s} (\bar{x}^a) \right)^2 \sup\left( A^{-2}M_{bc} \bar{x}^b \bar{x}^c - 2A^{-1} \omega \ , \ 0 \right) \right) } \upd s 
 \end{equation*}
 In other words, $d_\omega$ is the distance function defined with respect to the metric
 \begin{equation*}
  (g_{(\text{Agmon})})_{ab} = \sup\left( A^{-2} M_{bc} \bar{x}^b \bar{x}^c - 2A^{-1} \omega \ , \ 0 \right) \delta_{ab}
 \end{equation*}
 
 If we define, for some function $u(\bar{x}^a \big| a\in X_2)$, 
 \begin{equation*}
  |\bar{\nabla} u| = \sqrt{ \sum_a (\partial_{\bar{x}^a} u)^2 }
 \end{equation*}
 then we find that the distance function with some fixed point $\bar{x}_0$ satisfies
 \begin{equation*}
  | \bar{\nabla} d_\omega (\bar{x}, \bar{x}_0) | \leq \sqrt{ \sup\left( A^{-2} M_{bc} \bar{x}^b \bar{x}^c - 2A^{-1} \omega \ , \ 0 \right) }
 \end{equation*}
 We can also define the distance to the classical region:
 \begin{equation*}
  d_\omega^{\text{classic}}(\bar{x}) := \inf_{\substack{ |\bar{y}|_2 \leq \delta_0^{-\delta_1} \\ y \in \mathcal{U}_{\text{classical}} }} d_\omega (\bar{x}, \bar{y})
 \end{equation*}
 
 Now, we can prove the following proposition, which is a kind of exponentially weighted energy estimate, and which follows from integrating by parts.
 
 \begin{proposition}[An exponentially weighted energy estimate]
  Let $\phi$, $W$ and $D$ be smooth, real valued functions on $\{ |\bar{x}|_2 \leq C_0 \}$, such that $\phi(\bar{x}) = 0$ when $|\bar{x}|_2 = \delta_0^{-\delta_1}$. Then
  \begin{equation}
   \int_{|\bar{x}|_2 \leq \delta_0^{-\delta_1}} \left( |\bar{\nabla} (e^D \phi)|^2 + \left( W - |\bar{\nabla} D|^2 \right)e^{2D}|\phi|^2 \right) \prod_{a \in X_2} \upd \bar{x}^a
   = \int_{|x|_2 \leq \delta_0^{-\delta_1} } \left( - \bar{\Delta} \phi + W u \right) u e^{2D} \prod_{a \in X_2} \upd \bar{x}^a
  \end{equation}
 \end{proposition}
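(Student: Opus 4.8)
The plan is to prove the identity by a single integration by parts, following the standard ``ground state substitution''. First I would set $\psi := e^D \phi$ and expand the first term on the left-hand side. Since $\bar\nabla(e^D\phi) = e^D\big(\bar\nabla\phi + \phi\,\bar\nabla D\big)$ and all functions are real-valued, one gets
\[
 |\bar\nabla(e^D\phi)|^2 = e^{2D}\Big( |\bar\nabla\phi|^2 + 2\phi\,\bar\nabla D\cdot\bar\nabla\phi + |\bar\nabla D|^2\,|\phi|^2 \Big),
\]
so that, after cancelling the $|\bar\nabla D|^2$ terms,
\[
 |\bar\nabla(e^D\phi)|^2 + \big(W - |\bar\nabla D|^2\big)e^{2D}|\phi|^2 = e^{2D}|\bar\nabla\phi|^2 + 2e^{2D}\phi\,\bar\nabla D\cdot\bar\nabla\phi + W e^{2D}|\phi|^2 .
\]
The only term here not already in the form of the right-hand side is $\int e^{2D}|\bar\nabla\phi|^2$, so the whole argument reduces to integrating this term by parts.

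Next I would apply the divergence theorem to the smooth vector field $\phi\, e^{2D}\,\bar\nabla\phi$ over the ball $\{|\bar x|_2 \leq \delta_0^{-\delta_1}\}$, which (in the coordinates $\bar x^a$, $a\in X_2$) is a region with smooth boundary, so this is legitimate. Expanding the divergence gives $\bar\nabla\cdot(\phi e^{2D}\bar\nabla\phi) = e^{2D}|\bar\nabla\phi|^2 + \phi e^{2D}\bar\Delta\phi + 2\phi e^{2D}\bar\nabla D\cdot\bar\nabla\phi$, and since $\phi$ vanishes on $\{|\bar x|_2 = \delta_0^{-\delta_1}\}$ the boundary integral drops out, yielding
\[
 \int_{|\bar x|_2 \leq \delta_0^{-\delta_1}} e^{2D}|\bar\nabla\phi|^2 \prod_{a\in X_2}\upd\bar x^a = \int_{|\bar x|_2 \leq \delta_0^{-\delta_1}} \Big( -\phi e^{2D}\bar\Delta\phi - 2\phi e^{2D}\bar\nabla D\cdot\bar\nabla\phi \Big)\prod_{a\in X_2}\upd\bar x^a .
\]
Substituting this back into the expanded left-hand side, the term $-2\int \phi e^{2D}\bar\nabla D\cdot\bar\nabla\phi$ exactly cancels the cross term $+2\int e^{2D}\phi\,\bar\nabla D\cdot\bar\nabla\phi$, leaving precisely $\int e^{2D}\phi\big(-\bar\Delta\phi + W\phi\big)$, which is the claimed identity (with $u$ in the statement understood as $\phi$).

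I do not expect any genuine obstacle here: the identity is exact and uses no inequalities. The only points requiring a line of care are that the domain is a smooth-bounded region in the relevant coordinates (so the divergence theorem applies), that $\phi$ is real so no complex conjugates intervene, and that the Dirichlet condition $\phi|_{\{|\bar x|_2=\delta_0^{-\delta_1}\}}=0$ kills the boundary contribution. The real work is deferred to the application: one later chooses $D$ comparable to the Agmon distance $d_\omega^{\text{classic}}$ so that $W-|\bar\nabla D|^2 \geq \tfrac{1}{2}\delta_2 > 0$ on $\mathcal{U}_{\text{forbidden}}(\delta_1)$, takes $W = A^{-2}M_{ab}\bar x^a\bar x^b - 2A^{-1}\omega$ and $\phi = F_2$ (so the right-hand side vanishes by the eigenvalue equation \eqref{equation poisson F_2 v2}), and then discards the nonnegative gradient term on the left to obtain the desired exponential smallness of $F_2$ in the forbidden region.
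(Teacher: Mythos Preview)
Your proof is correct and is precisely the integration-by-parts computation the paper has in mind; the paper itself offers no details beyond the single phrase ``which follows from integrating by parts,'' so you have simply written out what that means. Your parenthetical observation that the $u$ on the right-hand side is a typo for $\phi$ is also correct.
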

 
 Now, we apply this proposition with the following choices:
 \begin{equation*}
  \begin{split}
   \phi &= F_2 \\
   W &= A^{-2} (M_{ab} x^a x^b) - 2A^{-1} \omega \\
   D &= (1-\delta_2) d_\omega^{\text{classic}}
  \end{split}
 \end{equation*}
 This leads to the estimate
 \begin{equation*}
 \begin{split}
 	& \int_{\mathcal{U}_{\text{forbidden}}} \left| \bar{\nabla} \left( e^{(1-\delta_2) d_\omega^{\text{classic}}} F_2 \right) \right|^2 \prod_{a \in X_2} \upd \bar{x}^a
 	\\
 	& + \int_{\mathcal{U}_{\text{forbidden}}} \left( A^{-2} (M_{bc} \bar{x}^b \bar{x}^c) - 2A^{-1} \omega - (1-\delta_2)^2 \left| \bar{\nabla} d_\omega^{\text{classic}} \right|^2 \right) e^{2(1-\delta_2) d_\omega^{\text{classic}}} |F_2|^2 \prod_{a \in X_2} \upd \bar{x}^a
 	\\
 	& = -\int_{ \left\{|\bar{x}|_2 \leq \delta_0^{-\delta^1} \right\}\setminus \mathcal{U}_{\text{forbidden}}} \left| \bar{\nabla} \left( e^{(1-\delta_2) d_\omega^{\text{classic}}} F_2 \right) \right|^2 \prod_{a \in X_2} \upd \bar{x}^a
 	\\
 	& \phantom{=}
 	+ \int_{ \left\{|\bar{x}|_2 \leq \delta_0^{-\delta_1} \right\}\setminus \mathcal{U}_{\text{forbidden}}} \Big(
 	2A^{-1} \omega - A^{-2} (M_{bc} \bar{x}^b \bar{x}^c)
 	\\
 	&\phantom{= + \int_{ \left\{|\bar{x}|_2 \leq \delta_0^{-\delta_1} \right\}\setminus \mathcal{U}_{\text{forbidden}}} \Big(}
 	+ (1-\delta_2)^2 \left| \bar{\nabla} d_\omega^{\text{classic}} \right|^2 \Big) e^{2(1-\delta_2) d_\omega^{\text{classic}}} |F_2|^2 \prod_{a \in X_2} \upd \bar{x}^a
  \end{split}
 \end{equation*}
 
 Now, in the forbidden region we can calculate
 \begin{equation*}
  \left| \bar{\nabla} d_{\omega}^{\text{classic}} \right|^2 = A^{-2} (M_{bc} \bar{x}^b \bar{x}^c) - 2A^{-1} \omega > \delta_2
 \end{equation*}
 
 Some straightforward calculations now lead to the estimate
 \begin{equation*}
  \begin{split}
   &\int_{\mathcal{U}_{\text{forbidden}}} \left| \bar{\nabla} \left( e^{(1-\delta_2) d_\omega^{\text{classic}}} F_2 \right) \right|^2 \prod_{a \in X_2} \upd \bar{x}^a 
   + \delta_2^2 \int_{\mathcal{U}_{\text{forbidden}}} |F_2|^2 e^{2(1-\delta_2) d_\omega^{\text{classic}}} \prod_{a \in X_2} \upd \bar{x}^a \\
   &\lesssim A^{-1} \omega e^{2(1-\delta_2) a_\omega(\delta_2)} \int_{ \left\{|\bar{x}|_2 \leq \delta_0^{-\delta_1} \right\}\setminus \mathcal{U}_{\text{forbidden}}} |F_2|^2 \prod_{a \in X_2} \upd \bar{x}^a
  \end{split}
 \end{equation*}
 for sufficiently small $\delta_2$, and where $a_\omega(\delta_2)$ is defined as
 \begin{equation*}
 a_\omega(\delta_2) := \sup_{\bar{x} \in \left\{|\bar{x}|_2 \leq \delta_0^{-\delta_1} \right\}\setminus \mathcal{U}_{\text{forbidden}}} d_\omega^{\text{classic}}(\bar{x})
 \end{equation*}
 i.e.\ $a_\omega(\delta_2)$ is the largest Agmon distance from the complement of the forbidden region to the classical region.
 
 As $\delta_2 \rightarrow 0$, we have $a_\omega(\delta_2) \rightarrow 0$. More precisely, if we pick any $\delta_3 > 0$, then there is some choice of $\delta_2 > 0$ such that $a_\omega(\delta_2) \leq \frac{1}{2}\delta_3$. Moreover, this bound holds \emph{independently} of the choice of $\delta_0$, at least for all sufficiently small $\delta_0$. We now fix this choice of $\delta_2$.
 
 Recall that we can choose the eigenfunction to satisfy the bound $\omega \leq \omega_{\text{max}}$ independently of $\delta_0$. From now on, we make this choice for the eigenvalue $\omega$. For any subset of the forbidden region, $\mathcal{U}_{\text{f}} \subset \mathcal{U}_{\text{forbidden}}$, we have
 \begin{equation}
 \label{equation Agmon 1}
  \begin{split}
   &\int_{\mathcal{U}_{\text{f}}} \left| \bar{\nabla} \left( e^{(1-\delta_2) d_\omega^{\text{classic}}} F_2 \right) \right|^2 \prod_{a \in X_2} \upd \bar{x}^a 
   + \int_{\mathcal{U}_{\text{f}}} |F_2|^2 e^{2(1-\delta_2) d_\omega^{\text{classic}}} \prod_{a \in X_2} \upd \bar{x}^a \\
   &\lesssim e^{\delta_3} \int_{ \left\{|\bar{x}|_2 \leq \delta_0^{-\delta_1} \right\}\setminus \mathcal{U}_{\text{forbidden}}} |F_2|^2 \prod_{a \in X_2} \upd \bar{x}^a
  \end{split}
 \end{equation}
 Note that the constants depend on $\delta_3$, but by picking $\delta_2$ suitably small we are able to make $\delta_3 > 0$ as small as we like. Moreover, we can make such a choice for the constant $\delta_2$ \emph{independent} of the value of $\delta_0$.
 
 We now choose the subset of the forbidden region to be defined by
 \begin{equation*}
  \mathcal{U}_\text{f} := \left\{ \bar{x} \ \big| \ \frac{1}{2} \delta_0^{-\delta_1} < |\bar{x}|_2 \leq \delta_0^{-\delta_1} \right\}
 \end{equation*}
 Note that this is indeed a subset of the forbidden region for all sufficiently small $\delta_0$. In addition, within this subset we have the lower bound
 \begin{equation*}
  d_\omega^{\text{classic}} \gtrsim \delta_0^{-2\delta_1}
 \end{equation*}
 which follows from the fact that the potential grows quadratically in the forbidden region.
 
 With this choice for the region $\mathcal{U}_\text{f}$, we return to equation \eqref{equation Agmon 1}. Dropping the first term on the left hand side, and making use of the lower bound on $d_\omega^{\text{classic}}$ in the region $\mathcal{U}_\text{f}$, we find that there is some $c_1 > 0$ such that
 \begin{equation*}
  \int_{U_\text{f}} |F_2|^2 \prod_{a \in X_2} \upd \bar{x}^a \lesssim e^{-c_1 \delta_0^{-2\delta_1}} \int_{|\bar{x}|_2 \leq \delta_0^{-\delta_1}} |F_2|^2 \prod_{a \in X_2} \upd \bar{x}^a 
 \end{equation*}
 or, returning to the non-rescaled coordinates $x^a$ and remembering our normalisation condition on $F_2$, we have
 \begin{equation*}
  \int_{\frac{1}{2}\delta_0^{-\delta_1} \leq |x|_2 \leq \delta_0^{\delta_1}}  |F_2|^2 \prod_{a \in X_2} \upd x^a \lesssim e^{-c_1 \delta_0^{-2\delta_1}}
 \end{equation*}
 In other words, the $L^2$ norm of $F_2$ is exponentially small in the region $U_f$, for small $\delta_0$.
 
 Now we return again to equation \eqref{equation Agmon 1}, and this time we drop the second term on the left hand side. We can expand the first term as
 \begin{equation*}
  \int_{\mathcal{U}_{\text{f}}} e^{2(1-\delta_2)d_\omega^{\text{classic}}} \left( |\bar{\nabla} F_2|^2 + 2F_2(1-\delta_2)(\bar{\nabla} d_\omega^{\text{classic}})\cdot (\bar{\nabla}F_2) + (1-\delta_2)^2 |\bar{\nabla} d_\omega^{\text{classic}} |^2 \right) \prod_{a \in X_2} \upd \bar{x}^a
 \end{equation*}
 The third term is positive and we can immediately drop it. For the second term, the Cauchy-Schwarz inequality, allow us to write
 \begin{equation*}
  \left| 2F_2(1-\delta_2)(\bar{\nabla} d_\omega^{\text{classic}})\cdot (\bar{\nabla}F_2) \right| \leq \frac{1}{2}|\bar{\nabla} F_2|^2 + 4(1-\delta_2)^2 |\bar{\nabla} d_\omega^{\text{classic}}|^2 |F_2|^2
 \end{equation*}
 Now, in the region $\mathcal{U}_f$ we have the bound
 \begin{equation*}
  |\bar{\nabla} d_\omega^{\text{classic}}|^2 \lesssim \delta_0^{-2\delta_1}
 \end{equation*}
 Combining this bound with the bound already obtained for the $L^2$ norm of $F_2$ in the region $\mathcal{U}_f$ we find that we can bound
 \begin{equation*}
  \int_{U_\text{f}} |\bar{\nabla}F_2|^2 \prod_{a \in X_2} \upd \bar{x}^a \lesssim \delta_0^{-2\delta_1} e^{-c_1 \delta_0^{-
  2\delta_1}} \int_{|\bar{x}|_2 \leq \delta_0^{-\delta_1}} |F_2|^2 \prod_{a \in X_2} \upd \bar{x}^a
 \end{equation*}
 for all sufficiently small $\delta_0$. Again, we can return to our non-rescaled coordinates, and also redefine $c_1$ to be some slightly smaller constant, and we find that the $L^2$ norm of $\nabla F_2$ is also exponentially supressed for small $\delta$. Putting together the results above, we have proved the following proposition:
 
 \begin{proposition}
 \label{proposition Agmon bounds}
  For all sufficiently small $\delta_0$, there exists a solution $(F_2, \omega)$ to the eigenvalue problem \eqref{equation poisson F_2}. Moreover, the eigenvalue $\omega$ can be chosen to satisfy the bound
  \begin{equation}
   \omega \leq \omega_{\text{max}}
  \end{equation}
  where $\omega_{\text{max}}$ is some sufficiently large constant, but which is \emph{independent} of $\delta_0$. With this choice of eigenvalue, the associated eigenfunction $F_2$ satisfies
  \begin{equation}
   \int_{\frac{1}{2}\delta_0^{\delta_1} \leq |x|_2 \leq \delta_0^{\delta_1} } \left( |F_2|^2 + |\nabla F_2|^2 \right) \prod_{a \in X_2} \upd x^a \lesssim e^{-c_1 \delta_0^{-2\delta_1}} \int_{|x|_2 \leq \delta_0^{\delta_1} } |F_2|^2 \prod_{a\in X_2} \upd x^a 
  \end{equation}
 \end{proposition}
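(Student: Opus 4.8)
The plan is to assemble Proposition~\ref{proposition Agmon bounds} from the two ingredients developed above: a variational existence-and-bound statement for the eigenpair, and an Agmon-type exponential localisation of the eigenfunction in the classically forbidden region.

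First I would treat existence and the bound $\omega\le\omega_{\text{max}}$. The rescaled problem \eqref{equation poisson F_2 v2} is a Dirichlet eigenvalue problem for the Schr\"odinger-type operator $-\bar\Delta+A^{-2}M_{ab}\bar x^a\bar x^b$ on the ball $\{|\bar x|_2\le\delta_0^{-\delta_1}\}$, so standard elliptic spectral theory produces a discrete non-decreasing sequence of positive eigenvalues. To bound the lowest of them \emph{independently of $\delta_0$} I would invoke the min--max characterisation together with the two monotonicity principles that enlarging the domain lowers eigenvalues and enlarging the potential raises them: comparing with the Dirichlet problem on the fixed unit cube $\mathcal{U}=\{|\bar x^a|<1:a\in X_2\}$, with the potential replaced by its supremum over $\mathcal{U}$, produces the explicit eigenvalues $\omega_+=\tfrac12(A^{-1}\lambda_{\text{max}}+A\pi^2\sum_i n_i^2)$, which are finite and $\delta_0$-independent; this fixes $\omega_{\text{max}}$. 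Selecting $\omega$ to be such an eigenvalue and normalising $F_2$ in $L^2$, multiplying \eqref{equation poisson F_2 v2} by $F_2$ and integrating by parts gives $\int|\bar\nabla F_2|^2+A^{-2}\int M_{ab}\bar x^a\bar x^b|F_2|^2=2A^{-1}\omega\le 2A^{-1}\omega_{\text{max}}$, so both the Dirichlet energy and the potential energy of the eigenfunction are controlled uniformly in $\delta_0$; undoing the rescaling $\bar x=\delta_0^{-1/2}x$ turns the first bound into the $\delta_0^{-1}$ bound on $\|\nabla F_2\|_{L^2}^2$ recorded above.

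The heart of the proof is the Agmon localisation. I would first record the weighted identity of the exponentially weighted energy estimate, obtained by expanding $|\bar\nabla(e^D\phi)|^2=e^{2D}\big(|\bar\nabla\phi|^2+2\phi\,\bar\nabla D\cdot\bar\nabla\phi+|\bar\nabla D|^2|\phi|^2\big)$ and integrating the cross term by parts using $\phi|_{\partial}=0$. Applying it with $\phi=F_2$, $W=A^{-2}M_{ab}\bar x^a\bar x^b-2A^{-1}\omega$ (so the right-hand side vanishes, since $-\bar\Delta F_2+WF_2=0$), and $D=(1-\delta_2)d_\omega^{\text{classic}}$, the key structural fact is the eikonal identity $|\bar\nabla d_\omega^{\text{classic}}|^2=\max(W,0)$: in the forbidden region this gives $W-|\bar\nabla D|^2=(2\delta_2-\delta_2^2)W\ge\delta_2^2$, so the left-hand side bounds $\int_{\mathcal{U}_{\text{forbidden}}}e^{2(1-\delta_2)d_\omega^{\text{classic}}}|F_2|^2$ from below, while on its complement $D$ is at most $a_\omega(\delta_2)$, which tends to $0$ as $\delta_2\to0$ \emph{uniformly in $\delta_0$} because that set stays near the classical region and the classical region sits inside a fixed ball; this is exactly \eqref{equation Agmon 1}. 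I would then take the shell $\mathcal{U}_{\text{f}}=\{\tfrac12\delta_0^{-\delta_1}<|\bar x|_2\le\delta_0^{-\delta_1}\}$, note that there $W\gtrsim|\bar x|_2^2\gtrsim\delta_0^{-2\delta_1}$ and hence $d_\omega^{\text{classic}}\gtrsim\delta_0^{-2\delta_1}$ (the Agmon distance from the shell down to a fixed ball integrates $\sqrt{W}$ over a length $\sim\delta_0^{-\delta_1}$), and combine with \eqref{equation Agmon 1} and the normalisation to get $\int_{\mathcal{U}_{\text{f}}}|F_2|^2\lesssim e^{-c_1\delta_0^{-2\delta_1}}$. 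For the gradient I would return to \eqref{equation Agmon 1} keeping instead the $|\bar\nabla(e^D F_2)|^2$ term, expand it, absorb the cross term by Cauchy--Schwarz, and use $|\bar\nabla d_\omega^{\text{classic}}|^2\lesssim\delta_0^{-2\delta_1}$ on $\mathcal{U}_{\text{f}}$ together with the $L^2$ bound just obtained; the resulting polynomial prefactor $\delta_0^{-2\delta_1}$ is absorbed into a slightly smaller $c_1$. Undoing the rescaling $\bar x=\delta_0^{-1/2}x$, which only produces further powers of $\delta_0$ dominated by the exponential, yields Proposition~\ref{proposition Agmon bounds}.

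The step I expect to be the main obstacle is not any single inequality but maintaining \emph{uniformity in $\delta_0$} throughout: $\omega_{\text{max}}$, the implied constants in all the $\lesssim$'s, and above all $c_1$ and $a_\omega(\delta_2)$ must not depend on $\delta_0$. This is precisely what dictates the rescaling $\bar x=\delta_0^{-1/2}x$, chosen so that the potential $A^{-2}M_{ab}\bar x^a\bar x^b$ carries no $\delta_0$ at all; the cost is that the domain $\{|\bar x|_2\le\delta_0^{-\delta_1}\}$ grows, which is harmless for eigenvalue \emph{upper} bounds and in fact helpful for the localisation, since it buries the shell $\mathcal{U}_{\text{f}}$ deep in the classically forbidden region where the potential, and hence the Agmon distance, is large. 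A secondary point to watch is controlling the geometry of the classical and forbidden regions and the limit $a_\omega(\delta_2)\to0$ uniformly in $\delta_0$; this needs only that $M$ restricted to the $X_2$ block is a fixed strictly positive matrix and that $\omega$ ranges over a fixed bounded set.
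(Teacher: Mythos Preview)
Your proposal is correct and follows essentially the same approach as the paper: variational comparison with a fixed cube to get a $\delta_0$-independent eigenvalue bound, then the exponentially weighted energy identity with $D=(1-\delta_2)d_\omega^{\text{classic}}$ and $W=A^{-2}M_{ab}\bar x^a\bar x^b-2A^{-1}\omega$ (so the right-hand side vanishes), the split into forbidden region and complement, the choice of the shell $\mathcal{U}_{\text{f}}$ with $d_\omega^{\text{classic}}\gtrsim\delta_0^{-2\delta_1}$ there, and the Cauchy--Schwarz argument for the gradient with the polynomial loss absorbed into $c_1$. Your remarks on maintaining uniformity in $\delta_0$ identify exactly the points the paper is careful about.
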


\subsubsection{Estimating the error terms}

We now wish to plug our choice of initial data into the equation for $TT\phi$ (\eqref{equation TTphi}) and obtain bounds on the size of this term, which appears in both the conserved energy and the nondegenerate energy of the wave $T\phi$. Specifically, our choice of the initial data is the following:
\begin{equation*}
 \begin{split}
  \phi_0 \big|_{\Sigma_t} &= \mathfrak{Re} \left( e^{i\delta_0^{-1} y} \right) F_2(x) \chi\left( \delta_0^{-\frac{3}{4}} |y| \right) \chi\left( \delta_0^{-\frac{2}{5}} |x|_1 \right) \chi\left( \delta_0^{-\frac{1}{2} + \delta_1} |x|_2 \right) \\
  T\phi_0 \big|_{\Sigma_t} &= \mathfrak{Re} \left( -i \omega e^{i\delta_0^{-1} y} \right) F_2(x) \chi\left( \delta_0^{-\frac{3}{4}} |y| \right) \chi\left( \delta_0^{-\frac{2}{5}} |x|_1 \right) \chi\left( \delta_0^{-\frac{1}{2} + \delta_1} |x|_2 \right)
 \end{split}
\end{equation*}
where $F_2$, $\omega$ and $\delta_1$ are as above.

For the sake of brevity we define
\begin{equation*}
 \begin{split}
  \chi_0 &:= \chi\left( \delta_0^{-\frac{3}{4}} |y| \right) \\
  \chi_1 &:= \chi\left( \delta_0^{-\frac{2}{5}}|x|_1 \right) \\
  \chi_2 &:= \chi\left( \delta_0^{-\frac{1}{2} + \delta_1} |x|_2 \right)
 \end{split}
\end{equation*}
Similarly, in order to easily keep track of the scaling of each quantity with respect to $\delta_0$, we define
\begin{equation*}
 \begin{split}
  \chi_0' &:= \chi'\left( \delta_0^{-\frac{3}{4}} |y| \right) \\
  \chi_0'' &:= \chi''\left( \delta_0^{-\frac{3}{4}} |y| \right)
 \end{split}
\end{equation*}
and similarly for $\chi_1$ and $\chi_2$. Note that we have, for example, 
\begin{equation*}
\partial_y \chi(\delta_0^{-\frac{3}{4}} |y|) = \delta_0^{-\frac{3}{4}} \frac{y}{|y|} \chi_0'
\end{equation*}

Also note that we have
\begin{equation*}
\begin{split}
	(|b|^2 - a) &= M_{ab} x^a x^b + \mathcal{O}(|x|^3) + \mathcal{O}(|y||\tilde{x}|) \\
	&= \mathcal{O}(|x|_2^2) + \mathcal{O}(|x|^3) + \mathcal{O}(|y||\tilde{x}|) \\
\end{split}
\end{equation*}

Using equation \eqref{equation TTphi} we now calculate
\begin{equation*}
  TT\phi = \mathfrak{Re}\left( e^{i\delta_0^{-1}y} \right)\left( 2A \omega \delta_0^{-1} F_2 + A^2 \Delta F_2 - M_{ab} x^a x^b \delta_0^{-2} F_2 \right) \chi_0 \chi_1 \chi_2 + \text{Err}
\end{equation*}
where the error term is given by
\begin{equation*}
\begin{split}
	\text{Err} &= \mathfrak{Re} (\text{Err}_1) \\
	\text{Err}_1 &=
	- \delta_0^{-\frac{3}{4}} 2Ai\omega e^{i\delta_0^{-1} y} \frac{y}{|y|} F_2 \chi_0' \chi_1 \chi_2
	+ \delta_0^{-\frac{4}{5}} e^{i\delta_0^{-1} y} |X_1| F_2 \chi_0 \chi_1'' \chi_2
	\\
	&\phantom{=}
	+ A^2 e^{i\delta_0^{-1} y} \chi_0 \chi_1 \left( -2C_0\delta_0^{-\frac{1}{2} + \delta_1} \chi_2' \frac{x^a}{|x|}\nabla_a F_2  + C_0^2 \delta_0^{-1 + 2\delta_1} F_2 \chi_2'' \right)
	\\
	&\phantom{=}
	- \left( (|b|^2 - a) - M_{ab} x^a x^b \right) \delta_0^{-2} e^{i\delta_0^{-1} y} F_2 \chi_0 \chi_1 \chi_2
	\\
	&\phantom{=}
	+ (|b|^2 - a) \left( i\delta_0^{-\frac{7}{4}} e^{i\delta_0^{-1}y} \frac{y}{|y|} F_2 \chi_0' \chi_1 \chi_2
		+ \delta_0^{-\frac{3}{2}} e^{i\delta_0^{-1}y} F_2 \chi_0'' \chi_1 \chi_2 \right)
	+ \mathcal{O}(|\tilde{x}||\partial u|)
	\\
	&\phantom{=}
	+ \mathcal{O}(|\tilde{x}||\partial T \phi|)
	+ \mathcal{O}(|\tilde{x}|^2 |\bar{\partial}\partial \phi|)
	+ \mathcal{O}(|\tilde{x}|^3 |\partial^2 \phi|)
\end{split}
\end{equation*}

These first two error terms are easy to estimate: we have
\begin{equation*}
\begin{split}
	\left| - \delta_0^{-\frac{3}{4}} 2Ai\omega e^{i\delta_0^{-1} y} \frac{y}{|y|} F_2 \chi_0' \chi_1 \chi_2 \right|
	&\lesssim \delta_0^{-\frac{3}{4}}
	\\
	\left| \delta_0^{-\frac{4}{5}} e^{i\delta_0^{-1} y} |X_1| F_2 \chi_0 \chi_1'' \chi_2 \right|
	&\lesssim \delta_0^{-\frac{4}{5}}
\end{split}
\end{equation*}
Next, due to the support of $\chi_1 \chi_2 \chi_3$ we have
\begin{equation*}
	\left| \left( (|b|^2 - a) - M_{ab} x^a x^b \right) \delta_0^{-2} e^{i\delta_0^{-1} y} F_2 \chi_0 \chi_1 \chi_2 \right|
	\lesssim \delta_0^{-\frac{17}{20}}
\end{equation*}
Similarly, we have
\begin{equation*}
	\left| (|b|^2 - a) \left( i\delta_0^{-\frac{7}{4}} e^{i\delta_0^{-1}y} \frac{y}{|y|} F_2 \chi_0' \chi_1 \chi_2
		+ \delta_0^{-\frac{3}{2}} e^{i\delta_0^{-1}y} F_2 \chi_0'' \chi_1 \chi_2 \right) \right|
	\lesssim \delta_0^{-\frac{3}{4} + 2\delta_1}
\end{equation*}
and also
\begin{equation*}
	\left| \mathcal{O}(|\tilde{x}||\partial \phi|)
	+ \mathcal{O}(|\tilde{x}||\partial T \phi|)
	+ \mathcal{O}(|\tilde{x}|^2 |\bar{\partial}\partial \phi|)
	+ \mathcal{O}(|\tilde{x}|^3 |\partial^2 \phi|) \right|
	\lesssim
	\delta_0^{-\frac{4}{5}}
\end{equation*}

The only terms remaining are those involving $\chi_2'$ and $\chi_2''$. Specifically, we must bound the terms
\begin{equation*}
	A^2 e^{i\delta_0^{-1} y} \chi_0 \chi_1 \left( -2C_0\delta_0^{-\frac{1}{2} + \delta_1} \chi_2' \frac{x^a}{|x|}\nabla_a F_2  + C_0^2 \delta_0^{-1 + 2\delta_1} F_2 \chi_2'' \right)
\end{equation*}
A na\"ive approach to bounding these terms (in a similar manner to the bounds above) suggests that the first term behaves like $\delta^{-1 + \delta_1}$ and the second term like $\delta^{-1 + 2\delta_1}$, but these bounds are insufficient for our purposes. Instead, we use the Agmon esimtates of the previous section. Since $\chi_2'$ and $\chi_2''$ are both supported only in the region $\frac{1}{2} \delta_0^{-\frac{1}{2} + \delta_0} $, proposition \ref{proposition Agmon bounds} gives
\begin{equation*}
	\bigg|\bigg| A^2 e^{i\delta_0^{-1} y} \chi_0 \chi_1 \left( -2C_0\delta_0^{-\frac{1}{2} + \delta_1} \chi_2' \frac{x^a}{|x|}\nabla_a F_2  + C_0^2 \delta_0^{-1 + 2\delta_1} F_2 \chi_2'' \right) \bigg|\bigg|_{L^2}
	\lesssim
	\delta_0^{-1+2\delta_1} e^{-c_1 \delta_0^{-2\delta_1}}
\end{equation*}
so this actually \emph{decays} as $\delta_0 \rightarrow 0$.

Putting together all of these calculations, we find that
\begin{equation*}
	||\text{Err}||_{L^2} \lesssim \delta_0^{-\frac{17}{20}}
\end{equation*}

Finally, we rescale the data, so that it satisfies $\mathcal{E}^{(T)}[T\phi] = 1$. This means multiplying by a constant that scales as $\delta_0^{\frac{17}{20}}$ By the calculations above, along with the expressions for the $N$-energy \eqref{equation N energy} and the $T$-energy \eqref{equation T energy} we have found data such that
\begin{equation*}
\begin{split}
	\mathcal{E}^{(T)}[T\phi] &= 1 \\
	\mathcal{E}^{(N)}[T\phi] & \gtrsim \delta_0^{-\frac{3}{20}}
\end{split}
\end{equation*}
Now by choosing $\delta_0$ sufficiently small we can prove claim \ref{claim initial data 1}.

The construction in the asymptotically Kaluza-Klein case follows along almost identical lines, beginning from equation \eqref{equation wave local coords KK} instead of \eqref{equation wave local coords AF} and constructing data which is symmetric in the Kaluza-Klein directions.

\subsection{Non-decay of energy and an Aretakis-type instability}

Now that we have proved proposition \ref{proposition non decay on ergosurface}, we know that \emph{if} assumption \ref{assumption boundedness} holds, then there is some constant $\mathring{C} > 0$ and some solution $\phi$ to the wave equation such that 
\begin{equation*}
	\mathcal{E}^{(N)}_{\mathcal{U}} [T\phi] (t) + \mathcal{E}^{(N)}_{\mathcal{U}} [T^2 \phi] (t) \geq \mathring{C}
\end{equation*}
for all times $t$, and for any $T$-invariant open set $\mathcal{U}$ such that $\mathcal{S} \subset \mathcal{U}$. On the other hand, proposition \ref{proposition decay away from extended ergosurface} together with equation \ref{equation decay on interior} show that, if $\delta > 0$ and $\mathcal{U}$ is such that $S \cap \mathcal{U}_{(\delta)} = \emptyset$, then for \emph{all} $\epsilon > 0$ there is some time $\tau_\epsilon$ such that
\begin{equation*}
	\mathcal{E}^{(N)}_{\mathcal{U}} [T\phi] (\tau_\epsilon) + \mathcal{E}^{(N)}_{\mathcal{U}} [T^2 \phi] (\tau_\epsilon) \leq \epsilon
\end{equation*}

From this, it follows that, if we take any precompact, open set $\mathcal{U}_0 \subset \Sigma_0$ such that $S \subset \mathcal{U}_0$, and if $\mathcal{S}_{(\epsilon)}$ is the $\delta$-thickening of $\mathcal{S}$ defined such that
\begin{equation*}
	\text{Volume}(\mathcal{S}_{\epsilon,0}) = \epsilon
\end{equation*}
for some constant $\epsilon > 0$, and where the volume is defined with respect to the induced Riemannian metric on $\Sigma_0$. Then, there is some time $\tau_\epsilon$ such that
\begin{equation*}
\begin{split}
	\mathcal{E}^{(N)}_{\mathcal{U}_0} [T\phi] (\tau_\epsilon) + \mathcal{E}^{(N)}_{\mathcal{U}_0} [T^2 \phi] (\tau_\epsilon) \geq \mathring{C} \\
	\mathcal{E}^{(N)}_{\mathcal{U} \setminus \mathcal{S}_\epsilon} [T\phi] (\tau_\epsilon) + \mathcal{E}^{(N)}_{\mathcal{U} \setminus \mathcal{S}_\epsilon} [T^2 \phi] (\tau_\epsilon) \leq \epsilon \\
\end{split}
\end{equation*}
from which it follows that, if we choose $\epsilon$ sufficiently small
\begin{equation*}
	\mathcal{E}^{(N)}_{\mathcal{S}_\epsilon} [T\phi] (\tau_\epsilon) + \mathcal{E}^{(N)}_{\mathcal{S}_\epsilon} [T^2 \phi] (\tau_\epsilon) \geq \frac{1}{2}\mathring{C}
\end{equation*}
This is a kind of ``energy concentration'' phenomenon. It is easy to see from this result that the wave $\phi$ must blow up pointwise: indeed, we must have
\begin{equation*}
	\sup_{\mathcal{S}_\epsilon \cap \Sigma_{\tau_\epsilon}} \left( |\partial T \phi|^2 + |\partial T^2 \phi|^2 \right) \geq \frac{\mathring{C}}{2\epsilon}
\end{equation*}
since $\epsilon$ can be taken arbitrarily small, this establishes pointwise blow-up (without a rate).

We refer to this as an ``Aretakis-type'' instability because its similarity to the instability found in \cite{Aretakis2011, Aretakis2012a}. In particular, in both cases there is some quantity which is ``conserved'' on one hypersurface, but which decays everywhere else, and this is the cause of some kind of pointwise blowup.

\section{Spacetimes with additional symmetry}
\label{section additional symmetry}


Now we turn to spacetimes with an additional symmetry. Along with the symmetry generated by the Killing vector field $T$, in this section we will assume the existence of another Killing vector field $\Phi$, such that the span of $\{T, \Phi\}$ is \emph{timelike} in a neighbourhood of the ergosurface $S$. For simplicity, we shall assume that the Killing vector field $\Phi$ is an axial Killing field, i.e.\ that its integral curves are closed and spacelike. Moreover, we will assume that $T$ and $\Phi$ commute. Hence, there is a Killing vector field $\tilde{T}$ such that
\begin{itemize}
	\item $\tilde{T} = \alpha T + \beta \Phi$ for some constants $\alpha$ and $\beta$
	\item $\tilde{T}$ is timelike and future directed in a neighbourhood of $S$
\end{itemize}
It turns out that, given the presence of an additional symmetry of this sort, we can give many additional details regarding the instability discussed above. First, we find that we do not require the discrete isometry of section\ref{section the discrete isometry}. In addition, we can show that the energy of waves \emph{is} bounded, not in terms of the initial energy, but in terms of a ``higher order'' energy quantity. Furthermore, we can rule out case \ref{case (B)}, showing that the Aretakis-type instability cannot occur, and instead we will always encounter the unbounded local energy amplification of case \ref{case (A)}. Additionally, we can provide an upper bound on the time for this energy amplification to occur, which may be important for physical applications. We can also provide an example of unit-energy initial data which is \emph{not} compactly supported, but which gives rise to a solution of the wave equation with unbounded local energy. In other words, rather than a family of solutions, each of which exhibits energy amplification by a larger and larger factor, we can provide a \emph{single} solution of the wave equation, for which the local energy tends to infinity along a certain sequence of times. Finally, the additional symmetry allows us to deal with the issue of \emph{higher derivatives}. Consider the situation in which we know that the initial \emph{higher order} energy is small: then, in the case where this extra symmetry is present, we can prove that this same higher order energy can be amplified by an arbitrarily large constant.

Finally, we can use our results to rule out the existence of a manifold with an evenescent ergosurface, an additional symmetry of the kind described above, and a \emph{globally timelike} Killing vector field. Note that this is a result in pure differential geometry - \emph{a priori} this has nothing to do with the wave equation. However, we can use the properties of solutions of the wave equation to prove that such a manifold cannot exist.

\subsection{``Time reversal'' without a discrete isometry}

First, we describe how to deal with the issue of ``time reversal'' when we lack the discrete isometry of section \ref{section the discrete isometry}.

Let $\Sigma_0$ be some spacelike Cauchy surface for $\mathcal{M}$. Then, using the fact that $T$ (or $V$) is causal and transverse to $\Sigma_0$ we can construct a foliation of $\mathcal{M}$ by leaves $\Sigma_t$, where these leaves are the level sets of the function $t$, defined by
\begin{equation*}
	\begin{split}
		t\big|_{\Sigma_0} &= 0
		\\
		T(t) = 1
	\end{split}
\end{equation*}
As usual, we replace $T$ with $V$ in the asymptotically Kaluza-Klein case.

Now, we construct the ``time reversal'' operator as follows:
\begin{equation*}
	\begin{split}
		\mathscr{T} : \mathcal{M} &\rightarrow \mathcal{M} \\
		p \in \Sigma_t &\mapsto q \in \Sigma_{-t} \text{ \, where } q \text{ is such that an integral curve of } T \text{ passes through } p \text{ and } q
	\end{split}
\end{equation*}

Note that $\mathscr{T}$ is \emph{not} necessarily an isometry. Nevertheless, the ``time reversed'' manifold $(\mathcal{M}, \mathscr{T}^*(g))$ will possess the same important properties as the manifold $(\mathcal{M}, g)$: it will have an evanescent ergosurface and an additional symmetry of the correct kind.

\subsection{Boundedness with a loss of derivatives}

Now, we can show that this additional symmetry leads to \emph{energy boundedness with a loss of derivatives}. Specifically, we can prove a statement of the form \ref{assumption boundedness}, but where, on the right hand side, instead of the $N$-energy we see a ``higher order'' $N$-energy. Note that this will also hold for the ``time reversed'' manifold constructed in the subsection above.

Recall that, in a neighbourhood of $\mathcal{S}$, the vector field $\tilde{T}$ is timelike. We can express $\tilde{T}$ in terms of the frame constructed in section \ref{section adapted coordinates} as
\begin{equation*}
	\tilde{T} = \tilde{T}^T T + \tilde{T}^Y \hat{Y} + \tilde{T}^a e_a + \tilde{T}^A e_A
\end{equation*}
where the last term is absent in the case of an asymptotically flat manifold. Then, since $\tilde{T}$ is timelike, on the ergosurface $\mathcal{S}$ we have
\begin{equation*}
	2\tilde{T}^T \tilde{T}^Y \frac{g(T, n)}{\sqrt{g(Y,Y)}} + (\tilde{T}^Y)^2 + \sum_a (\tilde{T}^a)^2 + \sum_A (\tilde{T}^A)^2 < 0
\end{equation*}
Since $\tilde{T}$ is future-directed, it follows that $\tilde{T}^T > 0$. In turn, this implies that $\tilde{T}^Y > 0$ (note that $g(T, n) < 0$) everywhere on $S$. Since $S$ is a compact submanifold, it follows that there is some constant $C_{\tilde{T}} > 0$ such that $\tilde{T}^Y \geq C_{\tilde{T}}$ everywhere on $S$.

Now, if we apply the $T$-energy estimate to the field $\tilde{T}\phi$, we find that
\begin{equation*}
	\mathcal{E}^{(T)}[\tilde{T}\phi](\tau) = \mathcal{E}^{(T)}[\tilde{T}\phi](\tau_0)
\end{equation*}
for all $\tau \geq \tau_0$. In particular, from equation \eqref{equation T energy}, we see that, near $S$, we have
\begin{equation*}
	\mathcal{E}^{(T)}[\tilde{T}\phi](\tau) \sim \int_{\Sigma_\tau} \left( (T \tilde{T}\phi)^2 + \sum_a (\partial_{x^a} \tilde{T} \phi)^2 + \mathcal{O}(|\tilde{x}|^2)(\partial \tilde{T}u) \right) \upd y \, \upd x^1 \ldots \upd x^{D-1}
\end{equation*}
Now, using a Hardy inequality (see, for example, \cite{Moschidis2016}) we find that we have a bound of the form
\begin{equation*}
	\int_{\Sigma_\tau} h(x) (\tilde{T}\phi)^2 \upd y \, \upd x^1 \ldots \upd x^{D-1}
	\lesssim
	\mathcal{E}^{(T)}[\tilde{T}\phi](\tau)
\end{equation*}
where $h(x)$ is some positive, $T$-invariant (and $\mathcal{G}$-invariant, in the Kaluza-Klein case) function which tends to zero in the asymptotic region as $r \rightarrow \infty$, but which is otherwise bounded away from zero\footnote{In fact, we can choose $h(x) \sim r^{-2}$.}. In particular, $h(x)$ is bounded away from zero in a neighbourhood of $\mathcal{S}$.

Combining this bound with the calculation above, we see that, if $\mathcal{U}$ is a sufficiently small, $T$-invariant neighbourhood of $\mathcal{S}$, then we have
\begin{equation*}
\begin{split}
	&\int_{\Sigma_\tau} \left(
		(\hat{Y}\phi)^2
		+ (T \phi)^2
		+ \sum_a (\partial_{x^a} \phi)^2
		+ \sum_A (\partial_{x^A} \phi)^2
	\right) \upd y \, \upd x^1 \ldots \upd x^{D-1}
	\\
	&\lesssim \mathcal{E}^{(T)}[\phi](\tau) + \frac{1}{C_{\tilde{T}}^2} \mathcal{E}^{(T)}[\tilde{T}\phi](\tau)
\end{split}
\end{equation*}
which in turn yields the global bound
\begin{equation}
\label{equation boundedness with loss of derivatives}
\begin{split}
	\mathcal{E}^{(N)}[\phi](\tau)
	&\lesssim
	\mathcal{E}^{(T)}[\phi](\tau) + \frac{1}{C_{\tilde{T}}^2} \mathcal{E}^{(T)}[\tilde{T}\phi](\tau)
	\\
	&\lesssim
	\mathcal{E}^{(T)}[\phi](0) + \frac{1}{C_{\tilde{T}}^2} \mathcal{E}^{(T)}[\tilde{T}\phi](0)
\end{split}
\end{equation}

We refer to the bound in equation \eqref{equation boundedness with loss of derivatives} as a \emph{uniform boundedness estimate with a loss of derivatives}. It provides ``uniform boundedness'' of the \emph{non-degenerate} $N$-energy in terms of conserved quantities, which can therefore be evaluated in terms of the initial data. On the other hand, the quantity that we bound (namely the $N$-energy) involves only \emph{first} derivatives of the solution, and yet in order to control it we find that we need information regarding the \emph{second} derivatives of the initial data. We refer to this as a \emph{loss of derivatives}. We will show, below, that it is actually necessary to lose derivatives in this way: there is no such uniform boundedness estimate in terms of the first derivatives alone.

\subsection{Ruling out case \ref{case (B)}}

In the general case considered in section \ref{section general case}, we were led to a dichotomy: we either had amplification of the local energy by an arbitrarily large factor (case \ref{case (A)}), or else we had an Aretakis-type instability (case \ref{case (B)}). We can now show that, in the situation of enhanced symmetry now under consideration, this latter case cannot occur, and so we must have the kind of behaviour considered in case \ref{case (A)}.

Recall that the behaviour of case \ref{case (B)} can occur \emph{only if} the local energy \emph{does not decay} towards the past, that is, if there is some smooth solution $\phi$ to $\Box_g \phi = 0$ such that $\phi$ arises from compactly supported initial data satisfying\begin{equation*}
	\sum_{j = 0}^{3} \mathcal{E}^{(N)}[T^j \phi](0) < \infty
\end{equation*}
and some positive constant $\mathring{C}$ such that, for all times $\tau < 0$, we have
\begin{equation}
\label{equation non decay for contradiction}
	\mathcal{E}^{(N)}_{\mathcal{U}}[T\phi](\tau) + \mathcal{E}^{(N)}_{\mathcal{U}}[T^2 \phi](\tau) \geq \mathring{C}
\end{equation}
We now show that, in the case of enhanced symmetry, such a solution cannot exist.


Recall that the argument of \cite{Moschidis2015} establishes logarithmic decay of the $N$-energy on manifolds with the same asymptotic structure as those we are considering\footnote{With the required modifications for $\mathcal{G}$-invariant data on asymptotically Kaluza-Klein manifolds as discussed in section \ref{section estimates in KK spaces}.}, under the assumptions that:
\begin{enumerate}
	\item A \emph{uniform boundedness} statement holds. Note that we now have such a uniform boundedness statement, albeit in terms of a higher order energy of the initial data.
	\item Either the spacetime has a horizon, in which case it is allowed to have a suitably ``small'' ergoregion (in a way made precise in \cite{Moschidis2015}), or the asymptotically timelike Killing field $T$ is globally and uniformly timelike.
\end{enumerate}

Note that the spacetimes we are considering do not obey this second condition, since the asymptotically timelike Killing field is null on the evanescent ergosurface. Nevertheless, we can modify the arguments of \cite{Moschidis2015} to show that we still have some decay of solutions to the wave equation.

Most of the estimates of \cite{Moschidis2015} still apply to the kinds of manifolds we are considering - in particular, all of the estimates in the ``asymptotic region'' still apply. However, the low frequency estimates of section 6 and the Carleman estimates in section 7 of \cite{Moschidis2015}, which are used to establish an ``integrated local energy decay'' estimate in the interior region, need to be significantly modified if they are to apply to the kinds of spacetimes we are considering, since it is in the proof of these estimates that the failure of the vector field $T$ to be globally timelike causes an issue.

Fortunately, for the spacetimes under consideration here, such a modification is possible, and we will sketch the details below. The main idea is that, when we have an additional axial Killing field of the kind we are assuming, then we can simultaneously decompose solutions to the wave equation into (time) frequency-localized and angular frequency-localized components. Then, the Carleman estimates performed in \cite{Moschidis2015} can be shown to apply to each of the angular frequency components separately, but with an additional degenerate factor that degenerates at high angular frequency. Finally, we can perform a double interpolation argument, first showing that the individual angular frequency components decay logarithmically, and then showing that the entire solution decays sub-logarithmically.

\subsubsection{The angular frequency decomposition}

We first note that, since $\Phi$ is a Killing vector field with closed orbits, and since $[T, \Phi] = 0$, we can define some coordinate $\varphi$ with period $2\pi$ such that (rescaling $\Phi$ if necessary) $\Phi(\varphi) = 1$, Moreover, the level sets of $\tau$ can be chosen such that the integral curves of $\Phi$ lie within the level sets of $\tau$. Then, we can decompose a solution to the wave equation $\phi$ in terms of ``axial modes''
\begin{equation*}
\begin{split}
	\phi &= \sum_{n = -\infty}^{\infty} \phi_{(n)}
	\\
	\Phi (\phi_{(n)}) &= i n\phi_{(n)}
\end{split}
\end{equation*}
In other words, each of the $\phi_{(n)}$ is of the form $\phi_{(n)} = e^{in\varphi} \tilde{\phi}_{(n)}$ where $\Phi(\tilde{\phi}_{(n)}) = 0$. Note that, by orthogonality, for each $n$ we have
\begin{equation*}
	\Box_g (\phi_{(n)}) = 0
\end{equation*}

The idea is that we will apply the logarithmic decay result of \cite{Moschidis2015} \emph{seperately} for each axial mode. When doing so, we will have to keep track of the dependence of various constants on the mode number $n$. The key estimates are the low frequency estimates of section 6 and the Carleman estimates of section 7 of \cite{Moschidis2015}; it is easy to see that the estimates in the asymptotic region apply to the solutions $\phi_{(n)}$. Note also that, since $[T, \Phi] = 0$, we can simultaneously perform the angular frequency decomposition and also decompose the solutions with respect to time frequencies.

Below we will sketch the required modifications to the arguments given in \cite{Moschidis2015}. Many of the arguments are almost identical, so we will only go into depth in those places which differ significantly from the argument presented in \cite{Moschidis2015}.

\subsubsection{Integrated local energy decay for very low frequencies}

The approach of \cite{Moschidis2015} involves dividing up the (time) frequency range into a low frequency part $\omega \lesssim \omega_0$, intermediate frequency parts $\omega \sim \omega_k$ and a high frequency part $\omega \gtrsim \omega_+$. In our case, this division will itself depend on the value of $n$, the angular quantum number.

First, we define some $\omega_0$, which is sufficiently small compared with various geometric quantities (as in \cite{Moschidis2015}) and which also satisfies
\begin{equation*}
	\omega_0 \lesssim \frac{1}{n}
\end{equation*}
The corresponding frequency localised wave will be denoted by $\uppsi_{0, n}$

Then, we can repeat the calculations of section 6 of \cite{Moschidis2015}, using the current
\begin{equation*}
	J_\mu
	:=
	h_R \uppsi_{0,n} \partial_\mu \uppsi_{0,n}
	- \frac{1}{2} \partial_\mu h_R \uppsi_{0,n}^2
\end{equation*}
The construction of the function $h_R$ is identical to that given in \cite{Moschidis2015}. We can see (using the expression for the inverse metric given in \eqref{equation expression for inverse metric} together with the fact that $\{T, \Phi\}$ span a timelike direction, so that $\Phi$ must have a component in the $\partial_y$ direction) that
\begin{equation*}
	h_R \partial^\mu \uppsi_{0,n} \partial_\mu \uppsi_{0,n}
	\geq
	c \cdot h_R |\nabla_{\Sigma, \text{degen}} \uppsi_{0,n}|^2
	- C \cdot h_R |T \uppsi_{0,n}| \left( |T \uppsi_{0,n}| + |\Phi \uppsi_{0,n}| \right)
\end{equation*}
where $c$ and $C$ are some numerical constants, and $|\nabla_{\Sigma, \text{degen}} \uppsi_{0,n}|^2$ includes all of the spatial derivatives of $\uppsi_{0,n}$ \emph{except} for the $\Phi$ derivative on the surface $\mathcal{S}$. More precisely, the coefficient of $|\Phi \uppsi_{0,n}|^2$ degenerates quadratically at $\mathcal{S}$.

The remainder of the calculations proceed in the same way as those in section 6 of \cite{Moschidis2015}, except that there is no black hole horizon, and so many of the calculations are easier (see the footnotes in \cite{Moschidis2015}). Following these calculations we can show that, in the notation of \cite{Moschidis2015},
\begin{equation*}
\begin{split}
	\int_{ \{r\leq R\} \cap \mathcal{R}(0, t^*) } \left(
		|\nabla_{\Sigma, \text{degen}} \uppsi_{0,n}|^2
		+ |\uppsi_{0,n}|^2
	\right)
	&\lesssim
	(1 + \omega_0^2 + n^2)\mathcal{E}^{(T)}[\psi_{0,n}](0)
	\\
	&\phantom{\lesssim}
	+ \int_{ \{r\leq R\} \cap \mathcal{R}(0, t^*) } \left(
		(\omega_0) (\omega_0 + n) |\uppsi_{0,n}|^2
	\right)
\end{split}
\end{equation*}
Hence, if $\omega_0$ is sufficiently small relative to the geometry \emph{and} to $\frac{1}{n}$, then we can absorb the second term on the right hand side by the left hand side.

This proves a \emph{degenerate} integrated local energy decay statement, since we still do not control all of the derivatives of $\uppsi_{0,n}$ -- we are missing the $\Phi$ derivatives on the surface $\mathcal{S}$. We can fix this by commuting once with $\Phi$. We obtain, in the end,
\begin{equation*}
	\int_{ \{r\leq R\} \cap \mathcal{R}(0, t^*) } \left(
		|\partial \uppsi_{0,n}|^2
		+ |\uppsi_{0,n}|^2
	\right)
	\lesssim
	(1 + \omega_0^2 + n^2)(1+n^2)\mathcal{E}^{(T)}[\psi_{0,n}](0)
\end{equation*}

\subsubsection{Integrated local energy decay for intermediate frequencies}

For the intermediate frequencies $\omega_k$ we can also follow the calculations of section 7 of \cite{Moschidis2015}. The calculations in this section make use of the fact that the frequencies are bounded away from zero. Hence, when repeating these calculations, we must keep in mind the fact that, in our case, $\omega_0 \sim n^{-1}$, and so we must track the dependence of various constants on the value of $\omega_0$.

As in \cite{Moschidis2015}, we extend the function $r$ from the asymptotic region (where it is the pullback of the spherical polar radial coordinate on $\mathbb{R}^d$ or $\mathbb{R}^d \times X$) to the entire hypersurface $\Sigma_t$ by requiring that $r$ is a Morse function. Moreover, we can arrange that, say, $r = r_0$ on $\mathcal{S}$, and $\upd r \neq 0$ on $\mathcal{S}$. Furthermore, we can set $\Phi(r) = 0$ in a neighbourhood of $\mathcal{S}$. Finally, we can also arrange that in the region $r \leq 2r_0$ the vector field $\tilde{T}$ is uniformly timelike.

The construction of the two Morse functions $\omega$, $\omega'$ required for the energy currents in section 7 of \cite{Moschidis2015} proceeds exactly as in \cite{Moschidis2015} - note that we only construct these functions away from $\mathcal{S}$, in the region where $T$ is uniformly timelike. Note that, for $r_0 \leq r \leq 2r_0$, we have $\omega = \omega' = r$.

Note that, as in \cite{Moschidis2015}, $\partial^\mu \omega \partial_\mu \omega \neq 0$ away from the critical points of $\omega$. In the region $r \geq 2r_0$ this follows directly from the arguments of \cite{Moschidis2015}, using the fact that $T$ is timelike in this region. On the other hand, for $r_0 \leq r \leq 2r_0$, both the vector fields $T$ and $\Phi$ are tangent to the level sets of $\omega$, and so in particular the timelike vector field $\tilde{T}$ is tangent to the level sets of $\omega$.

The remainder of the caluclations in section 7 of \cite{Moschidis2015} proceed in almost identical fashion, with the important exception of inequality $(7.22)$. Here, we instead have
\begin{equation*}
	|\partial \uppsi_{k,n}|_h^2
	\leq
	C_1 \partial_\mu \uppsi_{k,n} \partial^\mu \bar{\uppsi}_{k,n}
	+ C_2 |T \uppsi_{k,n}|^2
	+ C_3 |\Phi \uppsi_{k,n}|^2
\end{equation*}
where the third term is new. Effectively, this means that the constants $\omega_k^2$ that appear in the inequalities in section 7 of \cite{Moschidis2015} need to be replaced with $\omega_k^2 + n^2$ in our calculation. In particular, the parameter $s$ needs to be chosen sufficiently large compared to $\omega_k + |n|$ rather than just $\omega_k$.

With this in mind, and noting that the various constants that depend on $\omega_0$ are, at worst, of order $(\omega_0)^{-2}$ (see the comments in Lemma 4.6), we see that the key integrated local energy decay estimate, proposition 7.2 in \cite{Moschidis2015}, is replaced by
\begin{proposition}[ILED for bounded frequencies (proposition 7.2) in \cite{Moschidis2015}]
	For any $R \geq r_0$ and for $\omega_0$ sufficiently small compared with both $1$ and $n^{-1}$, there exists some positive constant $C(R)$ such that, for any smooth axial mode $\uppsi_{k}$ with compactly supported initial data, and any $\omega_+ > 1$ and $1 \leq |k| \leq n$, we have
	\begin{equation*}
		\int_{ \{r\leq R\} \cap \mathcal{R}(0, t^*) } \left(
			|\partial \uppsi_{\leq \omega_+,n}|^2
			+ |\uppsi_{\leq \omega_+,n}|^2
		\right)
		\lesssim
		C(R) e^{C(R) n^2 (\omega_+ + |n|)}\mathcal{E}^{(T)}[\psi_{0,n}](0)
	\end{equation*}
\end{proposition}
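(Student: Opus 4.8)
The plan is to follow the band‑by‑band frequency argument of \cite{Moschidis2015}, carrying the angular quantum number $n$ through every step and recording its effect on the constants, and then to sum the resulting integrated local energy decay (ILED) bounds over the finitely many frequency bands below $\omega_+$. Having fixed the axial mode $\phi_{(n)}$ (so that $\Phi\phi_{(n)} = in\phi_{(n)}$ and $\Box_g\phi_{(n)} = 0$), I would decompose it in time frequency into a very low piece $\uppsi_{0,n}$ supported at $|\omega|\lesssim\omega_0$ with $\omega_0\lesssim n^{-1}$, together with $O(\log(n\omega_+))$ dyadic intermediate pieces $\uppsi_{k,n}$ supported at $|\omega|\sim\omega_k$ with $\omega_0\lesssim\omega_k\lesssim\omega_+$. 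Since only $\uppsi_{\leq\omega_+,n}$ is to be controlled, no high‑frequency remainder enters, and the $T$-energy is additive across the time-frequency pieces by orthogonality, so it suffices to bound each piece separately with the stated right-hand side and then add.

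For the very low frequency piece the work is essentially done in the preceding subsubsection: the multiplier current $J_\mu = h_R\,\uppsi_{0,n}\partial_\mu\uppsi_{0,n} - \tfrac12(\partial_\mu h_R)\uppsi_{0,n}^2$ (with $h_R$ exactly as in \cite{Moschidis2015}) yields a bulk term controlling $h_R|\nabla_{\Sigma,\mathrm{degen}}\uppsi_{0,n}|^2$ up to errors bounded on the frequency support by $(\omega_0)(\omega_0+n)|\uppsi_{0,n}|^2$; choosing $\omega_0$ small relative both to the geometry and to $n^{-1}$ absorbs these, giving a degenerate ILED with constant $O(1+\omega_0^2+n^2) = O(n^2)$. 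The only missing derivative is $\Phi\uppsi_{0,n}$ on $\mathcal{S}$, and since $\Phi\uppsi_{0,n} = in\uppsi_{0,n}$ a single commutation with $\Phi$ costs a further factor $1+n^2$ and upgrades the estimate to a nondegenerate ILED with constant $O((1+n^2)^2)$, which is comfortably within $C(R)e^{C(R)n^2(\omega_+ + |n|)}$.

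For the intermediate bands I would rerun section 7 of \cite{Moschidis2015} with two structural changes. First, the auxiliary Morse functions $\omega,\omega'$ are constructed only on $\{r\geq r_0\}$, where one arranges $\Phi(r)=0$ near $\mathcal{S}$, $\omega=\omega'=r$ on $r_0\leq r\leq 2r_0$, and $\tilde T=\alpha T+\beta\Phi$ uniformly timelike on $r\leq 2r_0$; the pseudoconvexity-type nonvanishing $\partial^\mu\omega\,\partial_\mu\omega\neq 0$ away from critical points is verified using that $\tilde T$ (rather than $T$ alone) is timelike and tangent to the level sets of $\omega$ on the collar $r_0\leq r\leq 2r_0$. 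Second, the pointwise inequality $(7.22)$ of \cite{Moschidis2015} is replaced by $|\partial\uppsi_{k,n}|_h^2\leq C_1\,\partial_\mu\uppsi_{k,n}\partial^\mu\bar\uppsi_{k,n} + C_2|T\uppsi_{k,n}|^2 + C_3|\Phi\uppsi_{k,n}|^2$, the last term being new and degenerating quadratically at $\mathcal{S}$; this forces every $\omega_k^2$ in the subsequent estimates to be replaced by $\omega_k^2+n^2$, so the Carleman weight parameter $s$ must be taken $\gtrsim C_0(\omega_k+|n|)$, where $C_0$ is the geometric constant from the collar. Since the constants inherited from the low-frequency step and the collar are $O((\omega_0)^{-2}) = O(n^2)$ (cf.\ Lemma 4.6 of \cite{Moschidis2015}), one has $s\sim n^2(\omega_k+|n|)$ and the weighted estimate loses $e^{Cs} = e^{Cn^2(\omega_k+|n|)}\leq e^{Cn^2(\omega_+ + |n|)}$. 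Thus each band obeys the stated bound, and summing the $O(\log(n\omega_+))$ bands only enlarges $C(R)$.

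The main obstacle is the interplay between the low-frequency cutoff and the angular number: in \cite{Moschidis2015} the intermediate-frequency Carleman estimates rely on a \emph{fixed} lower bound $\omega_0$ on the frequency, whereas here $\omega_0\sim n^{-1}\to 0$, so one must re-examine every constant that is divided by a power of $\omega_0$ and confirm the growth is polynomial in $n$ — this is precisely the origin of the $n^2$ prefactor inside the exponent, and it requires care in Lemma 4.6 and in the choice of the Carleman parameter. A secondary difficulty is checking the pseudoconvexity conditions in the collar $r_0\leq r\leq 2r_0$ where $T$ degenerates; this is where the hypothesis that $\{T,\Phi\}$ span a timelike plane near $\mathcal{S}$ is indispensable, as it supplies the uniformly timelike Killing field $\tilde T$ tangent to the level sets of $\omega$ that substitutes for $T$ in the relevant computations. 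No new global input is needed beyond these modifications; all estimates in the asymptotic region carry over unchanged from \cite{Moschidis2015}.
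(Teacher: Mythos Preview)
Your proposal is correct and follows essentially the same approach as the paper: both rerun section~7 of \cite{Moschidis2015} on each axial mode, using $\tilde{T}$ to verify pseudoconvexity in the collar $r_0\leq r\leq 2r_0$, replacing inequality $(7.22)$ by the version with the extra $|\Phi\uppsi_{k,n}|^2$ term (so $\omega_k^2\mapsto\omega_k^2+n^2$ and $s\gtrsim\omega_k+|n|$), and tracking the $(\omega_0)^{-2}\sim n^2$ dependence from Lemma~4.6 into the exponent. The paper's own argument is terser and leaves the dyadic summation implicit, but your reconstruction of the mechanism---in particular the origin of the $n^2$ prefactor in the exponent from the interplay of $\omega_0\sim n^{-1}$ with the Carleman parameter---matches the paper's reasoning.
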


In other words, we can repeat all of the calculations of section 7 of \cite{Moschidis2015} for the axial mode $\psi_n$, at the expense of making our estimates degenerate exponentially in $n$.

\subsubsection{A double interpolation argument and sub-logarithmic decay}

Performing the interpolation argument as in \cite{Moschidis2015}, and remembering that we must ``lose derivatives'' in the boundedness statement, we find that the local energy of the $n$-th axial mode decays logarithmically as
\begin{equation*}
\begin{split}
	&\mathcal{E}^{(N)}_{\mathcal{U}}[\phi_{(n)}](\tau)
	\\
	&\leq
	C_{(\mathcal{U}, m)} \left(
		\frac{n^{4m}}{\left(\log(2+\tau)\right)^{2m}}
		+ \frac{e^{C_{(\mathcal{U}, m)} |n|^3}}{(1+\tau)^{\delta_5}}
	\right)
	\left(
		\int_{\Sigma_0} (r^{\delta_4} )\imath_{ {^{(N)}J[\phi]}} \dVol 
		+ \sum_{j+k \leq m+1} \mathcal{E}^{(N)}[T^j \Phi^k \phi] (0)
	\right)
\end{split}
\end{equation*}
for any positive integer $m$ and any real number $\delta_4 > 0$, and for some small $\delta_5 > 0$, and where the $C_{(\mathcal{U}, m)}$ are some (possibly very large) constants depending only on $m$ and the set $\mathcal{U}$. This is a slightly modified version of corollary 2.2 of \cite{Moschidis2015}. Note that the first term defines a kind of ``weighted'' energy - see the discussion of $p$-weighted energy estimates in section \ref{section estimates in KK spaces}.

Note that, although this estimate shows that the $n$-th axial mode decays logarithmically, we cannot simply ``add up'' all such estimates to show that the solution as a whole decays logarithmically. In fact, these estimates degenerate (exponentially) in $n$ at large values of $n$ (Note that the polynomial degeneration in $n$ of the bounds given for very time frequencies is strictly better than this). To obtain decay for the solution $\phi$, rather than just the axial modes $\phi_{(n)}$, we can use the interpolation argument again, using the decay statement for axial modes with $|n| \leq n_+$, and simply using the boundedness statement (and commuting with $\Phi$) for modes with $|n| > n_+$. Choosing $n_+ \sim (\log (2+\tau))^{\frac{1}{3}}$, we find
\begin{equation}
\label{equation decay with extra symmetry 1}
\begin{split}
	&\mathcal{E}^{(N)}_{\mathcal{U}}[\phi](\tau)
	\leq
	\frac{ C_{(\mathcal{U}, m)} }{ (\log (2+\tau))^{\frac{2}{3}m} }
	\left(
		\int_{\Sigma_0} (r^{\delta_4} )\imath_{ {^{(N)}J[\phi]}} \dVol 
		+ \sum_{j+k \leq m+2} \mathcal{E}^{(N)}[T^j \Phi^k \phi] (0)
	\right)
\end{split}
\end{equation}
where $\mathcal{U}_0 \subset \Sigma_0$ is \emph{any} precompact set (including those which intersect the ergosurface $\mathcal{S}$). Note that this decay result holds in both the original manifold and the time-reversed manifold.

In short, the local $N$-energy of $\phi$ decays \emph{sub-logarithmically}. In particular, if $\phi$ arises from smooth, compactly supported initial data then
\begin{equation*}
\label{equation decay with extra symmetry 2}
\begin{split}
	\mathcal{E}^{(N)}_{\mathcal{U}}[T\phi](\tau) + \mathcal{E}^{(N)}_{\mathcal{U}}[T^2 \phi](\tau)
	&\leq
	\frac{C_{\mathcal{U}}}{\left(\log(2+\tau)\right)^{\frac{2}{3}}} \bigg(
		\int_{\Sigma_0} (r^{\delta_4} )\imath_{ {^{(N)}J[T\phi]}} \dVol 
		+ \int_{\Sigma_0} (r^{\delta_4} )\imath_{ {^{(N)}J[T^2 \phi]}} \dVol
		\\
		&\phantom{\lesssim \frac{1}{\left(\log(2+\tau)\right)^{\frac{2}{3}}} \bigg(}
		+ \sum_{j+k \leq 4} \mathcal{E}^{(N)}[T^j \Phi^k \phi] (0)
	\bigg)
	\\
	&\leq
	\frac{C_{(\mathcal{U} , \phi)}}{\left(\log(2+\tau)\right)^{\frac{4}{3}}} 
\end{split}
\end{equation*}
where the first line follows from equation \eqref{equation decay with extra symmetry 1} and the second line follows from the fact that the data is smooth and compactly supported, so that the energy quantities (including the weighted energies) are finite initially. Note that the numerical constants $C_{\mathcal{U}}$ and $C_{(\mathcal{U}, \phi)}$ are generally different.

We call this decay ``sub-logarithmic'' because, in terms of pointwise decay rates, this would lead to decay for the fields $\phi$ at a rate $\phi \sim (log(2+\tau))^{-\frac{1}{3}}$. Note that this is a kind of converse to the lower bound proved on microstate geometries in \cite{Keir2016}, albeit this result shows decay at a slower rate. It is likely, therefore, that this does not represent a sharp decay rate for linear waves on these geometries.

In any case, this rules out the existence of a constant $\mathring{C} > 0$ such that equation \eqref{equation non decay for contradiction} holds, since, if $\tau$ is sufficiently large, then we will always have
\begin{equation*}
	\frac{C_{(\mathcal{U}, \phi)}}{\left(\log(2+\tau)\right)^{\frac{2}{3}}} \leq \mathring{C}
\end{equation*}
even if the numerical constant $C_{(\mathcal{U}, \phi)}$ is very large. In turn, this rules out case \ref{case (B)}, so that, when an extra symmetry of the right kind is present, then we must have case \ref{case (A)}, i.e.\ energy amplification by an arbitrarily large factor. Note also that, since the $T$-energy remains bounded, the energy amplification must occur near the evanescent ergosurface $\mathcal{S}$. In other words, we have proved the following:

\begin{lemma}[Local energy amplification in the presence of an additional symmetry]
\label{lemma energy amplification with symmetry}
	Let $(\mathcal{M}, g)$ be a Lorentzian manifold which is either asymptotically flat and has an evanescent ergosurface of the first kind (see condition \ref{condition asymp flat}), or which is asymptotically Kaluza-Klein and has an evanescent ergosurface of the second kind (see condition \ref{condition asymp KK}). Furthermore, suppose there is an additional Killing vector field $\Phi$, such that the span of $T$ and $\Phi$ includes a Killing vector field that is timelike in a neighbourhood of $\mathcal{S}$.
	
	Then, for any constant $C > 0$ and any open set $\mathcal{U}_0 \subset \Sigma_0$ such that $\mathcal{S} \cap \mathcal{U}_0 \neq \emptyset$, there is a solution $\phi_{(C,\, \mathcal{U})}$ to the wave equation $\Box_g \phi_{(C, \, \mathcal{U})} = 0$ arising from smooth, compactly supported initial data, and a time $\tau_{(C, \, \mathcal{U})}$ such that
	\begin{equation}
	\mathcal{E}^{(N)}_{\mathcal{U}}[\phi_{(C, \, \mathcal{U}})](\tau_{(C, \, \mathcal{U})}) \geq C \mathcal{E}^{(N)}[\phi_{(C, \, \mathcal{U})}](0)
	\end{equation}
	In other words, there is a solution to the wave equation whose local energy (in the set $\mathcal{U}$) is amplified, relative to its \emph{total} initial energy, by a factor of at least $C$.
\end{lemma}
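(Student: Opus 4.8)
The plan is to carry out the ``case \ref{case (A)}'' construction sketched in the introduction and in Section~\ref{section general case}, but replacing the merely qualitative dispersion used there by the quantitative sub-logarithmic decay made available by the extra symmetry. Fix $C > 0$ and an open set $\mathcal{U}_0 \subset \Sigma_0$ with $\mathcal{S} \cap \mathcal{U}_0 \neq \emptyset$; write $\mathcal{U}$ for the associated $T$-invariant region, and let $A$ denote $T$ in the asymptotically flat case and $V$ in the asymptotically Kaluza--Klein case. First I would invoke Claim~\ref{claim initial data 1} of Subsection~\ref{subsection constructing the initial data}: for a parameter $\delta > 0$ to be fixed later and any time $T_\ast > 0$, it produces smooth, compactly supported (and $\mathcal{G}$-invariant, in the Kaluza--Klein case) data on $\Sigma_{T_\ast}$ for a solution $u$, supported in an arbitrarily small neighbourhood of a point of $\mathcal{S}$ which we take to lie inside $\mathcal{U}_{T_\ast}$, such that $\mathcal{E}^{(N)}[Tu](T_\ast) \geq \delta^{-1}$ and $\mathcal{E}^{(A)}[Tu](T_\ast) = 1$. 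Although Claim~\ref{claim initial data 1} only records $\sum_{j=0}^3 \mathcal{E}^{(N)}[T^j u](T_\ast) < \infty$, the data in that construction is explicitly smooth and compactly supported, so every higher-order energy $\mathcal{E}^{(N)}[T^j \Phi^k u](T_\ast)$ and every $r$-weighted energy entering the decay estimate \eqref{equation decay with extra symmetry 1} is finite as well (with a bound $K(\delta) < \infty$ that is allowed to degenerate as $\delta \to 0$). The field I shall exhibit in the lemma is $\phi := Tu$, which again solves $\Box_g \phi = 0$ and arises from smooth, compactly supported data.

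Next I would evolve $\phi = Tu$ \emph{backwards} from $\Sigma_{T_\ast}$ to $\Sigma_0$; by finite speed of propagation the induced data on $\Sigma_0$ is again smooth, compactly supported, and $\mathcal{G}$-invariant when relevant. Since the time-reversed manifold $(\mathcal{M}, \mathscr{T}^\ast g)$ constructed in Section~\ref{section additional symmetry} still carries an evanescent ergosurface and an additional Killing field of the required kind, the sub-logarithmic local energy decay estimate \eqref{equation decay with extra symmetry 1} applies there; translating time so that $\Sigma_{T_\ast}$ plays the role of the initial slice, and applying it to a precompact $T$-invariant neighbourhood $\mathcal{U}'$ of all of $\mathcal{S} \cap \Sigma_0$ (enlarging $\mathcal{U}$ only for this step), it gives $\mathcal{E}^{(N)}_{\mathcal{U}'}[\phi](0) \leq C_{(\mathcal{U}',m)} K(\delta) \big(\log(2+T_\ast)\big)^{-2m/3}$ for a choice of $m$. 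Meanwhile the conserved energy is preserved, $\mathcal{E}^{(A)}[\phi](0) = 1$, and away from $\mathcal{S}$ the conserved energy is comparable to the non-degenerate energy --- by the computations of Section~\ref{section energy momentum} together with the definition of the evanescent ergosurface in the asymptotically flat case, and by Proposition~\ref{proposition V energy current degenerate} for $\mathcal{G}$-invariant waves in the Kaluza--Klein case --- so that $\mathcal{E}^{(N)}_{\mathcal{M}\setminus\mathcal{U}'}[\phi](0) \lesssim \mathcal{E}^{(A)}[\phi](0) = 1$ with a purely geometric implied constant $c_0$, independent of $\delta$ and $T_\ast$. Summing the two pieces, $\mathcal{E}^{(N)}[\phi](0) \leq c_0 + C_{(\mathcal{U}',m)} K(\delta)\big(\log(2+T_\ast)\big)^{-2m/3}$.

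The argument is then closed by fixing the quantifiers in the right order. First choose $\delta$ small enough that $\delta^{-1} \geq 2 c_0 C$. Then, with $\delta$ (hence $K(\delta)$) now fixed, choose $T_\ast$ large enough that $C_{(\mathcal{U}',m)} K(\delta)\big(\log(2+T_\ast)\big)^{-2m/3} \leq c_0$, so that $\mathcal{E}^{(N)}[\phi](0) \leq 2 c_0$. Since the data for $Tu$ at $\Sigma_{T_\ast}$ was supported inside $\mathcal{U}_{T_\ast}$, all of its energy at that time is local, and so with $\tau := T_\ast$,
\begin{equation*}
\frac{\mathcal{E}^{(N)}_{\mathcal{U}}[\phi](\tau)}{\mathcal{E}^{(N)}[\phi](0)} = \frac{\mathcal{E}^{(N)}[\phi](T_\ast)}{\mathcal{E}^{(N)}[\phi](0)} \geq \frac{\delta^{-1}}{2 c_0} \geq C ,
\end{equation*}
which is the assertion of the lemma (and locates the amplification near $\mathcal{S}$, since the $\mathcal{E}^{(A)}$-energy stays bounded).

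I expect no genuinely new analytic difficulty here: the hard work --- the data construction of Subsection~\ref{subsection constructing the initial data}, the boundedness estimate with loss of derivatives, and the sub-logarithmic decay \eqref{equation decay with extra symmetry 1} valid on both the manifold and its time-reversal --- has already been done. The point that needs care is precisely the order of quantifiers together with the uniformity of the bound $\mathcal{E}^{(N)}[\phi](0) \leq 2c_0$: the non-local part is controlled by the conserved energy with a constant depending only on the geometry, while the local part is rendered negligible only because $T_\ast$ is chosen \emph{after} $\delta$, so that the (possibly enormous, $\delta$-dependent) weighted energies are multiplied by an arbitrarily small inverse power of $\log(2+T_\ast)$. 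A secondary bookkeeping point is to confirm that the explicit data of Claim~\ref{claim initial data 1} is smooth and compactly supported enough for every energy entering \eqref{equation decay with extra symmetry 1} to be finite for each fixed $\delta$, even though these quantities blow up as $\delta \to 0$.
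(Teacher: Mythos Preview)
Your proposal is correct and follows essentially the same route as the paper: construct the data of Claim~\ref{claim initial data 1} at a late time, evolve backwards using the sub-logarithmic decay estimate \eqref{equation decay with extra symmetry 1} on the time-reversed manifold, and control the initial non-degenerate energy by combining the decayed local piece with the conserved-energy bound away from the ergosurface. The paper presents this somewhat more indirectly (as ``ruling out case~\ref{case (B)} forces case~\ref{case (A)}''), but the explicit construction you give is precisely what is spelled out in the paper's subsequent Corollary~\ref{corollary bounds for the amplification time}, with the same order of choosing $\delta$ first and then $T_\ast$.
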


Note that, by construction, the solution $\phi_{(C, \ \mathcal{U})}$ in lemma \ref{lemma energy amplification with symmetry} can be chosen to be the $T$ derivative of a solution to the wave equation.

\subsection{Bounds for the amplification time and the support of the data}

Physically, it is very important to be able to estimate the \emph{timescale} of any proposed instability. For example, if the timescale of an instability of some object is very small compared to the timescale on which those objects form, then we would not expect to find such objects in nature, whereas in the opposite case we might still expect to find these objects, despite the presence of an instability.

Now that we have shown that, in the presence of an additional symmetry, the local (non-degenerate) energy of a solution to the wave equation can grow arbitrarily large relative to its initial (non-degenerate) energy, it turns out that the extra symmetry also allows us to prove an upper bound for the time taken for the local energy to grow. At the same time, we can prove a bound on the size of the support of the initial data which leads to this growing solution, which might also have some physical relevance.

Recall that, in the case of additional symmetry, we can show both that the $N$-energy is bounded (by a higher order initial energy) and that the local $N$-energy decays at least logarithmically (again, this bound necessarily involves higher order initial energies). Thus, we can construct data for the wave equation as in subsection \ref{subsection constructing the initial data}. This leads to initial data for $\phi$ at a time $\tau_1$ such that
\begin{itemize}
	\item $\mathcal{E}^{(T)}[T\phi](\tau_1) = 1$
	\item $\mathcal{E}^{(N)}[T^2\phi](\tau_1) = \mathcal{O}(\delta_0^{-\frac{3}{20}})$
	\item $\mathcal{E}^{(N)}[\partial T\phi](\tau_1) = \mathcal{O}(\delta_0^{-2 - \frac{3}{20}})$
\end{itemize}
moreover, it is not very difficult to see that the initial data constructed in this way satisfies
\begin{equation*}
	\sum_{j+k \leq m} \mathcal{E}^{(N)}[T^j \Phi^k \phi](\tau_1) = \mathcal{O}(\delta_0^{-2m-\frac{3}{20}})
\end{equation*}
Thus, solving the wave equation \emph{backwards} in time, the decay estimate \eqref{equation decay with extra symmetry 1} tells us that the local non-degenerate energy of the wave $T\phi$ at time $0$ is bounded by
\begin{equation*}
	\mathcal{E}^{(N)}_{\mathcal{U}}[T\phi](0) \leq \frac{ C_{(\mathcal{U}, m)} \delta_0^{-2m - 6-\frac{3}{20}} }{ \left( \log(2+\tau_1) \right)^{\frac{2}{3}m}} 
\end{equation*}
for any choice of $m \in \mathbb{N}$. Thus, we can guarantee that the local energy at time $0$ is bounded above by $1$ by choosing
\begin{equation*}
	\tau_1 = \exp \left((C_{(\mathcal{U}, m)})^{\frac{3}{2m}} \delta_0^{-3-\frac{82}{5m}} \right)
\end{equation*}
Moreover, at this time the \emph{total} $N$-energy is bounded by some constant which is \emph{independent of} $\delta_0$, since the $N$-energy away from the ergosurface is bounded by the $T$-energy, which is conserved and takes the value $1$.

Putting this together\footnote{Choosing a larger value of $m$ appears to give an improved lower bound on the amplification bound - i.e.\ it leads to a bound whose functional dependence on $\delta_0$ is better. However, the numerical constant $C_{\mathcal{U}, m}$ also depends on the value of $m$ in some uncontrolled way, so we cannot simply pass to the limit $m \rightarrow \infty$.}, we have shown the following:

\begin{corollary}[Bounds for the amplification time and the support of the data]
\label{corollary bounds for the amplification time}
	Let $(\mathcal{M}, g)$ be as in lemma \ref{lemma energy amplification with symmetry}. 
	
	Then, for any constant $C > 0$ and any open set $\mathcal{U}_0 \subset \Sigma_0$ such that $\mathcal{S} \subset \mathcal{U}_0$, there is a solution $\phi_{(C,\, \mathcal{U})}$ to the wave equation $\Box_g \phi_{(C, \, \mathcal{U})} = 0$ arising from smooth, compactly supported initial data such that
	\begin{equation}
	\sup_{\tau \in [0, \tau^*]} \frac{ \mathcal{E}^{(N)}_{\mathcal{U}}[\phi_{(C, \, \mathcal{U}})](\tau) }{\mathcal{E}^{(N)}[\phi_{(C, \, \mathcal{U})}](0)} \geq C 
	\end{equation}
	where $\tau^*$ is given by
	\begin{equation}
		\tau^* = C_{(\mathcal{U}, \, m)} \exp \left( C^{20 + \frac{328}{3m}} \right)
	\end{equation}
	for some constant $C_{(\mathcal{U}, \, m)}$ depending only on the set $\mathcal{U}$ and the positive integer $m$.
	
	Moreover, the initial data for this solution is supported only in the intersection of the causal past of the set $\mathcal{U} \cap \Sigma_{\tau^*}$ with the initial hypersurface $\Sigma_0$.
\end{corollary}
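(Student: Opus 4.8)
The plan is to revisit the construction behind Lemma~\ref{lemma energy amplification with symmetry}, keeping careful track of two quantities that were irrelevant there: the time at which the amplification is realised, and the support on $\Sigma_0$ of the data giving rise to it. Recall that the amplifying solution is obtained by first \emph{prescribing} data at some future time $\tau_1$ and then evolving \emph{backwards} to $\Sigma_0$; thus a quantitative upper bound on $\tau_1$ is exactly a bound on the amplification time, and the support of the data on $\Sigma_0$ is controlled by finite speed of propagation applied to this backward evolution. So the corollary is really a quantitative refinement of Lemma~\ref{lemma energy amplification with symmetry}, and the work is bookkeeping of $\delta_0$-dependence together with one domain-of-dependence argument.

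First I would fix a small parameter $\delta_0 > 0$ and invoke the construction of subsection~\ref{subsection constructing the initial data}, adapted to the present setting (the discrete isometry replaced by the time-reversal operator $\mathscr{T}$, and the boundedness input replaced by the boundedness-with-loss-of-derivatives estimate \eqref{equation boundedness with loss of derivatives}). This produces, on $\Sigma_{\tau_1}$, data for a solution $T\phi$ with $\mathcal{E}^{(T)}[T\phi](\tau_1) = 1$ and $\mathcal{E}^{(N)}[T\phi](\tau_1) \gtrsim \delta_0^{-3/20}$. The new input is to record how the \emph{higher-order} energies of this data scale with $\delta_0$. Because the data oscillates at frequency $\delta_0^{-1}$ in the $y$-direction, and because $\{T,\Phi\}$ spans a timelike plane near $\mathcal{S}$ so that $\Phi$ has a nonzero $\partial_y$-component, each application of $\Phi$, and each application of $T$ (for $T^2\phi$ and higher, realised through the wave equation \eqref{equation TTphi} and its error analysis), costs at most essentially one power of $\delta_0^{-1}$. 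This should yield, for every positive integer $m$,
\begin{equation*}
	\sum_{j + k \leq m} \mathcal{E}^{(N)}[T^j \Phi^k \phi](\tau_1) = \mathcal{O}\big( \delta_0^{-2m - \frac{3}{20}} \big)
\end{equation*}
with constants independent of $\delta_0$. Establishing this uniform-in-$\delta_0$ bound is, I expect, the main technical obstacle: it amounts to iterating the error-term estimates of subsection~\ref{subsection constructing the initial data}, checking that commuting with $\Phi$ and with $T$ loses at most one $\delta_0^{-1}$ per derivative, and in particular that the Agmon bounds of Proposition~\ref{proposition Agmon bounds} continue to make the $|x|_2$-direction cut-off terms harmless after differentiation.

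Next I would evolve this data backwards from $\Sigma_{\tau_1}$ to $\Sigma_0$ and apply the (weighted-form, genuinely logarithmic) decay estimate \eqref{equation decay with extra symmetry 1} — valid, as noted there, on the time-reversed manifold as well — to the field $T\phi$. Feeding in the bounds above, and using that the data on $\Sigma_{\tau_1}$ is localised near a point of $\mathcal{S}$, which sits at bounded $r$, so that the $r^{\delta_4}$-weighted flux is comparable to the unweighted one, gives
\begin{equation*}
	\mathcal{E}^{(N)}_{\mathcal{U}}[T\phi](0) \lesssim \frac{ C_{(\mathcal{U},m)}\, \delta_0^{-2m - 6 - \frac{3}{20}} }{ \big(\log(2+\tau_1)\big)^{\frac{2}{3}m} }.
\end{equation*}
Choosing $\tau_1$ so that the right-hand side is at most $1$ forces $\log(2+\tau_1) \gtrsim \delta_0^{-3 - \frac{82}{5m}}$ up to the $m$-dependent constant, i.e. $\tau_1 = C_{(\mathcal{U},m)}\exp\big(\delta_0^{-3 - \frac{82}{5m}}\big)$. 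At time $0$ the \emph{total} $N$-energy of $T\phi$ is then $\mathcal{O}(1)$, because away from $\mathcal{U}$ the $N$-energy is comparable to the conserved $T$-energy ($=1$) while inside $\mathcal{U}$ it is at most $1$ by the choice of $\tau_1$; whereas at time $\tau_1$ the total $N$-energy is $\gtrsim \delta_0^{-3/20}$. Reading this as an initial-value problem posed on $\Sigma_0$, the solution $\phi_{(C,\mathcal{U})} := T\phi$ satisfies $\mathcal{E}^{(N)}_{\mathcal{U}}[\phi_{(C,\mathcal{U})}](\tau_1) / \mathcal{E}^{(N)}[\phi_{(C,\mathcal{U})}](0) \gtrsim \delta_0^{-3/20}$; taking $\delta_0 \sim C^{-20/3}$ makes this at least $C$ and turns the bound on $\tau_1$ into $\tau^\ast = C_{(\mathcal{U},m)}\exp\big(C^{20 + \frac{328}{3m}}\big)$, since $\tfrac{20}{3}\big(3 + \tfrac{82}{5m}\big) = 20 + \tfrac{328}{3m}$. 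As $\tau_1 \leq \tau^\ast$, the supremum over $[0,\tau^\ast]$ is at least $C$.

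Finally, the support statement follows from finite speed of propagation. The data prescribed on $\Sigma_{\tau_1}$ by subsection~\ref{subsection constructing the initial data} is smooth and supported in a small neighbourhood of a point of $\mathcal{S}$, hence inside $\mathcal{U}\cap\Sigma_{\tau_1}$; since $\mathcal{U}$ is $T$-invariant and $T$ is future-directed causal, flowing along $T$ for affine parameter $\tau^\ast - \tau_1 \geq 0$ shows each such point lies in $J^-(\mathcal{U}\cap\Sigma_{\tau^\ast})$. Solving the wave equation backwards, $\phi_{(C,\mathcal{U})}$ vanishes on $\Sigma_0$ outside $J^-(\mathcal{U}\cap\Sigma_{\tau^\ast})\cap\Sigma_0$, which is the asserted support; in particular the data on $\Sigma_0$ is compactly supported. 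This completes the argument modulo the $\delta_0$-scaling bookkeeping flagged above, which is the only genuinely delicate point.
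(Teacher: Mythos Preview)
Your proposal is correct and follows essentially the same route as the paper: prescribe the localised data of subsection~\ref{subsection constructing the initial data} at time $\tau_1$, record the scaling $\sum_{j+k\le m}\mathcal{E}^{(N)}[T^j\Phi^k\phi](\tau_1)=\mathcal{O}(\delta_0^{-2m-3/20})$, evolve backwards using \eqref{equation decay with extra symmetry 1} to force $\tau_1\sim\exp(\delta_0^{-3-82/(5m)})$, and then set $\delta_0\sim C^{-20/3}$. Your treatment of the support claim via finite speed of propagation is slightly more explicit than the paper's, which leaves it implicit in the construction.
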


Note that the solution $\phi_{(C, \, \mathcal{U})}$ appearing in the corollary is the function $T\phi$ from the calculations above. Recall that, since $T$ is a Killing vector field, if $\phi$ is a solution to the wave equation then so is $T\phi$.

\subsection{Solutions with unbounded local energy}

We have seen (in lemma \ref{lemma energy amplification with symmetry}) that, in the case where an extra symmetry is present, the local non-degenerate energy can be amplified by an arbitrarily large amount compared with the initial, total, non-degenerate energy. A natural question now arises: does there exist finite-energy initial data leading to a \emph{single} solution $\phi$ to the wave equation (as opposed to a sequence of solutions), such that the local energy of $\phi$ becomes arbitrarily large?

Using the extra symmetry, we give an affirmative answer to this question. Note, however, that the initial data we construct is not necessarily smooth, and indeed, in view of the non-degenerate energy bound with a loss of derivatives, it cannot have finite ``higher order'' energies. Moreover, it is not necessarily compactly supported either.

\begin{corollary}[A solution with unbounded local energy]
	Let $(\mathcal{M}, g)$ be as in lemma \ref{lemma energy amplification with symmetry}.
	
	Then there exists a (weak) solution $\phi$ to $\Box_g \phi = 0$ such that
	\begin{itemize}
		\item The initial $N$-energy of $\phi$ is one, that is, $\mathcal{E}^{(N)}[\phi](0) = 1$
		\item The local $N$-energy of $\phi$ is unbounded, that is,
		\begin{equation}
			\limsup_{\tau \rightarrow \infty} \ \mathcal{E}^{(N)}_{\mathcal{U}}[\phi](\tau) = \infty
		\end{equation}	
		where $\mathcal{U}$ is any $T$-invariant open set such that $\mathcal{S} \subset \mathcal{U}$.
	\end{itemize}
\end{corollary}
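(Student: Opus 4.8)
The plan is to obtain the statement from the uniform boundedness principle, with Lemma~\ref{lemma energy amplification with symmetry} supplying the necessary failure of uniform boundedness. Throughout, restrict to $\mathcal{G}$-invariant data in the asymptotically Kaluza-Klein case (and work with arbitrary data in the asymptotically flat case). First I would set up the function space: let $\mathcal{H}$ be the completion of the space of smooth, compactly supported ($\mathcal{G}$-invariant) pairs $(\phi_0,\phi_1)$ on $\Sigma_0$ with respect to $\|(\phi_0,\phi_1)\|_{\mathcal{H}}^2 := \mathcal{E}^{(N)}[\phi](0)$, where $\phi$ solves $\Box_g\phi = 0$ with $\phi|_{\Sigma_0} = \phi_0$, $(n\phi)|_{\Sigma_0} = \phi_1$; this is a Hilbert space. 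For each $t$ the solution map $U(t) : \mathcal{H}\to\mathcal{H}$, sending the data on $\Sigma_0$ to the induced data on $\Sigma_t$, is a well-defined bounded operator: choosing $N$ to coincide with $T$ (resp.\ $V$) in the asymptotic region, its deformation tensor ${}^{(N)}\pi$ is spatially compactly supported and $T$-invariant, so the energy identity \eqref{equation energy estimate} together with Cauchy--Schwarz gives $\mathcal{E}^{(N)}[\phi](t) \le \mathcal{E}^{(N)}[\phi](0) + C\int_0^{|t|}\mathcal{E}^{(N)}[\phi](s)\,\upd s$, and Gr\"onwall's inequality yields $\mathcal{E}^{(N)}[\phi](t) \le e^{C|t|}\mathcal{E}^{(N)}[\phi](0)$. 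Hence $\|U(t)\|_{\mathcal{H}\to\mathcal{H}} < \infty$ for every fixed $t$, and by density $U(t)$ extends to all of $\mathcal{H}$, producing weak solutions for non-smooth data.

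Next I would show $\sup_{t\ge 0}\|U(t)\|_{\mathcal{H}\to\mathcal{H}} = \infty$. Fix a small open set $\mathcal{U}_0 \subset \Sigma_0$ with $\mathcal{S}\cap\mathcal{U}_0 \neq \emptyset$. Given any $C > 0$, Lemma~\ref{lemma energy amplification with symmetry} produces a smooth, compactly supported solution $\phi_{(C,\mathcal{U})}$ with data $f_C \in \mathcal{H}$ and a time $\tau_{(C,\mathcal{U})}$ with $\|U(\tau_{(C,\mathcal{U})})f_C\|_{\mathcal{H}}^2 = \mathcal{E}^{(N)}[\phi_{(C,\mathcal{U})}](\tau_{(C,\mathcal{U})}) \ge \mathcal{E}^{(N)}_{\mathcal{U}}[\phi_{(C,\mathcal{U})}](\tau_{(C,\mathcal{U})}) \ge C\,\mathcal{E}^{(N)}[\phi_{(C,\mathcal{U})}](0) = C\,\|f_C\|_{\mathcal{H}}^2$, so $\|U(\tau_{(C,\mathcal{U})})\|_{\mathcal{H}\to\mathcal{H}}^2 \ge C$. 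Since $C$ is arbitrary and the amplification times may be taken arbitrarily large (the proof of that lemma places the data of Claim~\ref{claim initial data 1} on a hypersurface $\Sigma_{\tau_0}$ with $\tau_0$ free), we get $\sup_{t\ge T}\|U(t)\|_{\mathcal{H}\to\mathcal{H}} = \infty$ for every $T\ge 0$. The uniform boundedness principle, applied to the bounded operators $\{U(t)\}_{t\ge 0}$ on the Banach space $\mathcal{H}$, then furnishes $f\in\mathcal{H}$ with $\sup_{t\ge 0}\|U(t)f\|_{\mathcal{H}} = \infty$; after rescaling, $\|f\|_{\mathcal{H}} = 1$, i.e.\ $\mathcal{E}^{(N)}[\phi](0) = 1$ for the associated weak solution $\phi$. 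Since $\sup_{t\in[0,T]}\|U(t)f\|_{\mathcal{H}} \le \bigl(\sup_{t\in[0,T]}\|U(t)\|\bigr)\|f\|_{\mathcal{H}} < \infty$ for each finite $T$, unboundedness over $[0,\infty)$ forces unboundedness over $[T,\infty)$ for every $T$, hence there is a sequence $\tau_k\to\infty$ with $\mathcal{E}^{(N)}[\phi](\tau_k)\to\infty$.

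Then I would pass from the total to the local energy near $\mathcal{S}$, and check that the resulting $\phi$ works for \emph{every} $T$-invariant open $\mathcal{U}\supset\mathcal{S}$ simultaneously. As in the proof of Proposition~\ref{proposition non decay on ergosurface}, on $\Sigma_\tau\setminus\mathcal{U}$ the field $T$ (resp., for $\mathcal{G}$-invariant data, the field $V$ via Proposition~\ref{proposition V energy current degenerate}) is uniformly timelike, so the fluxes of ${}^{(N)}J$ and of ${}^{(T)}J$ (resp.\ ${}^{(V)}J$) through $\Sigma_\tau\setminus\mathcal{U}$ are each comparable to $\|\partial\phi\|_{L^2(\Sigma_\tau\setminus\mathcal{U})}^2$; combined with non-negativity of the conserved current this gives $\mathcal{E}^{(N)}[\phi](\tau) \lesssim \mathcal{E}^{(N)}_{\mathcal{U}}[\phi](\tau) + \mathcal{E}^{(T)}[\phi](\tau)$ (resp.\ with $\mathcal{E}^{(V)}$), with constants depending on $\mathcal{U}$ only through $c_{(T,\mathcal{U})}$. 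Since $T$ (resp.\ $V$) is Killing and causal, $\mathcal{E}^{(T)}[\phi](\tau) = \mathcal{E}^{(T)}[\phi](0) \lesssim \mathcal{E}^{(N)}[\phi](0) = 1$ for all $\tau$. Therefore $\mathcal{E}^{(N)}_{\mathcal{U}}[\phi](\tau_k) \gtrsim \mathcal{E}^{(N)}[\phi](\tau_k) - C \to \infty$, which is the asserted unboundedness of the local energy.

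The main obstacle, such as it is, lies entirely in the functional-analytic bookkeeping rather than in any new geometric estimate: one must confirm that $U(t)$ is genuinely a bounded operator on $\mathcal{H}$ for each $t$ — resting on global well-posedness of $\Box_g\phi = 0$ (legitimate since $\Sigma_0$ is a Cauchy surface and there is no horizon), the Gr\"onwall bound above, and the density argument for weak solutions — and that Lemma~\ref{lemma energy amplification with symmetry} really yields amplification at times tending to $+\infty$, which follows from the freedom in the construction of Claim~\ref{claim initial data 1}. It is worth remarking that the $\phi$ produced here cannot have smooth, compactly supported initial data: by the uniform boundedness estimate with loss of derivatives \eqref{equation boundedness with loss of derivatives} such data has uniformly bounded local energy, so the $f$ obtained from the argument is automatically of the rough, non-compactly-supported kind anticipated in the statement.
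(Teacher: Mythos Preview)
Your proof is correct, and it takes a genuinely different route from the paper's argument. The paper builds the unbounded solution \emph{explicitly} as a superposition $\phi_\infty = \sum_n (T\phi)_n$, where each summand is one of the amplifying solutions from Lemma~\ref{lemma energy amplification with symmetry}, rescaled so that its initial $N$-energy is $\lesssim (C_n)^{-3}$ with $C_n = 2^{2^n}$. The paper then uses both the quantitative amplification-time bound of Corollary~\ref{corollary bounds for the amplification time} and the sub-logarithmic decay estimate \eqref{equation decay with extra symmetry 1} to show that at the sequence of times $\tau_n \sim e^{2^{21\cdot 2^n}}$ the $n$-th summand dominates while all others are small, so that the local energy of the sum diverges along $\{\tau_n\}$.

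By contrast, your argument via the uniform boundedness principle is cleaner and requires strictly less input: you never touch the quantitative decay rate or the amplification-time estimates, only the qualitative amplification statement of Lemma~\ref{lemma energy amplification with symmetry} and the elementary Gr\"onwall bound $\|U(t)\|\le e^{C|t|}$ (which you need anyway to define the weak solution). The passage from unbounded total energy to unbounded local energy in any $\mathcal{U}\supset\mathcal{S}$, via $\mathcal{E}^{(N)} \lesssim \mathcal{E}^{(N)}_{\mathcal{U}} + \mathcal{E}^{(T)}$ and conservation of $\mathcal{E}^{(T)}$, is the same in both arguments. What the paper's construction buys is explicitness: one knows the shape of the bad solution and the sequence of blow-up times. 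What your argument buys is economy and robustness: it would go through in any setting where amplification by arbitrary factors is known, without needing a rate.
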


For ease of notation, let us fix an open set $\mathcal{U}$ intersecting the ergosurface. Let $(T\phi)_{n}$ be a solution to the wave equation constructed as in lemma \ref{lemma energy amplification with symmetry}, except that we continue evolving the solution to the past until the $N$-energy is bounded above\footnote{Recall that, when we solve the wave equation backwards and use the decay estimate \eqref{equation decay with extra symmetry 1}, we only obtain an \emph{upper bound} on the initial $N$-energy. We want to avoid scaling the solution up by a (potentially large) factor.} by $\frac{1}{(C_n)^3}$. Then we have the following lower bound on the local energy:
\begin{equation*}
	\mathcal{E}^{(N)}_{\mathcal{U}}[(T\phi)_{n}](\tau_{C_n}) \geq C_n
\end{equation*}
where $\tau_{C_n}$ is some time satisfying
\begin{equation*}
	\tau_{C_n} \sim  e^{(C_n)^{21}}
\end{equation*}

At the same time, for the solution $(T\phi)_{n}$ we have the decay estimate (see equation \eqref{equation decay with extra symmetry 1} with the choice $m = 1$)
\begin{equation*}
	\mathcal{E}^{(N)}_{\mathcal{U}}[(T\phi)_{n}](\tau)
	\lesssim 
	\frac{(C_n)^{41} }{ \left( \log(2+\tau) \right)^{\frac{2}{3}} }
\end{equation*}
which holds for \emph{all} $\tau$.

Finally note that, by applying the decay estimate to the solution as we evolve backwards in time, we find that, for all times $\tau$ such that
\begin{equation*}
	\tau \lesssim e^{C_n^{\frac{129}{2}}} 
\end{equation*}
we also have the bound
\begin{equation*}
	\mathcal{E}^{(N)}_{\mathcal{U}}[(T\phi)_{n}](\tau)
	\lesssim 
	\frac{1}{(C_n)^2}
\end{equation*}

Now, we can use these bounds to construct a solution with the desired properties. First, we choose the constants $C_n = 2^{2^n}$. We see that, at some time $\tau_n$ satisfying
\begin{equation*}
	\tau_n \sim e^{2^{21\cdot 2^n} }
\end{equation*}
we have
\begin{equation*}
	\mathcal{E}^{(N)}_{\mathcal{U}}[(T\phi)_{n}](\tau_n) \geq 2^{2^n}
\end{equation*}

Now, if $m \geq n$ and $n$ is large enough, then we have
\begin{equation*}
	e^{2^{\frac{129}{2} \cdot 2^m }} > e^{2^{21 \cdot 2^n }}
\end{equation*}
and so we have
\begin{equation*}
	\mathcal{E}^{(N)}_{\mathcal{U}}[(T\phi)_{m}](\tau_n) \lesssim \frac{1}{ 2^{2 \cdot 2^m}}
\end{equation*}

On the other hand, if $m \leq n$ then we can use the uniform decay estimate to show
\begin{equation*}
	\mathcal{E}^{(N)}_{\mathcal{U}}[(T\phi)_{m}](\tau_n)
	\lesssim 
	\frac{2^{41\cdot2^m} }{ 2^{21 \cdot 2^n} }
\end{equation*}

Again, if $n$ is sufficiently large, then this term is bounded, say by $\frac{1}{2^n}$.

Now, we set
\begin{equation*}
	\phi_{\infty} := \sum_{n = 1}^{\infty} (T\phi)_n
\end{equation*}
By the triangle inequality, and the calculations above, we see that there is a sequence of times $\tau_n$ such that
\begin{equation*}
	\mathcal{E}^{(N)}_{\mathcal{U}}[\phi_\infty](\tau_n) \rightarrow \infty
\end{equation*}
At the same time, the initial energy of this function $\phi_{\infty}$ is bounded by $\sum_{n = 0}^{\infty} \frac{1}{2^{3 \cdot 2^n}}$, which is clearly finite. Hence the series has a limit which is a weak solution of the wave equation. Note, however, that from the bound \eqref{equation boundedness with loss of derivatives}, the initial energy of $\tilde{T} \phi_{\infty}$ must be infinite.

\subsection{Higher derivatives}

Another natural question in the context of the work above is whether the instability we have discussed can be ``cured'' by looking at higher derivatives. We could compare the situation to the case of wave equations on a Schwarzschild black hole, where the well-known ``trapping'' phenomena means that an \emph{integrated local energy decay} statement cannot hold. In the language of this paper, this means that no statement of the form
\begin{equation*}
	\int_{0}^{\tau_1} \left( \int_{\Sigma_\tau\cap\mathcal{U}} \imath_{ {^{(N)}J[\phi]}} \dVol \right) \upd \tau \lesssim \mathcal{E}^{(N)}[\phi](0)
\end{equation*}
can hold (see \cite{Sbierski2013a} for a very general proof that this kind of statement cannot hold on spacetimes involving trapping). However, in the case of Schwarzschild black holes, it is possible to ``fix'' this problem by including higher derivatives on the right hand side, and indeed a statement of the following form can be shown to hold:
\begin{equation*}
	\int_{0}^{\tau_1} \left( \int_{\Sigma_\tau\cap\mathcal{U}} \imath_{ {^{(N)}J[\phi]}} \dVol \right) \upd \tau \lesssim \mathcal{E}^{(N)}[\phi](0) + \mathcal{E}^{(N)}[T\phi](0)
\end{equation*}

One might wonder whether a similar approach could be used to ``cure'' the instability discussed in this paper. In fact, we have already seen that, when an additional symmetry is present, the local energy \emph{can} be bounded in terms a higher order energy (equation \eqref{equation boundedness with loss of derivatives}). However, this is not a very satisfactory result, since it leaves open the possibility that the higher order energy is itself unbounded, and this, in turn, could be interpreted as a kind of instability (albeit of a ``weaker'' type).

Again, the additional symmetry enables us to resolve this issue. Let $\tilde{\phi} = T\phi$, where $\phi$ is the solution to the wave equation arising from the initial data we have constructed in subsection \ref{subsection constructing the initial data}, and where we pose the initial data at time $\tau_1$. Then it is not hard to see that $\tilde{\phi}$ satisfies
\begin{equation*}
\begin{split}
	\sum_{j + k \leq m} \mathcal{E}^{(N)}[T^j \Phi^k \tilde{\phi}](\tau_1) &= \mathcal{O}\left( (\delta_0)^{-2m - \frac{3}{20}} \right)
	\\
	\mathcal{E}^{(T)}[T^j \Phi^k \tilde{\phi}](\tau_1) &= \mathcal{O}\left( (\delta_0)^{-2m} \right) 
\end{split}
\end{equation*}
and so, following exactly the same arguments as before, we can show that there exists a solution to the wave equation $u_{C_1}$ and a time $\tau_{C_1}$ such that, for any $C_1 > 0$, we have
\begin{equation}
\label{equation higher derivs 1}
\begin{split}
		\sum_{j + k \leq m} \mathcal{E}^{(N)}[T^j \Phi^k u_{C_1}](0) &= 1
		\\
		\sum_{j + k \leq m} \mathcal{E}^{(N)}[T^j \Phi^k u_C](\tau_{C_1}) &\geq C_1
\end{split}
\end{equation}

Now, since at each point on the manifold $\mathcal{M}$, there is a timelike vector in the span of $T$ and $\Phi$, we can use standard elliptic estimates to show that there is some numerical constant $C_2$ such that
\begin{equation*}
\begin{split}
		\sum_{j \leq m} \mathcal{E}^{(N)}[\partial^j u_{C_1}](0) & \leq C_2
		\\
		C_1 \leq \sum_{j \leq m} \mathcal{E}^{(N)}[\partial^j u_{C_1}](\tau_{C_1}) &\leq C_2 C_1
\end{split}
\end{equation*}
where in the second line the first inequality follows from equation \eqref{equation higher derivs 1}. Hence, rescaling the solution by a factor of $(C_2)^{-1}$ and setting $C_3 = C_1 (C_2)^{-1}$ we have found a solution to the wave equation such that
\begin{equation*}
\begin{split}
		\sum_{j \leq m} \mathcal{E}^{(N)}[\partial^j u_{C_1}](0) & \leq 1
		\\
		\sum_{j \leq m} \mathcal{E}^{(N)}[\partial^j u_{C_1}](\tau_{C_1}) &\geq C_3
\end{split}
\end{equation*}
In other words, we have proved the following corollary:

\begin{corollary}[Higher order energies]
	Suppose that the same conditions holds as for lemma \ref{lemma energy amplification with symmetry}. Then, for any positive integer $m$, for any $C > 0$, there is a solution to the wave equation $u_{(C,m)}$ and a time $\tau_{(C,m)}$ such that
	\begin{equation}
		\frac{ \sum_{j \leq m} \mathcal{E}^{(N)}[\partial^j u_{(C,m)}](\tau_{(C,m)})} { \sum_{j \leq m} \mathcal{E}^{(N)}[\partial^j u_{(C,m)}](0)}  \geq C
	\end{equation}
\end{corollary}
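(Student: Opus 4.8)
The plan is to bootstrap the amplification already established in Lemma~\ref{lemma energy amplification with symmetry}, which is essentially the case $m=0$, by commuting with the Killing fields $T$ and $\Phi$. Since $T$ and $\Phi$ are Killing, $T^j\Phi^k u$ solves $\Box_g(T^j\Phi^k u)=0$ whenever $\Box_g u=0$, and every structural estimate available in the presence of the additional symmetry --- the uniform boundedness with loss of derivatives \eqref{equation boundedness with loss of derivatives} and the sub-logarithmic decay \eqref{equation decay with extra symmetry 1} --- applies verbatim to these commuted fields. So the strategy is: (i) run the amplification construction of subsection~\ref{subsection constructing the initial data} while tracking the $T$- and $\Phi$-commuted energies, producing amplification of the mixed higher-order energy $\sum_{j+k\leq m}\mathcal{E}^{(N)}[T^j\Phi^k u]$; and (ii) pass from this mixed energy to the full $m$-th order energy $\sum_{j\leq m}\mathcal{E}^{(N)}[\partial^j u]$ via an elliptic estimate for $\Box_g$.

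For step (i) I would take the wave-packet data of subsection~\ref{subsection constructing the initial data}, posed at a large time $\tau_1$, and set $\tilde\phi:=T\phi$. Because the data is an explicit WKB packet of the form $\propto e^{i\delta_0^{-1}y}F_2(x)\chi_0\chi_1\chi_2$ concentrated near $\mathcal{S}$, each application of $T=\partial_{\tilde t}$ or of $\Phi$ (which carries a nonzero $\hat Y$-component, hence costs a factor $\delta_0^{-1}$ when it hits the oscillatory phase) produces at worst a factor $\delta_0^{-1}$; the cutoff-derivative terms are controlled by the Agmon bounds of Proposition~\ref{proposition Agmon bounds}, as in the base case. This yields $\sum_{j+k\leq m}\mathcal{E}^{(N)}[T^j\Phi^k\tilde\phi](\tau_1)=\mathcal{O}(\delta_0^{-2m-3/20})$. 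On the other hand, the conserved $T$-energy current \eqref{equation T energy} degenerates quadratically at $\mathcal{S}$, so the ``missing'' $\partial_y$-direction enters it only with weight $\mathcal{O}(|\tilde x|^2)$; propagating this through the commutation gives $\sum_{j+k\leq m}\mathcal{E}^{(T)}[T^j\Phi^k\tilde\phi](\tau_1)=\mathcal{O}(\delta_0^{-2m})$. The gap of a fixed power $\delta_0^{-3/20}$ between the two is exactly the separation of scales that drives the instability, just as in Lemma~\ref{lemma energy amplification with symmetry}.

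Next I would evolve $\tilde\phi$ backwards to $\Sigma_0$ --- legitimate since, by the time-reversal construction preceding \eqref{equation boundedness with loss of derivatives}, the time-reversed manifold has the same geometric structure --- and apply \eqref{equation decay with extra symmetry 1} to each $T^j\Phi^k\tilde\phi$. For $\tau_1$ chosen sufficiently large (exponentially large in a negative power of $\delta_0$, as quantified in Corollary~\ref{corollary bounds for the amplification time}) the local pieces $\sum_{j+k\leq m}\mathcal{E}^{(N)}_{\mathcal{U}}[T^j\Phi^k\tilde\phi](0)$ are $\lesssim 1$; away from $\mathcal{S}$ the non-degenerate energy is comparable to the conserved $T$-energy (by the computations of section~\ref{section energy momentum} and Proposition~\ref{proposition V energy current degenerate}), which is conserved and $\mathcal{O}(\delta_0^{-2m})$. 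Hence $\sum_{j+k\leq m}\mathcal{E}^{(N)}[T^j\Phi^k\tilde\phi](0)=\mathcal{O}(\delta_0^{-2m})$. Rescaling to normalise this to $1$ produces a solution $u$ with $\sum_{j+k\leq m}\mathcal{E}^{(N)}[T^j\Phi^k u](0)=1$ and, at $\tau_1$, $\sum_{j+k\leq m}\mathcal{E}^{(N)}[T^j\Phi^k u](\tau_1)\gtrsim\delta_0^{2m}\cdot\delta_0^{-2m-3/20}=\delta_0^{-3/20}=:C_1$, which is unbounded as $\delta_0\rightarrow 0$.

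Finally, for step (ii): at every point of $\mathcal{M}$ some element of $\mathrm{span}\{T,\Phi\}$ is timelike --- $T$ itself away from $\mathcal{S}$, and $\tilde T=\alpha T+\beta\Phi$ in a fixed neighbourhood of $\mathcal{S}$, with $\tilde T^Y$ bounded below on the compact set $\mathcal{S}$ --- so a standard elliptic estimate for $\Box_g u=0$, recovering the remaining derivatives from the timelike ones through the equation and patching the two regimes with a partition of unity, gives a numerical constant $C_2$ with $C_2^{-1}\sum_{j+k\leq m}\mathcal{E}^{(N)}[T^j\Phi^k u]\leq\sum_{j\leq m}\mathcal{E}^{(N)}[\partial^j u]\leq C_2\sum_{j+k\leq m}\mathcal{E}^{(N)}[T^j\Phi^k u]$ on every leaf $\Sigma_\tau$. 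Evaluating this at $\tau=0$ and $\tau=\tau_1$, rescaling by $C_2^{-1}$ and setting $C_3=C_1/C_2$, produces a solution with $\sum_{j\leq m}\mathcal{E}^{(N)}[\partial^j u](0)\leq 1$ and $\sum_{j\leq m}\mathcal{E}^{(N)}[\partial^j u](\tau_1)\geq C_3$; since $C_3\rightarrow\infty$ as $\delta_0\rightarrow 0$, taking $\delta_0$ small enough finishes the proof. The main obstacle --- modest, but where the real work lies --- is the bookkeeping in step (i): one must check that the quadratic degeneracy of the $T$-energy at $\mathcal{S}$ genuinely survives commutation with $T$ and $\Phi$, so that the conserved $m$-th order energy stays smaller than the non-degenerate one by a fixed power of $\delta_0$, and that the elliptic estimate of step (ii) holds with a constant uniform across the interface between the near-$\mathcal{S}$ region, where the timelike direction is $\tilde T$, and the asymptotic region, where it is $T$.
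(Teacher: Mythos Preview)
Your proposal is correct and follows essentially the same route as the paper's own argument: construct the wave-packet data at time $\tau_1$, observe the $\delta_0^{-3/20}$ gap between the mixed $T^j\Phi^k$ non-degenerate and conserved energies, evolve backwards using the sub-logarithmic decay to control the initial mixed energy, and then convert to the full $\partial^j$ energies via elliptic estimates exploiting that $\mathrm{span}\{T,\Phi\}$ contains a timelike direction everywhere. If anything, you are more explicit than the paper in two places --- you spell out the two-sided comparison $C_2^{-1}\sum\mathcal{E}^{(N)}[T^j\Phi^k u]\leq\sum\mathcal{E}^{(N)}[\partial^j u]\leq C_2\sum\mathcal{E}^{(N)}[T^j\Phi^k u]$ and flag the uniformity of the elliptic constant across the interface between the near-$\mathcal{S}$ and far regions --- whereas the paper simply invokes ``standard elliptic estimates'' and ``the same arguments as before''.
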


This shows that we cannot fully ``escape'' the instability by looking at higher order energies, at least in the case where the extra symmetry is present. To be explicit: suppose that we know, initially, that the ``n-th order'' energy of some wave $\phi$ is small. In other words, we know that $\partial \partial^n \phi$ is small in $L^2$. Then, although it may be the case that $\partial \partial^{n-1} \phi$ is small in $L^2$ at all points in the future, but we can never guarantee that  $\partial \partial^n \phi$ is also small (in the same sense) at all points in the future.

\newpage

\appendix

\section{Nonexistence of manifolds with evanescent ergosurfaces and a globally timelike Killing vector field}

A curious corollary of the results we have proved above allows us to rule out a certain kind of smooth Lorentzian manifold, possessing certain symmetries and a particular asymptotic structure. Note that the statement of this corollary makes no reference to the wave equation: it is a result purely in Lorentzian differential geometry. Nevertheless, our proof makes use of the wave equation!

\begin{corollary}[Nonexistence of Lorentzian manifolds with an evanescent ergosurface and a globally timelike Killing vector field]
	There does not exist a smooth, Lorentzian manifold which is either
	\begin{itemize}
		\item asymptotically flat and possesses  an evanescent ergosurface of the first kind (condition \ref{condition asymp flat})
		\item asymptotically Kaluza-Klein and possesses  an evanescent ergosurface of the second kind (condition \ref{condition asymp KK})
	\end{itemize}
and which also posesses a \emph{uniformly timelike} Killing vector field $\hat{T}$.
\end{corollary}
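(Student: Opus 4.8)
The plan is to argue by contradiction: suppose $(\mathcal{M},g)$ is as in one of the two bullet points and \emph{also} carries a uniformly timelike Killing vector field $\hat{T}$ (which, after replacing $\hat T$ by $-\hat T$ if necessary, we take to be future directed). Morally, such a $\hat T$ would make the spacetime ``too stationary'' to tolerate an evanescent ergosurface; the amusing feature of the proof is that, although the conclusion is purely geometric, the argument is run through the wave equation.

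First I would record the elementary consequences of having $\hat{T}$ available. Since $\hat{T}$ is Killing, $\mathcal{E}^{(\hat{T})}[\psi]$ is conserved, and since it is uniformly timelike, $\mathcal{E}^{(\hat{T})}[\psi](\tau)\sim\|\partial\psi\|_{L^2(\Sigma_\tau)}^2$ with constants uniform in $\tau$, by the computations of section \ref{section energy momentum}. As $\mathcal{E}^{(N)}[\psi](\tau)\sim\|\partial\psi\|_{L^2(\Sigma_\tau)}^2$ as well, it follows that $\mathcal{E}^{(N)}[\psi](\tau)\sim\mathcal{E}^{(N)}[\psi](0)$ for all $\tau\in\mathbb{R}$ and all solutions $\psi$; in particular assumption \ref{assumption boundedness} holds, and in fact the non-degenerate energy is \emph{essentially conserved}. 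Meanwhile $\mathcal{E}^{(T)}$ (respectively $\mathcal{E}^{(V)}$ in the Kaluza--Klein case, for $\mathcal{G}$-invariant fields) remains conserved and non-negative, and, since $T$ (respectively $V$) is uniformly timelike away from any neighbourhood of $\mathcal{S}$ — this is exactly proposition \ref{proposition V energy current degenerate} in the Kaluza--Klein case — we have $\mathcal{E}^{(N)}_{\mathcal{M}\setminus\mathcal{U}}[\psi](\tau)\sim\mathcal{E}^{(T)}_{\mathcal{M}\setminus\mathcal{U}}[\psi](\tau)$ for every $T$-invariant neighbourhood $\mathcal{U}$ of $\mathcal{S}$.

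The crucial ingredient is local energy \emph{decay} everywhere, including on $\mathcal{S}$. Here I would invoke the logarithmic decay machinery of \cite{Moschidis2015}: its proof requires a timelike Killing vector field to close the interior integrated-local-energy-decay and Carleman estimates, and $\hat{T}$ supplies one — globally, not merely near $\mathcal{S}$. This is precisely the situation of section \ref{section additional symmetry}, but strictly easier: since $\hat{T}$ is timelike on all of $\mathcal{M}$ one does not need the angular-frequency decomposition of that section, nor does one incur its exponential-in-$n$ degeneracy, and the Kaluza--Klein modifications of section \ref{section estimates in KK spaces} apply to $\mathcal{G}$-invariant data just as before. The upshot I would extract is: for every solution $\psi$ arising from smooth, compactly supported (and $\mathcal{G}$-invariant, in the Kaluza--Klein case) data, and every precompact $\mathcal{U}_0\subset\Sigma_0$ — including those with $\mathcal{U}_0\cap\mathcal{S}\neq\emptyset$ — one has $\mathcal{E}^{(N)}_{\mathcal{U}}[\psi](\tau)\to 0$ as $\tau\to\infty$. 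The contradiction then comes from the data-construction of subsection \ref{subsection constructing the initial data}: fixing a $T$-invariant neighbourhood $\mathcal{U}$ of $\mathcal{S}$, claim \ref{claim initial data 1} (with $\tau_0=0$) yields, for any $\delta>0$, a solution $\Psi=Tu$ from smooth, compactly supported data with $\mathcal{E}^{(N)}[\Psi](0)\geq\delta^{-1}$ and $\mathcal{E}^{(T)}[\Psi](0)=1$ (or $\mathcal{E}^{(V)}[\Psi](0)=1$). By the essential conservation above $\mathcal{E}^{(N)}[\Psi](\tau)\gtrsim\delta^{-1}$ for all $\tau\geq 0$, while $\mathcal{E}^{(T)}[\Psi](\tau)=1$ gives $\mathcal{E}^{(N)}_{\mathcal{M}\setminus\mathcal{U}}[\Psi](\tau)\lesssim 1$; hence $\mathcal{E}^{(N)}_{\mathcal{U}}[\Psi](\tau)\geq c\,\delta^{-1}-C$ for all $\tau\geq0$. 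Choosing $\delta$ small enough that $c\,\delta^{-1}-C\geq\tfrac12 c\,\delta^{-1}>0$, the local non-degenerate energy near $\mathcal{S}$ stays bounded below by a fixed positive constant for all time, contradicting the decay just established. Hence no such manifold exists.

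The main obstacle is the decay ingredient: one must check carefully that $\hat{T}$ — which is globally timelike but need not coincide with, nor commute with, the asymptotically timelike Killing field $T$ (which is forced to be \emph{null} on $\mathcal{S}$) — can genuinely play the role of the ``timelike Killing field'' in the estimates of \cite{Moschidis2015}, in particular in the low-frequency multiplier estimates and the Carleman estimates localised near $\mathcal{S}$, and that the $p$-weighted estimates at infinity are unaffected (they use only the asymptotic structure, which is untouched). I expect this to be routine given the analysis already carried out in section \ref{section additional symmetry}, where the only-locally-timelike combination $\alpha T+\beta\Phi$ was handled; the global timelikeness of $\hat{T}$ removes precisely the degenerations that complicated that argument, so the adaptation should be a simplification rather than a new difficulty.
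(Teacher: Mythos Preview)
Your proof is correct but takes a somewhat different route from the paper's. The paper's argument is much shorter: it simply invokes lemma~\ref{lemma energy amplification with symmetry} to produce, for any $C>0$, a solution $\phi_C$ with $\mathcal{E}^{(N)}[\phi_C](\tau_C)\geq C\,\mathcal{E}^{(N)}[\phi_C](0)$, and then observes that $\hat{T}$ being Killing and uniformly timelike gives $\mathcal{E}^{(N)}[\phi_C](\tau)\sim\mathcal{E}^{(\hat T)}[\phi_C](\tau)=\mathcal{E}^{(\hat T)}[\phi_C](0)\sim\mathcal{E}^{(N)}[\phi_C](0)$, so the ratio is bounded by a fixed geometric constant --- contradiction. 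In other words, the paper contradicts \emph{amplification} with \emph{conservation}, whereas you contradict \emph{decay} with \emph{non-decay} (the latter obtained from conservation plus the initial-data construction of claim~\ref{claim initial data 1}).

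These are essentially dual reorganisations of the same content: lemma~\ref{lemma energy amplification with symmetry} is itself proved via the decay statement you appeal to, and your non-decay argument is precisely the mechanism that underlies the amplification in that lemma. What you gain by unfolding the argument is care about hypotheses: you correctly flag that $\hat{T}$ need not be axial or commute with $T$, so the angular-frequency decomposition of section~\ref{section additional symmetry} is unavailable, and one must instead run the interior Carleman estimates of \cite{Moschidis2015} directly with the globally timelike $\hat{T}$ --- a genuine simplification of that section's machinery. The paper's proof sweeps this under the blanket invocation of lemma~\ref{lemma energy amplification with symmetry}. Your version is therefore longer but more self-contained; the paper's is shorter but leans on the reader to see that the proof of lemma~\ref{lemma energy amplification with symmetry} only becomes easier when the extra Killing field is globally timelike.
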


\begin{proof}
	The proof is by contradiction. Suppose that such a manifold did exist. Let $\phi_C$ be the solution to the wave equation constructed in lemma \ref{lemma energy amplification with symmetry}. Then, since $\hat{T}$ is a Killing vector field, the associated energy is conserved, i.e.\ for all $\tau$,
	\begin{equation*}
		\mathcal{E}^{(\hat{T})}[\phi_C](\tau) = \mathcal{E}^{(\hat{T})}[\phi_C](0)
	\end{equation*}
	
	At the same time, by construction we have
	\begin{equation*}
		\mathcal{E}^{(N)}[\phi_C](\tau_C) = C \mathcal{E}^{(N)}[\phi_C](0)
	\end{equation*}
	
	But, since both $N$ and $\hat{T}$ are uniformly timelike, at all times $\tau$ and for any function $\phi$ we have
	\begin{equation*}
		\hat{c}\mathcal{E}^{(N)}[\phi](\tau) \leq \mathcal{E}^{(\hat{T})}[\phi](\tau) \leq \hat{C} \mathcal{E}^{(N)}[\phi_C](\tau)
	\end{equation*}
	for some constants $\hat{c}$ and $\hat{C}$.
	
	Combining these, we have
	\begin{equation*}
	\begin{split}
		\mathcal{E}^{(\hat{T})}[\phi_C](0)
		&=
		\mathcal{E}^{(\hat{T})}[\phi_C](\tau_C)
		\\
		&\geq 
		\hat{c} \mathcal{E}^{(N)}[\phi_C](\tau_C)
		\\
		&\geq 
		\hat{c} C \mathcal{E}^{(N)}[\phi_C](0)
		\\
		&\geq 
		\hat{c} \hat{C}^{-1} C \mathcal{E}^{(\hat{T})}[\phi_C](0)
	\end{split}
	\end{equation*}
	
	The constants $\hat{C}$ and $\hat{c}$ are independent of the solution $u_C$. Hence, we can choose $C > (\hat{c})^{-1} \hat{C}$, giving the required contradiction.
	
\end{proof}

We note here that an analagous proposition holds, with ``evanescent ergosurface'' replaced by ``ergoregion''. This follows immediately from the result of \cite{Moschidis2015a}.

\sloppy
\printbibliography

\end{document}